\newif\iflong
\newif\ifshort
\title{Parameterised Partially-Predrawn Crossing Number}
\author{Thekla Hamm}{Algorithms and Complexity Group, TU Wien, Vienna, Austria}{thamm@ac.tuwien.ac.at}{https://orcid.org/0000-0002-4595-9982}{Supported by the Austrian Science Fund (projects P31336, Y1329, and \mbox{W1255-N23}).}
\author{Petr Hlin\v{e}n\'y}{Faculty of Informatics, Masaryk University, Brno, Czech Republic}{hlineny@fi.muni.cz}{https://orcid.org/0000-0003-2125-1514}
	{Supported by the {Czech Science Foundation}, project no.~{20-04567S}.}
\authorrunning{T.~Hamm and P.~Hlin\v{e}n\'y} %TODO mandatory. First: Use abbreviated first/middle names. Second (only in severe cases): Use first author plus 'et al.'
\keywords{Crossing Number, Drawing Extension, Partial Planarity, Parameterised Complexity} %TODO mandatory; please add comma-separated list of keywords
\def\todo#1{\relax}
\def\apxmark{{\bf\hspace*{-1ex}*~}}
\newtheorem{case}{Case}
\def\ppdg{partially drawn\xspace} % (as \ppdg (sub)graph...)
\def\ppd{partially predrawn\xspace} % (for use when speaking about properties, not graphs)
\def\PPD{Partially Predrawn\xspace} % (capitals for \sc... problem names)
\def\ppdcr{\ppd crossing number\xspace}
\def\ppdcrc#1{\ppd #1-planar crossing number\xspace}
\def\PPDCR{\PPD Crossing Number\xspace}
\def\PPDCRc#1{\PPD #1-Planar Crossing Number\xspace}
\def\pdss{skeleton}\def\pdsg{predrawn \pdss\xspace} % (name for the predrawn graph H)
\def\crconf#1{#1-crossing conforming\xspace}  %drawings that are conform with predrawn part, `uncrossable' edges and have at most #1-crossings.
\def\conf{\(k\)-crossing conforming\xspace}  %drawings that are conform with predrawn part, `uncrossable' edges and have at most \(k\)-crossings.
\def\Hep{\(H\)-edge path\xspace}  % a path in the planarisation of \cH that corresponds to part of a single edge in \cH
\def\ca#1{\mathcal{#1}}
\def\cf#1{{\EuScript#1}}
\def\vect#1{\boldsymbol{#1}}
\def\crd{\operatorname{cr}} % (the crossing number of a drawing)
\def\crg{\operatorname{cr}} % (the crossing number of a graph)
\def\crgpd{\operatorname{pd-cr}}
\def\prebox#1{\operatorname{\mbox{\slshape#1}}}
\begin{document}	
	\maketitle

\begin{abstract}
	Inspired by the increasingly popular research on extending partial graph drawings, we propose a new perspective on the traditional and arguably most important geometric graph parameter, the \emph{crossing number}.
	Specifically, we define the \emph{\ppdcr} to be the smallest number of crossings in any drawing of a graph, part of which is prescribed on the input (not counting the prescribed crossings).
	Our main result -- an \FPT-algorithm to compute the \ppdcr\ -- combines advanced ideas from research on the classical crossing number and so called \emph{partial planarity} in a very natural but intricate way.
	Not only do our techniques generalise the known \FPT-algorithm by Grohe for computing the standard crossing number, they also allow us to substantially improve a number of recent parameterised results for various drawing extension problems.
\end{abstract}
	
	\section{Introduction}
	Determining the crossing number, i.e.\ the smallest possible number of pairwise transverse intersections 
	(called {\em crossings}) of edges in any drawing, of a graph is among the most important problems in discrete computational geometry.
	As such its general computational complexity is well-researched:
	Probably most famously, it is known that graphs with crossing number \(0\), i.e.\ planar graphs, can be recognised in polynomial time~\cite{Wagner37,HopcroftT74,WeiKuanW99}.
	Generally, computing the crossing number of a graph is \NP-hard, even in very restricted settings~\cite{GareyJ83,Hlineny06,DBLP:journals/algorithmica/PelsmajerSS11,CabelloM13}, and also \APX-hard~\cite{Cabello13}.
	However there is a fixed-parameter algorithm for the problem, and even one that can compute a drawing of a graph with at most \(k\) crossings in time in \(\mathcal{O}(f(k)n)\) or decide that its crossing number is larger than \(k\)~\cite{Grohe04,KawarabayashiR07}.	
	
	More recently, so called \emph{graph drawing extension} problems have received increased attention.
	Instead of being given an entirely abstract graph as an input, here the input is a \emph{\ppdg graph} $\ca P=(G,\ca{H})$, meaning that a subgraph $H$ of the input graph $G$ is given with a fixed drawing $\ca H$ which must not be changed in the solution.
	This is motivated by immediate applications in network visualisation~\cite{MisueELS95}, as well as a more general line of research in which important computational problems are extended to the setting in which parts of the solution are prescribed which can lead to useful insights for dynamic or divide-and-conquer type algorithms and heuristics~\cite{CaselFMMS21,EibenGHK21}.
	In this context it is natural to define the \emph{\ppdcr} as the smallest number of pairwise crossings of edges in any drawing which coincides with (i.e., extends) the given fixed drawing of the \pdsg, minus the number of `unavoidable' crossings already contained in the fixed drawing of the \pdss.
	We name this problem \textsc{\PPDCR}.
	
	Of course, the problem of computing the \ppdcr is more general than the one of computing the classical crossing number (which is captured by the former by simply letting the \pdsg be empty), and thus the known hardness results for computing the classical crossing number carry over.
	To the best of our knowledge, the \ppdcr problem has so far not been explicitly studied in literature, although, there are papers which study \emph{partially embedded planarity}, i.e. the property of having \ppdcr \(0\), and variants thereof.
	In particular, similarly to ordinary planarity, \ppdg graphs extendable to planar drawings can be recognised in polynomial time~\cite{AngeliniDFJVKPR15}, 
	and in analogy to the Kuratowski theorem, there is also a neat list of forbidden ``\ppdg minors'' (Figure~\ref{fig:PEG-obstructions}) which characterise \ppdg graphs extendable to planar drawings~\cite{JelinekKR13}.
	
	If one allows a non-zero number of crossings, the only algorithmic results on extending \ppdg graphs with constrained crossings
	we are aware of are those for scenarios with a few edges or vertices outside of the \pdsg or/and with a small number of crossings for each edge.
	We give a brief list of these algorithmic results:
	\begin{itemize}
		\item An algorithm to determine the exact \ppdcr of a \ppdg graph in \FPT\ time parameterised by the number of edges which are not fixed by the \pdsg~\cite{DBLP:conf/compgeom/ChimaniH16} (the ``rigid'' case in the paper).
		\item An algorithm to determine whether there is a $1$-planar drawing (or more generally a drawing in which each edge outside of the \pdsg has at most \(c\) crossings) which coincides with the given partial drawing in \FPT\ time parameterised by (\(c\) and) the number of edges which are not fixed by the \pdsg~\cite{EibenGHKN20a,GanianHKPV21}.
		\item An algorithm to determine whether there is a $1$-planar drawing which coincides with the given partial drawing in \XP\ time parameterised by the vertex cover size of the edges which are not fixed by the \pdsg~\cite{EibenGHKN20b}.
		\item An algorithm to determine whether there is a simple drawing in which each edge outside of the \pdsg has at most \(c\) crossings which coincides with the given partial drawing in \FPT\ time parameterised by \(c\) and the number the edges which are not fixed by the \pdsg~\cite{GanianHKPV21}.
	\end{itemize}
	We remark that all these parameterised algorithms require the given \pdsg to be connected, and the last three algorithms are easily adapted to output drawings minimising the number of crossings under the requirement of the respective properties.

	\paragraph*{Contributions}
	The foundation of our main contribution is a fixed-parameter algorithm for an exact computation of the \ppdcr \(k\) of a given \ppdg graph.%
	\begin{restatable}{theorem}{thmmain} \label{thm:main}
		\textsc{\PPDCR} is in \FPT\ when parameterised by the solution value (i.e., by the number of crossings which are not predrawn).
	\end{restatable}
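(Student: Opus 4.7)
The plan is to adapt Grohe's FPT algorithm for the classical crossing number to the partially predrawn setting by combining it with the machinery developed for extending partial planar drawings. The central new object is the \emph{planarized skeleton} $H^+$ obtained from $\ca H$ by replacing every predrawn crossing with a degree-four dummy vertex; this is a plane graph whose faces are exactly the cells of $\ca H$. In any drawing extending $\ca H$, each non-predrawn edge is routed as a curve through a sequence of these cells, and a new crossing occurs precisely when such a curve crosses a boundary edge of $H^+$ or another non-predrawn edge. Since at most $k$ new crossings are allowed, every non-predrawn edge is crossed at most $2k$ times and hence visits at most $2k+1$ cells.

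Following Grohe, I would dichotomise on the treewidth of a suitable auxiliary graph built from $G$ overlaid with $H^+$. In the bounded-treewidth regime, run a dynamic program over a tree decomposition: at each bag one stores a constant-size \emph{routing signature} recording, for every partially drawn non-predrawn edge crossing the separator, the sequence of cells entered so far, the crossings already incurred, and the cyclic arrangement at the interface vertices; since both the interface size and the crossing budget are bounded in terms of $k$, the number of signatures is at most some $f(k)$, yielding an $f(k)\cdot n^{O(1)}$ algorithm. An alternative formulation is to encode the existence of a \conf extension as an MSO-sentence on an appropriate relational structure and invoke Courcelle's theorem. In the unbounded-treewidth regime, the excluded grid theorem produces a large wall $W$ in $G$; after sacrificing $O(k+|V(H)|)$ of its bricks one isolates a large sub-wall $W'$ lying inside a single cell of $\ca H$ whose interior is untouched by new crossings, and on such $W'$ Grohe's \emph{irrelevant vertex} reduction applies essentially verbatim to strictly decrease $|V(G)|$ while preserving $\crgpd$. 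Iterating brings the treewidth below $f(k)$ and invokes the dynamic program.

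The main obstacle I anticipate is verifying that this reduction really preserves $\crgpd$ in the presence of the predrawn skeleton. Deleting an ``irrelevant'' vertex from $W'$ could in principle open up a topological shortcut that lets some \emph{other} non-predrawn edge re-route through $\ca H$ at lower crossing cost, invalidating the reduction. Ruling this out requires a sufficiently thick buffer of nested clean cycles inside $W'$ separating the deleted vertex both from $V(H)$ and from all predrawn crossings, and matching this geometric argument with a careful combinatorial accounting of how the cells of $\ca H$ are traversed by the remaining non-predrawn edges. A further subtlety is that $|V(H)|$ is not bounded by $k$, so the wall size delivered by the grid theorem must be calibrated to depend on $|V(H)|$ only through an additive term in the input size rather than through the final FPT parameter. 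Achieving a genuine fusion of the classical crossing-number techniques with the partial-planarity machinery, rather than invoking either as a black box, is where I expect the bulk of the technical work to lie.
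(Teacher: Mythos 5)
Your high-level skeleton (Grohe-style dichotomy: reduce until treewidth is bounded in $k$, then solve by DP or Courcelle) matches the paper, but both phases are missing the ideas that make them work in the \ppd setting. In Phase~I, applying Grohe's flat-wall/contraction reduction ``essentially verbatim'' to a clean sub-wall, protected by a buffer of nested uncrossed cycles, fails at exactly the point where the predrawn setting differs from the classical one: the obstruction is not only re-routing shortcuts but \emph{orientation}. A planar piece $I$ surrounded by a clean cycle $C$ has two mirror-image embeddings, and which one is compatible with $\ca H$ can be forced by predrawn components attached inside or around the disc bounded by $C$; after contracting (or deleting) $I$ this information is lost, and the reduced instance can become a yes-instance while the original is a no-instance — \Cref{fig:incompatible} is precisely such an example, where both $H\cup I\cup C$ and the contracted instance extend with $0$ crossings but the original does not. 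Nested clean cycles do not neutralise this: \Cref{pro:baddualdiam} exhibits $k$-crossing-critical \ppdg graphs whose optimal drawings contain arbitrarily many nested clean cycles, so ``depth inside the wall'' is not a licence to remove anything. The paper has to introduce $(\ca H,I)$-flippability, test it via the partially-embedded-planarity algorithm, and in the problematic case augment $\ca H'$ with a triangle gadget $T_I$ to re-impose the forced orientation; none of this is present in your sketch. Your own calibration worry is also unresolved and fatal as stated: if the wall size needed grows with $|V(H)|$, the residual treewidth is no longer bounded by a function of $k$ alone, and the final step ceases to be \FPT.

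In Phase~II both of your options are underspecified at the decisive point. For the dynamic programme, a ``routing signature'' recording the sequence of cells of $\ca H$ entered so far is not of size $f(k)$: the number of cells is unbounded, and compatibility of partial routings across a separator is a global topological condition on the embedding (including which components of a disconnected $\ca H$ lie in which faces of which others, cf.\ \Cref{fig:disconnequiv}), not a bag-local one. For the MSO route, the entire difficulty is what the ``appropriate relational structure'' is: Kuratowski obstructions do not capture the \pdsg, so one needs the obstruction characterisation of partially embedded planarity (a finite list plus the infinite family of alternating chains, \Cref{thm:PEG-relaxed-obstructions}) together with a combinatorial, MSO-readable encoding of $\ca H$ — the paper's $3$-connected framing — whose treewidth must itself be shown bounded (\Cref{lem:tw}), plus the subdivision trick for guessing the $k$ crossings. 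Without these ingredients the proposal is a plausible programme but not yet a proof.
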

	We employ a technique similar to the approach showing fixed-parameter tractability of classical crossing number devised by Grohe~\cite{Grohe04}.
	This means we proceed in two phases:
	\begin{enumerate}[I.]
		\item We iteratively reduce the input \ppdg graph $\ca P$ until we cannot find a large flat grid in it, and so we bound its treewidth by a function of \(k\), or decide that the \ppdcr of $\ca P$ is larger than \(k\).
		Importantly, each reduction step is guaranteed to preserve the solution value (unless it is~$>k$).
		\item We devise an MSO\(_2\)-encoding for the property that any \ppdg graph has the \ppdcr at most~\(k\).
		The key idea is to encode the \pdsg of the input in a $3$-connected planar ``frame'' which is added to the input \ppdg graph.
		Using the bounded treewidth of the involved graph with the frame, we then apply Courcelle's theorem \cite{Courcelle90} in order to decide this property.
	\end{enumerate}
	Note that the second step is an interesting result in its own right:
	\begin{restatable}{lemma}{MSOphaseIIi} \label{lem:phaseIIi}
		For every $k\geq0$ there is an MSO\(_2\)-formula $\psi_k$ such that the following holds.
		Given a \ppdg graph $\ca P$, one can in polynomial time construct a graph $G'$ such that $\psi_k$ is true on $G'$ if and only if the \ppdcr of $\ca P$ is at most~\(k\).
		This claim holds also if some edges of $\ca P$ are marked as `uncrossable' and we compute the crossing number over such drawings extending $\ca P$ that do not have crossings on the `uncrossable' edges.
	\end{restatable}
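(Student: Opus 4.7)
The plan is to construct $G'$ in three layers and to design $\psi_k$ so that selecting a suitable subset of $G'$ encodes a drawing of $\ca P$ with at most $k$ non-predrawn crossings. The three layers are: (a) a planar ``frame'' fixing the embedding of $H$, (b) subdivisions of non-skeleton edges providing potential crossing locations, and (c) auxiliary pairing gadgets that, when selected by $\psi_k$, represent the actual crossings.

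To encode the fixed drawing $\ca H$ combinatorially, I would attach a planar frame $F$ to $H$ so that $H^+ := H \cup F$ is a $3$-connected planar graph whose unique (up to reflection) planar embedding, guaranteed by Whitney's theorem, exactly realises $\ca H$. Concretely, for each face $f$ of $\ca H$ one adds a new vertex $v_f$ placed inside $f$, connects it to the vertices and edge-subdivisions along the boundary of $f$, and supplies further chords so that $H^+$ is triangulated and $3$-connected; any planar embedding of any super-graph of $H^+$ is then forced to agree with $\ca H$.

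For the second and third layers, subdivide each edge $e \in E(G) \setminus E(H)$ by $2k$ degree-$2$ vertices $s^e_1, \ldots, s^e_{2k}$ (the \emph{slots}), and for every unordered pair of distinct non-skeleton edges $e,e'$ and every pair of indices $i,j$ install an auxiliary \emph{pairing gadget}---most simply, a single coloured edge $\pi^{e,e'}_{i,j}$ between $s^e_i$ and $s^{e'}_j$, marked via a unary predicate. The resulting $G'$ has polynomial size in $n$ for fixed $k$. The formula $\psi_k$ then existentially quantifies a set $M$ of pairing edges and asserts: (i) $|M| \le k$, expressible since $k$ is a constant; (ii) $M$ is a matching on the slots; (iii) no endpoint of any edge in $M$ lies on an uncrossable edge, encoded by a predicate inherited from the original edges; and (iv) the graph $G^\star$ obtained from $(H^+ \cup G) \cup M$ by suppressing all slots unsaturated by $M$ is planar, expressible via Courcelle's MSO$_2$-definability of planarity. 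The $3$-connectedness of $H^+$ forces any planar embedding of $G^\star$ to extend $\ca H$, while each activated pairing edge corresponds to exactly one crossing between two segments of non-skeleton edges.

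The main obstacle is that MSO$_2$ cannot literally contract or identify vertices, nor, by itself, ensure that at each activated pairing vertex the cyclic rotation of the four incident edge-segments is alternating---the condition for a \emph{transverse} crossing rather than a tangential touch. I would handle this by replacing the single pairing edge $\pi^{e,e'}_{i,j}$ with a small planar gadget that, when included in $M$, topologically forces the corresponding two segments to alternate around a newly created degree-$4$ vertex, and, symmetrically, disappears cleanly from $G^\star$ when left out of $M$. Verifying the bijective correspondence between drawings of $\ca P$ with exactly $c \le k$ non-predrawn crossings and valid matchings $M$ with $|M| = c$---repeatedly invoking the $3$-connectedness of $H^+$ to rule out rogue embeddings---is the crux of the correctness proof.
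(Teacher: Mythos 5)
There is a genuine gap, and it sits exactly at the point your sketch treats as routine. In step (iv) you triangulate every face of $\mathcal{H}$ with new frame vertices and chords and then require the \emph{whole} graph $G^\star=(H^+\cup G)\cup M$ to be planar. But the frame edges occupy precisely the region --- the faces of $\mathcal{H}$ --- in which the edges of $E(G)\setminus E(H)$ must be drawn, so a legitimate drawing of $G$ extending $\mathcal{H}$ with at most $k$ crossings does \emph{not} yield a planar embedding of $G^\star$: the non-skeleton edges would have to cross the triangulation chords, and any vertex of $G$ outside $H$ would be trapped inside a triangular face of $H^+$. Thus yes-instances are mapped to no-instances, and no choice of a $3$-connected frame placed inside the faces can repair this, because any such frame blocks exactly the free region the rest of $G$ needs. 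This is the central difficulty the paper's proof is organised around: partially predrawn planarity of the guessed planarisation cannot be reduced to ordinary planarity of an augmented graph. The paper's framing (Definition~\ref{def:framing}) only thickens the skeleton itself (framing triplets and framing cycles) and is \emph{never} required to be planar together with $G$; it serves purely as a combinatorial carrier of $\mathcal{H}$ on which the MSO$_2$-formula tests containment of the Jel\'{\i}nek--Kratochv\'{\i}l--Rutter obstructions --- the finite relaxed family $\cf K$ via framing topological minors, plus the infinite family of alternating chains (Theorem~\ref{thm:PEG-relaxed-obstructions}, Lemmas~\ref{lem:MSOforobstruction} and~\ref{lem:MSOalternating}). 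Without some substitute for this obstruction-based test, your formula has no correct way to assert ``the guessed planarisation admits a planar drawing extending $\mathcal{H}$''.

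A second, independent gap: you place slots only on edges of $E(G)\setminus E(H)$ and install pairing gadgets only between pairs of non-skeleton edges, so crossings between a new edge and a predrawn edge cannot be represented at all --- yet these count towards the \ppdcr (only crossings among two predrawn edges are free). A predrawn cycle with one endpoint of a new edge drawn inside it and the other outside already forces such a crossing. The paper's proof of Lemma~\ref{lem:phaseII} therefore subdivides \emph{every} edge not marked `uncrossable', including the edges of the planarised skeleton, and only forbids guessing a pair whose two auxiliary vertices both lie on skeleton edges. (Two smaller remarks: the transversal-versus-tangential worry you flag is immaterial for deciding ``at most $k$'' crossings, since a non-alternating rotation only produces fewer crossings; and the identification of a guessed pair needs no gadget --- the paper handles it by syntactically reinterpreting the equality and incidence predicates inside $\psi_k$.)
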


	While our high-level approach is similar to Grohe's~\cite{Grohe04}, in each phase we are faced with some caveats, on which we elaborate in the respective sections, due to the fact that we must respect the given \pdsg and that we have to observe also the treewidth of the derived graph which encodes the \pdsg, i.e.~of $G'$ from Lemma~\ref{lem:phaseIIi}.

	In this regard, we also give a concrete example (see Proposition~\ref{pro:baddualdiam}) of a fundamentally different behaviour of the \ppdcr compared to the classical one
	(which can partly explain the difficulties we face in Theorem~\ref{thm:main}, compared to~\cite{Grohe04}).
	In a nutshell, we show that for fixed $k$ a \ppdg graph can have arbitrarily {many nested cycles}
	which are ``critical'' for having crossing number~$>k$.
	
	Based on the proof of \Cref{thm:main} we are also able to give an improved algorithm to determine whether there is a drawing in which each edge outside of the \pdsg has at most \(c\) crossings which coincides with the given partial drawing.
	Specifically we can show the following theorem, where the \emph{\ppdcrc{\(c\)}} of a \ppdg graph $\ca P$ is as the \ppdcr above while restricted to only drawings of~$\ca P$ in which each edge outside of the \pdsg has at most~\(c\)~crossings.
	\begin{restatable}{theorem}{cPlanar}\label{thm:cplanar-improve}
		\textsc{\PPDCRc{\(c\)}} is in \FPT\ when parameterised by the solution value (i.e., by the number of crossings which are not predrawn).
	\end{restatable}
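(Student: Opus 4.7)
The plan is to adapt the proof of Theorem~\ref{thm:main} so that the per-edge crossing budget is tracked alongside the global one. First, observe that the $c$-planarity restriction is only meaningful when $c<k$: in any drawing with at most $k$ non-predrawn crossings, each non-predrawn edge can participate in at most $k$ crossings, so if $c\ge k$ the extra constraint is vacuous and \textsc{\PPDCRc{c}} reduces immediately to \textsc{\PPDCR}, handled by Theorem~\ref{thm:main}. Hence we may assume $c<k$, so that $c$ is bounded by the parameter.

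In Phase~II, I would strengthen the formula $\psi_k$ from Lemma~\ref{lem:phaseIIi} to a formula $\psi_k^c$ which, in addition to its current conditions, expresses that every edge outside of the \pdsg is crossed by at most $c$ other edges in the encoded drawing. Because $\psi_k$ already existentially encodes the (at most $k$) non-predrawn crossings and the drawing structure in the frame graph $G'$, adding a universally quantified predicate over edges together with an existential enumeration of at most $c\le k$ crossings keeps the formula size bounded by a function of the parameter only. Applying Courcelle's theorem to $\psi_k^c$ on $G'$ then yields fixed-parameter tractability for instances of bounded treewidth.

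For Phase~I, I would re-use the same sequence of reductions that shrink $\ca P$ to a \ppdg graph of treewidth bounded by a function of $k$ (or else conclude \ppdcr $>k$, in which case also \ppdcrc{c} $>k$). The task is to verify that every such reduction preserves the \ppdcrc{c} value, which boils down to checking that the drawing-correspondence between original and reduced instances exhibited in the proof of Theorem~\ref{thm:main} does not artificially inflate the number of crossings on any single surviving edge. Where such preservation is not automatic -- typically when a reduction ``bypasses'' a substructure and reroutes a forced crossing -- I would employ the ``uncrossable'' extension of Lemma~\ref{lem:phaseIIi} to freeze the affected edges, so that remaining crossings are forced onto edges whose per-edge budget is not yet saturated.

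The main obstacle is the per-reduction verification in Phase~I: for ordinary \ppdcr only the total crossing count has to be preserved, whereas for \ppdcrc{c} we must additionally guarantee that no optimal $c$-planar drawing is lost through reduction, and that none is spuriously created. I expect that the reductions of Theorem~\ref{thm:main} -- being local and drawing-respecting -- essentially preserve per-edge crossing counts, and that any residual slack can be absorbed by a careful placement of ``uncrossable'' markings in the sense of Lemma~\ref{lem:phaseIIi}; performing this case analysis, in particular for the large flat-grid reduction, will be the most delicate step of the argument.
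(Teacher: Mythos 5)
Your proposal is correct and follows essentially the same route as the paper: reuse the whole pipeline of Theorem~\ref{thm:main} (Phase~I unchanged) and enforce the per-edge bound inside the MSO$_2$-encoding of Lemma~\ref{lem:phaseIIi}, exploiting that the bound only matters when $c<k$. The paper implements this cap more directly -- each edge of $E(G)\setminus E(H)$ receives only $\min\{c,k\}$ auxiliary subdivision vertices, so exceeding $c$ crossings on it is structurally impossible -- and the Phase~I preservation you single out as delicate is in fact immediate, since the reductions of Theorem~\ref{thm:reducegrah} only replace crossing-free parts of the drawings and hence leave the per-edge crossing counts of all non-predrawn edges untouched.
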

	
	Compared to the algorithm given in \cite{EibenGHKN20a}, Theorem~\ref{thm:cplanar-improve} presents an additional improvement in two important aspects%
	\ifshort. Not only can our algorithm solve the \(c\)-planar drawing extension problem parameterised by the number of new crossings (a less restrictive parameter than the combination of \(c\) and \(|E(G) \setminus E(H)|\)), but we can also handle disconnected initial drawings.\fi\iflong:
	\begin{itemize}
		\item The \ppdcrc{\(c\)} is upper-bounded by the product of~\(c\) and the number of edges which are not fixed by the partial drawing.
		Conversely, the number of edges which are not fixed by the partial drawing can be arbitrary larger than the \ppdcrc{\(c\)}.
		This means our new algorithm is less restrictive in terms of the parameterisation.
		\item Our new algorithm does not require connectivity of the partial drawing.
	\end{itemize}\fi
	
	We also can combine our techniques with structural insights from \cite{GanianHKPV21} to drop the connectivity requirement on the input in the setting that we want to determine the \ppdcrc{\(c\)} restricted to simple drawings:
	\begin{restatable}{theorem}{Simple}\label{thm:Simple}
		Given a \ppdg graph, one can in \FPT\ time parameterised by \(c\) and the number of edges not contained in the \pdsg,
		decide the minimum number of crossings in a {\em simple} drawing which coincides with the given simple partial drawing and in which each edge outside of the \pdsg has at most \(c\) crossings.
	\end{restatable}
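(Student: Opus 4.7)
We extend the framework of Theorem~\ref{thm:cplanar-improve} with two additional ingredients: an MSO\(_2\) encoding of \emph{simplicity} of the extended drawing, and an enumeration scheme that handles a disconnected \pdsg. Write $k_0 := |E(G) \setminus E(H)|$; since each non-predrawn edge has at most $c$ crossings in any admissible drawing, the total number of new crossings is at most $ck_0 =: k$. Hence Phase~I of the proof of Theorem~\ref{thm:cplanar-improve} either shows that $\ca P$ admits no extension with $\le k$ new crossings (and therefore no admissible simple $c$-planar extension) or reduces $\ca P$ to an equivalent instance with the frame graph $G'$ of Lemma~\ref{lem:phaseIIi} of treewidth bounded in~$(c,k_0)$.

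\textbf{Encoding simplicity and per-edge budgets.} Recall that $\psi_k$ guesses a planarization $G^+$, where each crossing is a dummy degree-$4$ vertex and each original edge becomes a path of segments through dummies. Simplicity is expressible as three MSO\(_2\) conditions on $G^+$: no dummy has all four incident segments from a single original edge (no self-crossing); no dummy is incident with segments of two original edges sharing an endpoint in $V(G)$ (no adjacent-edge crossing); and no two distinct dummies are incident with the same pair of original edges (no double-crossing). The condition that each non-predrawn original edge contains at most $c$ dummies on its planarization path is encoded by existentially guessing, for that edge, a set of at most $c$ dummy vertices and requiring all dummies along its path to lie in this set. Adjoining these clauses to $\psi_k$ yields a formula $\psi^{\mathrm{s}}_{k,c}$ that is satisfiable on $G'$ exactly when a simple $c$-planar extension with at most $k$ new crossings exists; Courcelle's theorem decides this in FPT time on the bounded-treewidth frame.

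\textbf{Dropping the connectivity assumption.} The remaining obstacle is that Lemma~\ref{lem:phaseIIi}'s frame construction implicitly relies on $H$ being connected to canonically describe faces of $\ca H$. We handle this using structural arguments in the spirit of~\cite{GanianHKPV21}: any component of $H$ that is neither incident to nor crossed by any non-predrawn edge can be deleted without changing the answer, so only $O(ck_0)$ components of $H$ are relevant. We then enumerate all topological configurations describing, for every non-predrawn edge, the ordered sequence of predrawn edges it crosses together with the faces of $\ca H$ traversed by its segments; the number of such configurations is bounded by some $f(c,k_0)$. For each configuration, we install \emph{uncrossable} virtual paths between the relevant components in the embedding prescribed by the configuration, obtaining a connected augmented skeleton to which the uncrossable variant of Lemma~\ref{lem:phaseIIi} applies. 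Running model-checking of $\psi^{\mathrm{s}}_{k,c}$ on every augmented instance and iterating $k = 0, 1, \ldots, ck_0$ returns the minimum. The main difficulty is proving that this enumeration is both \emph{complete} -- every simple $c$-planar extension arises from some enumerated configuration -- and \emph{sound} -- installing uncrossable connectors compatibly with a configuration preserves all drawings matching it; this is precisely where the facial-routing insights of~\cite{GanianHKPV21} must be lifted from the connected to the disconnected setting.
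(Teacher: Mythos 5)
There is a genuine gap, and it is exactly the issue the paper singles out as the crux of this theorem. Your simplicity clause ``no two distinct dummies are incident with the same pair of original edges'' quantifies over \emph{original} edges of $H$, but after the \pdsg is planarised the structure only contains the pieces of $\mathcal{H}^\times$, and after the Phase~I contractions even the $\{0,1\}$-labelling trick of \cite{GanianHKPV21} for re-assembling these pieces into \Hep{s} breaks down: a contraction vertex absorbs many labelled edge-ends and the pairing of labels is lost. So the constraint ``a non-predrawn edge crosses each predrawn edge of $H$ at most once'' -- which is part of simplicity and is the whole reason the parameterisation is strengthened to $c+|E(G)\setminus E(H)|$ -- is not expressible on the graph your reduction produces. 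The paper's proof resolves precisely this: it labels $\mathcal{H}^\times$ before Phase~I, identifies the ``problematic'' $H$-edges whose planarisations meet the contraction cycles, and uses the structural Lemma~\ref{lem:holes} (adapted from \cite{GanianHKPV21}) to replace each contraction vertex $v_I$ by a bounded-size embedded labelled gadget $\tilde I_D$ that ``stitches'' the relevant pieces together, so that after the reduction an \Hep\ still certifies that two pieces of $\mathcal{H}^\times$ come from the same $H$-edge; only then can the MSO$_2$-formula forbid two auxiliary vertices of the same new edge being identified onto pieces joined by an \Hep. Your proposal contains no analogue of this step, and without it the claimed formula $\psi^{\mathrm{s}}_{k,c}$ cannot be written.

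Your treatment of disconnectedness is also off target. The framing of Definition~\ref{def:framing} (with connector edges and extended framings) already encodes a disconnected \pdsg, so Lemma~\ref{lem:phaseIIi} does not rely on $H$ being connected and no enumeration of ``topological configurations'' is needed; connectivity is only an assumption of the \emph{prior} algorithm of \cite{GanianHKPV21} that the paper removes. Moreover, as stated, your enumeration is not FPT: a configuration records the ordered sequence of actual predrawn edges crossed and the faces of $\ca H$ traversed, and the number of such choices grows with $|E(H)|$ and the number of faces, not merely with $f(c,k_0)$; and the completeness/soundness of installing uncrossable connectors per configuration -- which you acknowledge as the main difficulty -- is exactly the part left unproved. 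So both the simplicity-tracking step and the disconnectedness step need the machinery (labels, \Hep{s}, Lemma~\ref{lem:holes}, and the framing-based encoding) that your sketch omits.
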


Further on, we defer proofs of the ~~\apxmark$\!\!$-marked statements to the Appendix.

	\section{Preliminaries}\label{sec:prelims}
	We use standard terminology for undirected simple graphs~\cite{Diestel12} and assume basic understanding of \emph{parameterised complexity}~\cite{CyganFKLMPPS15,DowneyF13}, and of \emph{Courcelle's theorem together with MSO logic}~\cite{ArnborgLS91,Courcelle90} and treewidth.
	We refer also to the \ifshort full preprint paper \fi\iflong Appendix \fi for additional background on these notions.
	Regarding embeddings and drawings of graphs we mostly follow~\cite{mtbook}.

	For $r \in \mathbb{N}$, we write $[r]$ as shorthand for the set $\{1,\ldots,r\}$.
	
	\subsection{Partial graph drawings}
	A \emph{drawing} $\mathcal{G}$ of a graph $G$ in the Euclidean plane $\mathbb{R}^2$ is a function that maps each vertex $v \in V(G)$ to a distinct point $\mathcal{G}(v) \in \mathbb R^2$ and each edge $e=uv \in E(G)$ to a simple open curve $\mathcal{G}(e) \subset \mathbb R^2$ with the ends $\mathcal{G}(u)$ and $\mathcal{G}(v)$.
	We require that $\ca G(e)$ is disjoint from $\ca G(w)$ for all~$w\in V(G)\setminus\{u,v\}$.
	In a slight abuse of notation we often identify a vertex $v$ with its image $\mathcal{G}(v)$ and an edge $e$ with~$\mathcal{G}(e)$.
	Throughout the paper we will moreover assume that:
	there are finitely many points which are in an intersection of two edges,
	no more than two edges intersect in any single point other than a vertex,
	and whenever two edges intersect in a point, they do so transversally (i.e., not tangentially).

	The intersection (a point) of two edges is called a \emph{crossing} of these edges.
	A drawing $\ca G$ is \emph{planar} (or a {\em plane graph}) if $\ca G$ has no crossings, and a graph is \emph{planar} if it has a planar drawing.
	The number of crossings in a drawing $\ca G$ is denoted by $\crd(\ca G)$.
	A drawing $\ca G$ is \emph{$c$-planar} (or a {\em$c$-plane graph}) if every edge in $\ca G$ contains at most $c$ crossings,
	and a graph is \emph{$c$-planar} if it has a $c$-planar drawing.
	The \emph{planarisation} \(\mathcal{G}^\times\) of a drawing \(\mathcal{G}\) of \(G\) is the plane graph obtained from \(\mathcal{G}\) by making each crossing point a new degree-$4$ vertex of~$\mathcal{G}^\times$.
	The inclusion-maximal connected subsets of the set-complement \(\mathbb{R}^2\setminus\ca G\) are called the \emph{faces} of~$\ca G$.
	For any drawing exactly one of these faces is infinite and referred to as the \emph{outer face}. 
	
	A \emph{partial drawing} of a graph \(G\) is a drawing of an arbitrary subgraph $H$ of \(G\).
	A \emph{\ppdg graph} \(\ca P=(G, \mathcal{H})\), with an implicit reference to~$H$, is a graph \(G\) together with a~partial drawing \(\mathcal{H}\) of \(H\subseteq G\), and then \(\mathcal{H}\) is called the {\em\pdsg} of \((G, \mathcal{H})\).
	\todo{We need to check and make this consistent in the end. I originally wanted a word for \(H\) and not \(\mathcal{H}\).}\todo{OK, we will see}
	We say that two drawings \(\mathcal{G}_1\) and \(\mathcal{G}_2\) of the same graph \(G\) are \emph{equivalent}
	if there is a homeomorphism of $\mathbb{R}^2$ onto itself taking \({\mathcal{G}_1}^\times\) onto \({\mathcal{G}_2}^\times\) \cite{mtbook}.
	For connected \({\mathcal{G}_1}^\times\) and \({\mathcal{G}_2}^\times\), this is the same as requiring equal rotation systems and the same outer face.
	However, for disconnected drawings, \cite{JelinekKR13} in addition to equal rotation systems and outer face it is neccessary to specify which faces of each connected component of \({\mathcal{G}_1}^\times\) contain which other connected components and in which orientation, and match this specification with \(\mathcal{G}_2^\times\) (see also \Cref{fig:disconnequiv}).

	In this setup, we also say that two {\em\ppdg graphs are isomorphic} if there exists an isomorphism which gives an equivalence of their \pdsg{s}.
	
	\begin{figure}
		\centering
		\includegraphics[page=1, scale=1.2]{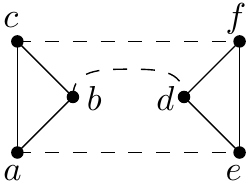}
		\hfil
		\includegraphics[page=2, scale=1.2]{disconnequiv.pdf}
		\caption{Two drawings of the same graph (solid lines) with the same rotation scheme. However both drawings are not equivalent. All dashed curves need to be mapped without crossing each other or any solid line by any homeomorphism from the left to the right the drawing. This is not possible.\label{fig:disconnequiv}}
	\end{figure}
	
	\subsection{Problem definitions}
	The \textsc{\PPDCR} problem takes as an input a partially drawn graph \((G, \mathcal{H})\) and an integer \(q\).
	The task is to decide whether there is a drawing \(\mathcal{G}\) of \(G\), the restriction of which to the \pdsg \(H\) is equivalent to \(\mathcal{H}\) (we can shortly say that $\ca G$ {\em extends}~$\ca H$), such that \(\mathcal{G}\) has at most \(q+\crd(\mathcal{H})\) crossings.
	The smallest value of the parameter \(q\) for which \((G, \mathcal{H})\) is a \texttt{yes}-instance of \textsc{\PPDCR} is called the \emph{\ppdcr} of \((G, \mathcal{H})\), denoted by $\crgpd(G, \mathcal{H})$.
	Note that $\crgpd(G,\emptyset)$ is the (called classical for distinction) \emph{crossing number} $\crg(G)$ of $G$.
	
	Likewise, the \textsc{\PPDCRc{$c$}} problem takes as an input a partially drawn graph \((G, \mathcal{H})\) and an integer \(q\).
	The task is to decide whether there is a drawing \(\mathcal{G}\) of \(G\) in which every edge in \(E(G) \setminus E(H)\) has at most \(c\) crossings and the restriction of which to \(H\) is equivalent to \(\mathcal{H}\), such that \(\mathcal{G}\) has altogether at most \(q+\crd(\mathcal{H})\) crossings.
	The smallest \(q\) (which may not be defined in general; a trivial example for which \(q\) is not defined is given by \(c=1\) and \(G\) not~$1$-planar) for which \((G, \mathcal{H})\) is a \texttt{yes}-instance of \textsc{\PPDCRc{$c$}} is called the \emph{\ppdcrc{$c$}} of \((G, \mathcal{H})\).

	\subsection{A parameterised algorithm for classical crossing number}	\label{sec:Grohes}
	We outline the high-level idea of Grohe's algorithm~\cite{Grohe04} to decide the classical crossing number of a graph in \FPT\ time and note some obstacles that we need to overcome.
	\ifshort Due to lack of space in the main paper, we leave the complete formal recapitulation together with some supplementary definitions for the full preprint paper.\fi
	\iflong See the Appendix for a complete formal recapitulation together with some supplementary definitions.\fi

	The algorithm proceeds in two phases.
	
	\paragraph*{Phase I -- Bounding Treewidth.}
	Consider a graph $G$ in which some edges are marked as `uncrossable', and the question of whether there is a drawing of \(G\) with at most \(k\) crossings in which no `uncrossable' edge is crossed for a fixed parameter~$k$.
	To improve readability, we shortly say that a drawing is {\em conforming} if no edge marked `uncrossable' is crossed in it.
	Grohe~\cite{Grohe04} showed that in polynomial time one can (i) confirm that the answer to this question is no, (ii) find a tree decomposition of \(G\) with width bounded in \(k\), or (iii) find a connected planar subgraph \(I \subseteq G\) where \(|V(I)| \geq 6\) together with a cycle \(C\) that is disjoint from \(V(I)\) and contains \(N(I)\) such that the following holds.
	If \(G'\) arises from \(G\) by contracting \(I\) to a vertex \(v_I\) and additionally marking all edges incident to \(v_I\) and all edges of \(C\) as `uncrossable', then any crossing-minimum conforming drawing of \(G\) arises from a crossing-minimum conforming drawing of \(G'\) by replacing \(v_I\) with a planar drawing of \(G[V(I) \cup V(C)]\) where the drawing of \(C\) is distorted to match that in the drawing of \(G'\) and \(I\) is drawn in an \(\varepsilon\)-neighbourhood of \(v_I\).
	Conversely, every crossing-minimum conforming drawing of \(G'\) arises from a crossing-minimum conforming drawing of \(G\) by contracting \(I\) and placing the resulting vertex on the drawing of some~vertex~in~\(I\).

	In the partially drawn setting we can however not simply contract a subgraph \(I\) without loosing information about its parts that are potentially fixed by the partial drawing of the instance.
	In particular, reinserting some unrestricted planar drawing of \(I\) can violate the partial drawing (see \Cref*{fig:flippedingrid}).

	\begin{figure}
		\centering
		\includegraphics[page=2]{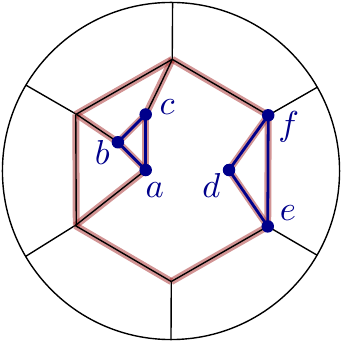}
		\hfil
		\includegraphics[page=1]{flippedingrid.pdf}
		\caption{Example where predrawn parts (blue) make it impossible to simply insert a planar drawing of \(I\) (brown underlay).
		If the partial drawing is as on the left, \(I\) can be drawn planarly as depicted on the right but not while preserving equivalence of the partial drawing (cf.\ \Cref{fig:disconnequiv}). \label{fig:flippedingrid}}
	\end{figure}
	
	\paragraph*{Phase II -- MSO Encoding.}
	After having reduced \(G\) to a graph of treewidth bounded in the desired crossing number, one can apply Courcelle's theorem to decide whether $\crg(G)\leq k$ for any fixed~$k$.
	For that it is sufficient to encode in MSO\(_2\) logic the existence of at most \(k\) pairs of edges such that, after planarising a hypothetical crossing between the two edges of each pair, the resulting graph is planar.
	To express planarity, one simply excludes the existence of subdivisions of the two Kuratowski obstructions \(K_5\) and \(K_{3,3}\).
	The task of interpreting the planarisation of hypothetical crossings, ``guessed'' by existential quantifiers, is a more subtle one.
	In order to avoid heavy tools of finite model theory here, we can apply the following trick: instead of $G$, use the graph $G^{(k)}$ which subdivides $k$-times every edge of $G$, and ``guess'' $k$ pairs of the subdivision vertices which are pairwise identified to make the planarisation.

	This of course does not carry over easily to the partially drawn setting as the Kuratowski obstructions do not capture the \pdsg shape, i.e., there could be \ppdg graphs with high crossing number and not containing any $K_5$ or $K_{3,3}$ subdivisions.
	Here, instead, we will use the corresponding planarity obstructions for \ppdg graphs from \cite{JelinekKR13}, described next in Section~\ref{sec:PEGplanarity}.
	This brings two new complications to be resolved; namely that the list of obstructions is not finite, and that we have to encode the input drawing of the given \ppdg graph in an abstract way which can be ``read'' by an MSO$_2$-formula.

	\subsection{Characterising \ppd planarity}	\label{sec:PEGplanarity}

	We use the mentioned result of Jel\'{\i}nek, Kratochv\'{\i}l and Rutter \cite{JelinekKR13} characterising \ppd planarity, that is, the question of whether a given \ppdg graph $(G,\ca{H})$ admits a planar drawing which extends $\ca H$, by means of forbidding so-called PEG-minors.
	In this context we assume $\crd(\ca H)=0$.
	The forbidden obstructions are formed by one ``easy'' infinite family described separately (the {\em alternating chains}) and a list of $24$ specific \ppdg graphs shown in Figure~\ref{fig:PEG-obstructions}.
	However, since PEG-minors are not suitable for our application, we relax the characterisation of~\cite{JelinekKR13} to make a larger finite obstruction set
	and a simpler-to-handle containment relation (essentially a ``\ppdg topological~minor'').

	\begin{figure}
		\centering
		\includegraphics[width=0.75\hsize]{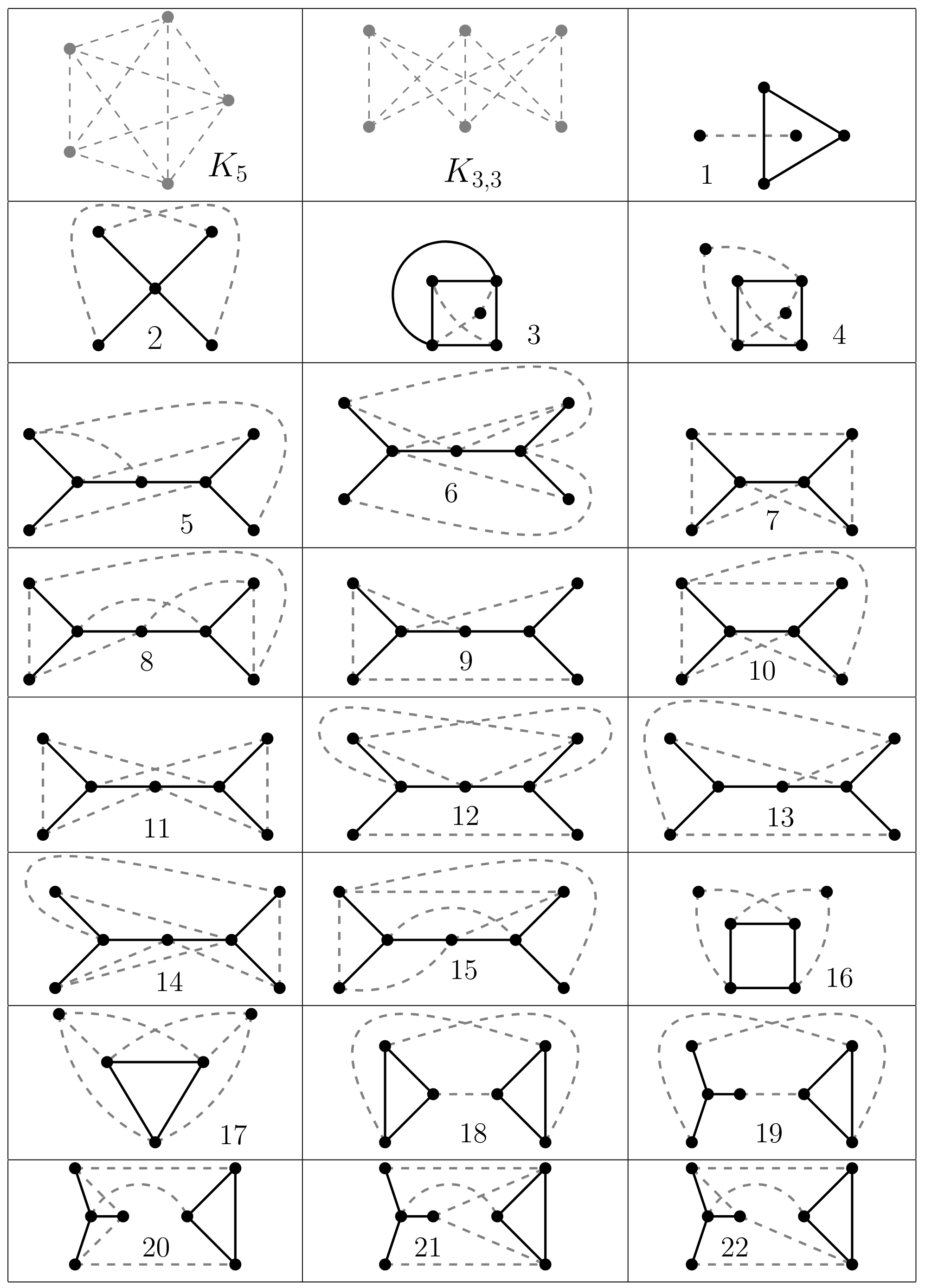}
		\caption{(A picture copied from \href{https://arxiv.org/pdf/1204.2915.pdf}{arXiv:1204.2915v1} with permission of the authors.)
		The list of $24$ \ppdg graphs~\cite{JelinekKR13} that are the obstructions (as PEG-minors) for \ppdg graphs which can be extended to planar drawings.
		The solid black edges and vertices form the \pdsg of the graphs, and dashed edges are the non-fixed ones.}
		\label{fig:PEG-obstructions}
	\end{figure}

	A {\em subdivision} of an edge in a \ppdg graph $(G,\ca{H})$ is the same subdivision in the graph $G$, which is correspondingly applied to $\ca{H}$ if the subdivided edge is from~$\ca{H}$.
	A \ppdg graph $(G_1,\ca{H}_1)$ is a {\em(\ppdg) subgraph} of $(G,\ca{H})$ if $G_1\subseteq G$, $H_1\subseteq H$ and the drawing $\ca{H}_1$ is equivalent to the restriction of $\ca{H}$ to~$H_1$.
	Note that in general one may have an edge of $G_1$ which is predrawn in $\ca H$ but not in~$\ca H_1$.

	\begin{restatable}[adapted from {\cite{JelinekKR13}}]{theorem}{PEGourversion}\label{thm:PEG-relaxed-obstructions}
		There is a finite family $\cf K$ of \ppdg graphs such that the following is true.
		A \ppdg graph $\ca P=(G,\ca{H})$ admits a planar drawing which extends $\ca H$ if and only if $\crd(\ca H)=0$ and the following hold:
		\begin{enumerate}[i.]
			\item there is no alternating chain in $\ca P$ (see the \ifshort preprint version \fi\iflong Appendix \fi for the full definition), and
			\label{it:noalter}
			\item no subdivision of a \ppdg graph from $\cf K$ is isomorphic to a \ppdg subgraph of $\ca P$.
			\label{it:nosubdiv}
		\end{enumerate}
	\end{restatable}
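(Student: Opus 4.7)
The plan is to derive the statement from the PEG-minor version of Jel\'{\i}nek, Kratochv\'{\i}l and Rutter~\cite{JelinekKR13} by re-expressing each of the 24 PEG-minor obstructions as a finite list of subdivision obstructions. The alternating-chain condition~\ref{it:noalter} can be carried over unchanged, so all the technical work lies in the handling of the 24 finite obstructions. The guiding principle is the classical graph-minors fact that $H$-minor containment and $H$-topological-minor containment coincide whenever $H$ has maximum degree at most~3, and for higher-degree $H$ the gap is closed by the family of \emph{cubic expansions} of $H$: replace each vertex $v$ of degree $d\geq 4$ by a tree with $d$ leaves identified with the neighbours of $v$ and internal vertices of degree at most~3.

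First, I would adapt the cubic-expansion construction to the \ppdg setting. For every PEG-minor obstruction $(O,\ca O)$ and every cubic expansion of the underlying graph $O$, I would enumerate all compatible ``drawing annotations'': at each predrawn vertex $v$ the expansion tree must be embeddable into a disc neighbourhood of $v$ so that the cyclic order of its leaves agrees with the rotation at~$v$, and each new internal vertex, each new edge inside the expansion tree, and each subdivision vertex placed on a branch path may be marked as predrawn or free in any combinatorially distinct, consistent way. Since each of the 24 obstructions is finite and of bounded degree, this produces only finitely many annotated cubic expansions per obstruction; $\cf K$ is then defined as the union of all of them.

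Next I would verify equivalence with the JKR statement. For the forward direction, if $\ca P$ contains a subdivision of some $K\in\cf K$ as a \ppdg subgraph, contracting each expansion tree of $K$ inside $\ca P$ realises the corresponding underlying obstruction $(O,\ca O)$ as a PEG-minor of~$\ca P$, with the rotation around each contracted point inherited correctly from the expansion tree. For the converse, starting from a PEG-minor witness of some $(O,\ca O)$ in $\ca P$, the branch sets of the vertices of $O$ form trees in $\ca P$ whose structure and predrawn-edge pattern determine which annotated cubic expansion $K\in\cf K$ arises; the branch paths between these trees are then the subdivided edges of $K$ inside~$\ca P$.

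The main obstacle I expect is the drawing bookkeeping. Contractions and expansions at skeleton vertices must preserve not only the rotations but also the containment relationships between connected components of the drawing (recall Figure~\ref{fig:disconnequiv}), and the \ppdg subgraph relation further allows us to forget the predrawn status of some edges. Ensuring that the enumeration of annotated cubic expansions is both exhaustive and finite, and formalising that each expansion or contraction step really does respect \ppdg equivalence in both directions, will be the technically most delicate part of the proof.
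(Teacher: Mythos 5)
Your overall plan (build $\cf K$ explicitly by expanding the high-degree vertices of the $24$ obstructions and re-annotating predrawn status) matches the paper's informal description of $\cf K$, but the paper's actual proof deliberately avoids the step your argument hinges on: it never proves that an explicit enumeration is exhaustive. Instead it defines $\cf K$ abstractly, as a minimal class (with respect to a ``poor PEG-minor'' relation, which is exactly subdivision-plus-deletion containment) covering all \ppdg graphs that contain a member of the list in Figure~\ref{fig:PEG-obstructions} as a PEG-minor, and then only proves \emph{finiteness} of that minimal class. The genuine gap in your proposal is the converse direction. A PEG-minor model is not a plain minor model: by operation IV of \cite{JelinekKR13}, the branch set of a single predrawn vertex of an obstruction may consist of many connected pieces of $H$ stitched together by non-predrawn edges (each such contraction is legal only because, at that moment, its two ends lie in distinct components of $H$ and have $H$-degree at most one). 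The tree you extract from such a branch set can therefore alternate predrawn and free edges an unbounded number of times, and an alternating pattern is \emph{not} a subdivision of any fixed shorter pattern, since subdividing an edge never changes the predrawn/free block structure along a path. So either your family of annotated cubic expansions is infinite, contradicting your finiteness claim, or it is finite and your direct extraction does not land in it; closing this needs a normalisation or minimality argument --- this is exactly what the paper's restriction of ``release'' to bridges of $H$ (with at most two subdivisions per edge) encodes, and what its minimal-class definition buys without an explicit enumeration. Symmetrically, soundness of enumerating ``any combinatorially distinct, consistent way'' of marking predrawn/free is not automatic: a free edge inside the expansion tree of a predrawn vertex can be contracted back into that vertex only when operation IV applies, so some annotations you would list do not certify non-planarity and must be excluded; note also that a release frees an edge segment, so marking only subdivision \emph{vertices} is not the right bookkeeping.

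The alternating-chain condition also does not ``carry over unchanged''. In \cite{JelinekKR13} the infinite family is excluded as \emph{PEG-minors} of a rigid ``reduced'' chain (single edges as middle paths, vertex set equal to that of the cycle plus $s,t$), whereas condition~(\ref{it:noalter}) of the statement asks for a chain occurring as a \ppdg\ \emph{subgraph} of $\ca P$, with disjointness imposed only on consecutive paths. Translating the PEG-minor containment of a reduced chain into this subgraph-type containment needs its own argument and in fact forces the relaxed definition: when reversing the minor operations one cannot guarantee that non-consecutive paths of the chain are internally disjoint in $\ca P$. Since one must also check that the relaxed object still certifies non-planarity, this part cannot simply be waved through.
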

	
	\smallskip
	Briefly put, the family $\cf K$ from Theorem~\ref{thm:PEG-relaxed-obstructions} is composed of all graphs obtained from the obstructions $(G,\ca{H})$ in Figure~\ref{fig:PEG-obstructions} \cite{JelinekKR13} by possible iterative splittings (of vertices of degree $>3$ in~$G$) and possible releasing of certain edges from~$\ca H$.
	{The {\em splitting} of a vertex \(v\) is performed by partitioning the neighbourhood of \(v\) into two disjoint sets \(N_1\) and \(N_2\),
	and replacing \(v\) with two new adjacent vertices \(v_1\) and \(v_2\) such that the neighbourhood of $v_1$ is $N_1\cup\{v_2\}$ and the neighbourhood of $v_2$ is $N_2\cup\{v_1\}$.}
	The {\em release} of an edge $f\in E(H)$ from $\ca H$ is allowed if $f$ is a bridge, i.e.\ $f$ is not contained in any cycle of~$H$, and is performed as follows:
	If one end (resp., both ends) of $f$ is of degree $>2$ in $H$, subdivide $f$ once (twice), and denote by $f'$ the edge resulting from $f$ such that both ends of $f'$ are of degree $\leq2$ in~$H$.
	Then remove $f'$ only from $\ca H$ (but keep it in~$G$).
	\ifshort We leave the details for the full preprint paper.\fi
	\iflong We leave the details for the Appendix.\fi

	\section{Algorithm for \ppdcr}\label{sec:PPDCR}

	Note that, regarding the input \ppdg graph \((G,\mathcal{H})\), we may as well assume that $\ca H$ is a plane graph; otherwise, we replace $\ca H$ with its planarisation~$\ca H^\times$
	(and accordingly adjust~$G$, which formally means to move to the \ppdg graph \(\big((G-E(H))\cup \mathcal{H}^\times, \mathcal{H}^\times\big)\)).
	This is sound since neither do we care about the number of crossings prescribed by $\ca H$, nor do we have any restrictions on single edges in \(H\), and hence do not care to identify them.
	Thus, we will assume planar $\ca H$ throughout the rest of the section, unless we explicitly say otherwise.
	
	\subsection{Phase I -- Treewidth}
	\label{sec:phase1}
	
	To show that we can arrive at an input graph with small treewidth, we prove a statement analogous to Grohe's iterative contraction for the partially predrawn setting.
	Approaching this, however, it becomes quite clear that contracting a subgraph \(I\) must be treated much more delicately.
	The role of the cycle \(C\) in that case is that it could be treated as an interface to glue together two drawings -- any planar drawing of the contracted part and any drawing of \(G\) after contraction with at most \(k\) crossings in which no `uncrossable' edge is crossed.
	For actually gluing the parts together, the drawing of \(C\) might need to be `flipped' in either of these two drawings.
	This can create a problem in terms of being equivalent to \(\mathcal{H}\) on \(H\).
	Even if we ensure that each of the two drawings we would potentially like to glue together to a drawing of \(G\) are compatible with \(\mathcal{H}\) or the contraction of \(\mathcal{H}\), this compatibility is not invariant under flipping \(C\) (see e.g.\ \Cref{fig:incompatible}).
	
	\begin{figure}
		\begin{center}
		\includegraphics[scale=0.6]{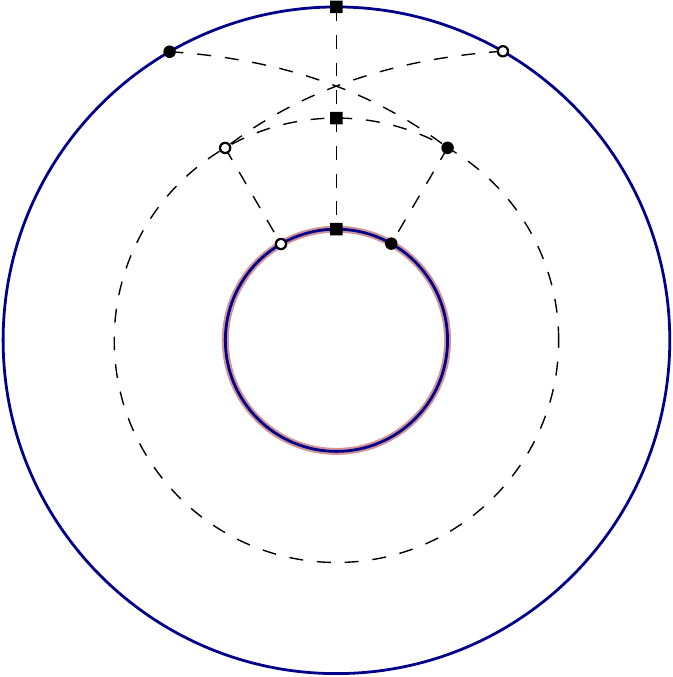} \hfill
		\includegraphics[scale=0.6,page=2]{compatibility.pdf} \hfill
		\includegraphics[scale=0.6,page=3]{compatibility.pdf}
		\end{center}
		\caption{Instance (left) cannot be drawn with \(0\) crossings (same node styles indicate adjacencies to vertices on dashed cycle),
			but subinstance (middle) consisting only of \(H\) (blue), and induced graph on \(I\) (brown underlay) and \(C\), as well as subinstance (right) in which \(I\) is contracted can.
			\label{fig:incompatible}}
	\end{figure}
	
	For this purpose we consider the notion of \emph{\(\mathcal{(H, I)}\)-flippability} for \(C\) and \(I\).
	Essentially, we say that \(C\) is \(\mathcal{(H, I)}\)-flippable in a graph \(D\), if the orientation of \(C\) with respect to \(I\) in a planar drawing of \(D\) that is equivalent to \(\mathcal{H}\) on \(H\) is not determined by \(\mathcal{H}\).
	Otherwise \(C\) is \(\mathcal{(H, I)}\)-unflippable in \(D\).
	A formal definition that makes use of the non-equivalence of drawing two disconnected triangles described in \Cref{fig:disconnequiv} is given in the
        \ifshort full preprint paper\fi\iflong Appendix\fi.
	Using this formal definition it can be decided in polynomial time whether a cycle is \(\mathcal{(H, I)}\)-flippable in a graph, or not.

	To facilitate readability, we say that for a \ppdg graph \((G, \mathcal{H})\) where some edges of \(G\) are marked as `uncrossable', the drawings of \(G\) that we want to consider, are \emph{\conf}.
	More formally, a \conf drawing is a drawing of \(G\) with at most \(k + \crd(\mathcal{H})\) crossings that is equivalent to \(\mathcal{H}\) on the \pdsg \(H\) and in which no `uncrossable' edge is crossed.
	The following key theorem is fully stated and proved in the \ifshort preprint paper\fi\iflong Appendix\fi.
	\begin{theorem}\label{thm:reducegrah}
		For all \(k \in \mathbb{N}\) there exists \(w \in \mathbb{N}\), such that given a \ppdg graph \((G, \mathcal{H})\) in which some edges are marked `uncrossable', in \FPT-time parameterised by \(k\) we~can
		\begin{enumerate}
			\item decide that there is no \conf drawing of \((G, \mathcal{H})\); or
			\item find a tree decomposition of \(G\) of width at most \(w\); or
			\item find an equivalent instance \((G', \mathcal{H}')\) with the property that \(|V(G')| < |V(G)|\).
		\end{enumerate}
	\end{theorem}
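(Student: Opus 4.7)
The plan is to invoke Grohe's iterative procedure (recalled in Section~\ref{sec:Grohes}) on the underlying graph~$G$ together with its uncrossable markings, and then handle the predrawn skeleton~$\mathcal{H}$ on top. Grohe's algorithm returns one of three outcomes: it certifies that no drawing of~$G$ with at most~$k$ crossings avoids crossings on uncrossable edges (which immediately gives case~1), or it returns a tree decomposition of~$G$ of width bounded in~$k$ (case~2), or it yields a connected planar subgraph $I\subseteq G$ with $|V(I)|\geq 6$ together with a bounding cycle~$C$ disjoint from~$V(I)$ with $N(I)\subseteq V(C)$, such that $I$ must lie in a closed disk~$D_C$ bounded by the drawing of~$C$ in every drawing of~$G$ with at most~$k$ crossings. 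The remaining work is to turn this third outcome into case~3 or case~1 of the present theorem.

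The first obstacle in turning this outcome into a genuine reduction is that the interior of~$D_C$ may now carry pieces of~$\mathcal{H}$, so discarding~$I$ risks losing information fixed by~$\mathcal{H}$. To control this, strengthen the grid step to produce a sufficiently large flat grid and apply a pigeonhole argument on its rows and columns to select $I$ and~$C$ so that either (a)~the interior of~$D_C$ is disjoint from $V(H)\cup E(H)$, or (b)~$I\cup C\subseteq H$, in which case $\mathcal{H}$ already prescribes a planar embedding of $I\cup C$ inside~$D_C$. In either regime, Theorem~\ref{thm:PEG-relaxed-obstructions} supplies a polynomial-time test of whether the \ppdg subgraph on $V(I)\cup V(C)$, under a prescribed rotation of~$C$, admits a planar drawing extending the relevant restriction of~$\mathcal{H}$. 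If no rotation admits such an extension, output case~1.

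The main obstacle is matching the orientation of~$C$ in the contracted instance with the orientation forced (or left free) by~$\mathcal{H}$. In Grohe's setting $C$ acts as a neutral interface and can be flipped freely when reinserting a planar drawing of $G[V(I)\cup V(C)]$ into a small neighbourhood of the contracted vertex~$v_I$; in the predrawn setting this freedom is precisely captured by $(\mathcal{H}, I)$-flippability, which is decidable in polynomial time as noted in the text. When $C$ is flippable, the standard reinsertion works with either orientation and extends~$\mathcal{H}$. When $C$ is unflippable, the orientation is forced by~$\mathcal{H}$; encode the forced orientation into $G'$ by attaching a small planar $3$-connected gadget along~$C$, so that every drawing of~$G'$ consistent with the contracted skeleton~$\mathcal{H}'$ induces exactly that orientation on~$C$ and hence admits a reinsertion of~$I$ matching~$\mathcal{H}$.

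To conclude, define $(G', \mathcal{H}')$ by contracting $I$ to~$v_I$ in~$G$, restricting~$\mathcal{H}$ accordingly (and planarising it again if necessary), marking all edges of~$C$ and all edges incident to~$v_I$ as uncrossable, and installing the flippability gadget if required. Equivalence of $(G, \mathcal{H})$ and $(G', \mathcal{H}')$, in the sense that their \conf drawings correspond bijectively modulo reinsertion of~$I$ inside~$D_C$, follows from the disk-locality of~$I$ combined with the planarity test derived from Theorem~\ref{thm:PEG-relaxed-obstructions}. Since $|V(I)|\geq 6$ we have $|V(G')|<|V(G)|$, giving case~3. All the ingredients — Grohe's grid-finding, the pigeonhole refinement, the planarity test from Theorem~\ref{thm:PEG-relaxed-obstructions}, and the flippability check — run in \FPT\ time parameterised by~$k$, completing the plan.
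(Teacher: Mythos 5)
Your plan follows the paper's skeleton (Grohe's trichotomy, a partially-embedded-planarity test for the region to be contracted, the flippability analysis, and an orientation-forcing gadget added to the skeleton in the unflippable-becomes-flippable case), but it contains a genuine gap in how you decouple $I\cup C$ from the predrawn skeleton. The pigeonhole step claiming you can always refine the flat grid so that either the interior of the disk bounded by $C$ avoids $V(H)\cup E(H)$ or $I\cup C\subseteq H$ is unjustified and false in general: $H$ is an arbitrary subgraph of $G$ and may meet every row and column of any flat grid (e.g.\ a matching or forest spread throughout the grid), so neither alternative can be forced by choosing a subgrid; moreover the ``interior of $D_C$'' is only defined once a drawing is fixed. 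The paper needs no such dichotomy.

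More importantly, the planarity test you propose is too weak. You test whether the subgraph on $V(I)\cup V(C)$ extends the \emph{restriction} of $\mathcal H$ under a guessed rotation of $C$, but the correctness of both directions of the reduction (in particular the conclusion ``no extension $\Rightarrow$ case~1'', and the reinsertion of $I$ into a drawing of $(G',\mathcal H')$) requires extendability with respect to the \emph{entire} skeleton: the paper tests whether $(G[V(I)\cup V(C)]\cup H,\mathcal H)$ admits a crossing-free conforming drawing, using the linear-time algorithm of Angelini et al.~\cite{AngeliniDFJVKPR15}, and $(\mathcal H,I)$-flippability is likewise defined relative to a graph containing all of $H$. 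A rotation of $C$ does not encode which faces contain which (possibly disconnected) predrawn components or their orientations (cf.\ Figures~\ref{fig:disconnequiv} and~\ref{fig:incompatible}), so components of $\mathcal H$ far from the disk can make your locally-tested drawing non-extendable or force an orientation your test never sees. Also, Theorem~\ref{thm:PEG-relaxed-obstructions} is a characterisation, not by itself a polynomial-time test; the algorithmic test comes from~\cite{AngeliniDFJVKPR15}. The repair is to drop the pigeonhole dichotomy entirely, run the partially embedded planarity test on $G[V(I)\cup V(C)]\cup H$ against all of $\mathcal H$, and then carry out your flippability case analysis (where the gadget is needed only when $C$ is unflippable before contraction but flippable afterwards, and the gadget's triangle must be added to the predrawn skeleton $\mathcal H'$, not merely to $G'$).
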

	\begin{proof}[Sketch of proof.]
		We start by applying the result by Grohe~\cite{Grohe04} for \(k\) with \(G\) as input.
		If the algorithm of~\cite{Grohe04} decides that the number of crossings in any drawing of \(G\) in which no `uncrossable' edge is crossed is more than \(k\) times, we can safely return that the same is true for any such drawing that is equivalent to \(\mathcal{H}\) on the \pdsg.
		Similarly, if the algorithm returns a tree decomposition of width at most \(w\), we can return that tree decomposition.
		
		In the last case, the algorithm finds a subgraph \(I \subseteq G\) and a cycle \(C\) in \(G\) as described in Subsection~\ref{sec:Grohes} for bounding treewidth.
		We distinguish whether there is a \crconf{\(0\)} drawing of \(\big(G[V(I) \cup V(C)] \cup H, \mathcal{H}\big)\), or not.
		Recall that, as we assume \(\mathcal{H}\) to be planarised, edges marked as `uncrossable' are irrelevant in this context because no edge should be crossed.
		Hence deciding whether there is a \crconf{\(0\)} drawing of \(\big(G[V(I) \cup V(C)] \cup H, \mathcal{H}\big)\) is equivalent to deciding whether \(\crgpd\big(G[V(I) \cup V(C)] \cup H, \mathcal{H}\big) = 0\).
		This can be decided in linear time using the result by Angelini et al.~\cite{AngeliniDFJVKPR15}.
		
		\begin{case}
			There is no \crconf{\(0\)} drawing of \(\big(G[V(I) \cup V(C)] \cup H, \mathcal{H}\big)\).
		\end{case}
		In this case we claim that there is no \conf drawing of \((G, \mathcal{H})\).
		Assume for a contradiction that there is such a drawing \(\mathcal{G}\).
		In particular this drawing has at most \(k\) crossings and no `uncrossable' edge is crossed in it.
		Hence, because of the choice of \(I\) and \(C\), no edge of \(G[V(I) \cup V(C)]\) is crossed in \(\mathcal{G}\).
		But as there are exactly \(\crd(\mathcal{H})\) crossings involving only edges of \(H\) in \(\mathcal{G}\), this means that the restriction of \(\mathcal{G}\) to \(G[V(I) \cup V(C)]\) is a \crconf{0} drawing of \(\big(G[V(I) \cup V(C)] \cup H, \mathcal{H}\big)\); a contradiction.
		
		\begin{case}
			There is a \crconf{\(0\)} drawing of \(\big(G[V(I) \cup V(C)] \cup H, \mathcal{H}\big)\).
		\end{case}
		This is the case in which we attempt to construct an equivalent instance with fewer vertices.
		Informally speaking, if we find an \(\mathcal{(H, I)}\)-flippable cycle \(C\), we will essentially be able to flip any planar drawing of the contracted subgraph to appropriately match the interface in a drawing of \(G\) after the contraction.
		Hence we can simply contract \(I\) in \(G\) and \(\mathcal{H}\).
		
		If we find a cycle that is \(\mathcal{(H, I)}\)-unflippable and the cycle remains unflippable after the contraction of the subgraph is performed, any planar drawing of the contracted subgraph automatically matches the interface in a drawing of \(G\) after contraction.
		Hence we can simply contract \(I\) in \(G\) and \(\mathcal{H}\).
		
		The last case is that the cycle we find is \(\mathcal{(H, I)}\)-unflippable but it seems to be flippable after the contraction of the subgraph is performed.
		In this case the orientation of the cycle is fixed in any planar drawing of the subgraph~$I$ for contraction, but both orientations of the cycle are possible after the contraction is performed.
		We must therefore appropriately force the orientation of \(C\) in the drawing after performing the contraction to match the one which is in fact forced before the contraction.
		We will do this by extending \(\mathcal{H}\) carefully.
	\end{proof}
	
	We can iteratively apply \Cref{thm:reducegrah} \(\mathcal{O}(|V(G)|)\) times to reduce our instance to a graph of small treewidth.
	Hence from now on we focus on the case that we are given a \ppdg graph \((G, \mathcal{H})\) and a tree decomposition of \(G\) whose width \(w\) is bounded in the inquired crossing number.
	
	This is already a crucial step towards the targeted application of Courcelle's theorem.
	However we still need to incorporate the information on the partial drawing \(\mathcal{H}\) into a graph structure of small treewidth.
	For this we will define a \emph{framing} of \((G,\mathcal{H})\).
	Note that even though we assume in this definition $\ca H$ to be planar, the definition also applies to the general case in which we first planarise $\ca H$ into $\ca H^\times$ and correspondigly adjust~$G$.
	
	\begin{definition}\label{def:framing}
		A \emph{framing} of a \ppdg graph \((G, \mathcal{H})\), where $\ca H$ is a plane graph, is an ordinary (abstract) graph~$F$ constructed as follows.
		See Figure~\ref{fig:framingex}.
		We start with the initial drawing \(\ca D:=\mathcal{H}\) and continue by the following steps in order:
		\begin{enumerate}
			\item While the graph of $\ca D$ is not connected, we iteratively add edges from \(G\) to $\ca D$ that can be inserted in a planar way and which connect two previously disconnected components.
			If this is no longer possible while the graph is still disconnected, let \(B\) be a face of $\ca D$ incident to more than one connected component.
			We pick a vertex \(v\) on \(B\) and connect \(v\) to an arbitrary vertex from each component incident to \(B\) which does not contain \(v\).
			We will call all edges added in this step the \emph{connector edges} (of the resulting framing).
			\item We replace each edge $f=uw$ of the drawing $\ca D$ from Step 1 (including the connector edges) by three internally disjoint paths of length~$3$ between $u$ and~$w$.
			We will call these three paths together the {\em framing triplet of~$f$}, and denote by $\ca D'$ the resulting drawing.
			\item Around each vertex $v\in V(\mathcal{H}^\times)$ in the drawing $\ca D'$ from Step 2, we add a cycle on the neighbours of $v$ in $\ca D'$ in the cyclic order given by $\ca D'$.
			We will call these cycles the \emph{framing cycles}, and all edges of the resulting planar drawing $\ca D''$ the {\em frame edges}.
			\item Finally, we set $F:=D''\cup G$ where $D''$ is the underlying graph of $\ca D''$ from Step 3.
		\end{enumerate}
	\end{definition}

	We remark that Step 1 of the construction of a framing $F$ of \((G, \mathcal{H})\) is not deterministic, and hence a \ppdg graph can admit multiple framings.
	Note also that possible connector edges introduced in Step 1 are no longer present in resulting $F$ (only their vertices and derived frame triplets are present).
	Moreover, the most important aspect of Definition~\ref{def:framing} is that the frame ($\ca D''$) defined after Step 3 is a $3$-connected planar graph which hence combinatorially captures the drawing \(\mathcal{H}\) within the framing~$F$.

	\begin{figure}[t]
	\begin{center}
	\begin{tikzpicture}[scale=0.7]
	\gdef\uugraph{%
		\tikzstyle{every node}=[draw, shape=circle, minimum size=2pt,inner sep=1.4pt, fill=black]
		\tikzstyle{every path}=[color=black]
		\node at (-2,0) (a) {}; \node at (2,0) (b) {}; \node at (0,2) (c) {};
		\node at (-2,10) (d) {}; \node at (2,10) (e) {}; \node at (0,5) (f) {};
		\node at (0,8) (g) {};
		\tikzstyle{every path}=[color=blue,thick]
		\draw (a) -- (b) -- (c) -- (a);
		\draw (d) -- (g) -- (e) (g) -- (f);
		\tikzstyle{every path}=[color=green!50!black]
		\draw (d) -- (a) to[bend left=18] (f) (f) to[bend left=18] (b) (b) -- (e);
		\draw (c) to[bend left=55] (g);
		\node at (2,6) (h) {};
	}\uugraph
		\draw (c) -- (f);
		\path[use as bounding box] (-2,-1) rectangle (2,11);
	\end{tikzpicture}
	\qquad\qquad\qquad
	\begin{tikzpicture}[scale=0.7]
		\uugraph
		\draw (c) to[bend right=30] (f);
		\draw[dotted] (c) -- (f);
		\tikzstyle{every path}=[color=red!70!black]
		\draw (a) to[bend left=12] (c) (a) to[bend left=24] (c) (a) to[bend right=12] (c);
		\draw (c) to[bend left=12] (b) (c) to[bend left=24] (b) (c) to[bend right=12] (b);
		\draw (b) to[bend left=12] (a) (b) to[bend left=24] (a) (b) to[bend right=12] (a);
		\draw (c) to[bend left=12] (f) (c) to[bend left=24] (f) (c) to[bend right=12] (f);
		\draw (d) to[bend left=12] (g) (d) to[bend right=24] (g) (d) to[bend right=12] (g);
		\draw (e) to[bend left=12] (g) (e) to[bend left=24] (g) (e) to[bend right=12] (g);
		\draw (f) to[bend left=12] (g) (f) to[bend left=24] (g) (f) to[bend right=12] (g);
		\draw (a) circle(1); \draw (b) circle(1); \draw (c) circle(1);
		\draw (d) circle(1); \draw (e) circle(1); \draw (f) circle(1);
		\draw (g) circle(1);
		\tikzstyle{every node}=[draw, shape=circle, minimum size=1pt, inner sep=0.75pt, fill=red]
		\node at (-1.01,0.18) {}; \node at (-1.01,-0.18) {}; \node at (-1.06,-0.36) {};
		\node at (1.01,0.18) {}; \node at (1.01,-0.18) {}; \node at (1.06,-0.36) {};
		\node at (-1.18,0.58) {}; \node at (-1.4,0.81) {}; \node at (-1.55,0.9) {};
		\node at (1.18,0.58) {}; \node at (1.4,0.81) {}; \node at (1.55,0.9) {};

		\node at (-0.17,2.99) {}; \node at (0.17,2.99) {}; \node at (-0.34,2.96) {};
		\node at (-0.57,1.19) {}; \node at (-0.8,1.41) {}; \node at (-0.89,1.55) {};
		\node at (0.57,1.19) {}; \node at (0.8,1.41) {}; \node at (0.89,1.55) {};
		\node at (-0.17,5.99) {}; \node at (0.17,5.99) {}; \node at (-0.34,5.96) {};
		\node at (-0.17,4.01) {}; \node at (0.17,4.01) {}; \node at (-0.34,4.04) {};

		\node at (-0.17,7.01) {}; \node at (0.17,7.01) {}; \node at (-0.34,7.04) {};
		\node at (-0.57,8.81) {}; \node at (-0.8,8.59) {}; \node at (-0.89,8.45) {};
		\node at (-0.57,8.81) {}; \node at (-0.8,8.59) {}; \node at (-0.89,8.45) {};
		\node at (0.57,8.81) {}; \node at (0.8,8.59) {}; \node at (0.89,8.45) {};
		\node at (1.57,9.1) {}; \node at (1.42,9.19) {}; \node at (1.19,9.41) {};
		\node at (-1.57,9.1) {}; \node at (-1.42,9.19) {}; \node at (-1.19,9.41) {};
	\end{tikzpicture}
	\end{center}
		\caption{(Definition~\ref{def:framing})
		A \emph{framing} of a \ppdg graph \((G, \mathcal{H})\): the graph is on the left, such that the \pdsg $\ca H$ is drawn with thick blue edges and the remaining edges of $E(G)\setminus E(H)$ are in green.
		The framing of \((G, \mathcal{H})\) on the right has the frame edges drawn in red; for every edge of $\ca H$ and for the chosen one connector edge between the two components of $H$, we get a framing triplet, and for every vertex of $\ca H$ a framing cycle.}
		\label{fig:framingex}
	\end{figure}

	As the last step in preparation for applying Courcelle's theorem we need to show that the framing construction does not considerably increase the treewidth:
	
	\begin{restatable}{lemma}{twbound}
		\label{lem:tw}\apxmark
		Let \(F\) be a framing of a \ppdg graph \((G, \mathcal{H})\), and $G^o=(G - E(H)) \cup \mathcal{H}^\times$.
		Then \(\operatorname{tw}(F) \in \mathcal{O}(16^{k + 1}\operatorname{tw}(G^o)/\log(\operatorname{tw}(G^o)))\), where \(k = \crgpd(G, \mathcal{H})\).
	\end{restatable}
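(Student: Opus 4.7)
The plan is to bound $\operatorname{tw}(F)$ by producing a drawing of $F$ with only $O(k)$ extra crossings and then comparing the resulting planarisation with $G^o$.

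First, I would take a crossing-optimal drawing $\ca G$ of $G$ extending $\ca H$, so $\crd(\ca G)=\crd(\ca H)+k$, and extend $\ca G$ to a drawing $\ca F$ of $F=D''\cup G$ by inserting every frame edge of $D''$ inside a thin $\varepsilon$-neighbourhood of $\ca H$: framing triplets along each edge of $\ca D$ and framing cycles in small disks around each vertex of $\ca H^\times$. As every framing gadget is planar and of constant size, the only new crossings relative to $\ca G$ occur when an edge of $G\setminus E(H)$ that already crosses $\ca H$ in $\ca G$ pierces the $O(1)$ frame edges it meets, giving $\crd(\ca F)\le\crd(\ca H)+O(k)$.

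Second, I would relate the planar graph $\tilde F := \ca F^\times$ to $G^o = (G-E(H))\cup \ca H^\times$. The graph $\tilde F$ is obtained from $G^o$ by (i) inflating $\ca H^\times$ into the planar frame $D''$ through constant-size gadgets tethered to at most two vertices of $G^o$, and (ii) subdividing $O(k)$ edges at the new crossing vertices. For step~(i), framing triplets act like parallel subdivisions of the edge they replace, and the framing cycle around a vertex $v$ can be threaded along the subtree of bags that already contains $v$ (exploiting the planar cyclic order of its neighbours), so treewidth grows only by a constant factor. For step~(ii), each new degree-$4$ vertex together with its four neighbours can be absorbed into a single bag of an existing tree decomposition, adding only $O(k)$ to the treewidth. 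Undoing the planarisation by smoothing the $O(k)$ crossing vertices of $\tilde F$ to recover $F$ contributes a further $O(k)$ additive term.

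The stated $16^{k+1}\operatorname{tw}(G^o)/\log(\operatorname{tw}(G^o))$ bound then arises by combining the previous steps with the polynomial treewidth-vs-grid-minor gap for general graphs: if $\operatorname{tw}(F)$ exceeded the claimed quantity, extracting a large grid minor in $F$ would, via the $O(k)$ smoothings and the local inflation step, yield a grid minor in $G^o$ violating its treewidth. The main obstacle is step~(i): framing cycles around high-degree vertices of $\ca H^\times$ do not fit into a single bag and must be woven through the subtree of bags containing the central vertex, and showing that this incurs only a constant-factor overhead is exactly where the planarity and $3$-connectedness of $D''$ are invoked. Amortising the $k$-dependence so that it remains confined to the $16^{k+1}$ factor — rather than compounding with the cycle-threading overhead — is the delicate technical point to make rigorous.
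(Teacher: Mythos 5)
Your first step already fails, and it is the step everything else rests on: the frame edges of a framing cannot, in general, be inserted into a crossing-optimal drawing $\ca G$ of $(G,\ca H)$ with only $\mathcal{O}(k)$ new crossings. The problem is the connector edges of \Cref{def:framing}. When $\ca H$ is disconnected, their framing triplets join two distinct components of $\ca H$ and therefore do not live in any $\varepsilon$-neighbourhood of $\ca H$; in an optimal drawing the two components they join may be separated by arbitrarily many crossing-free edges of $G-E(H)$ --- for instance by many nested cycles, which is exactly the phenomenon exhibited in \Cref{pro:baddualdiam} --- and by the Jordan curve theorem every routing of the connecting triplet must cross each of these cycles, so the number of new crossings is not bounded by any function of $k$. (In addition, edges of $E(G)\setminus E(H)$ attached to predrawn vertices will in general have to cross the framing cycles around their endpoints, and there can be arbitrarily many such edges.) Once $\crd(\ca F)\le\crd(\ca H)+\mathcal{O}(k)$ is gone, the subsequent bookkeeping --- subdividing ``$\mathcal{O}(k)$'' edges at new crossing vertices, absorbing them into single bags, smoothing ``$\mathcal{O}(k)$'' crossing vertices to recover $F$ --- collapses with it, and the point you flag as delicate (threading framing cycles through the decomposition) is neither the only nor the main obstacle.

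Your closing appeal to the grid-minor gap gestures at the route the actual proof takes, but the substance is missing, and the argument there runs in the grid-minor world from the start rather than via a drawing of $F$. The paper takes a largest $g\times g$ grid minor of $F$ and shows it survives, up to a small margin, into $G^o$: (i) a degree argument shows that no edge of the inner $(g-2)\times(g-2)$ subgrid can be realised by a path running inside a single framing cycle (after removing such a path its two endpoints jointly have at most five remaining neighbours, while two adjacent inner-grid vertices jointly need at least six), so every framing cycle can be contracted onto its central vertex without losing the inner subgrid; and (ii) a counting argument shows that fewer than $4^{k+1}$ inner-grid edges can require connector triplets, since otherwise some path of $G$ would have to cross more than $k$ nested closed curves of $\ca H^\times$, contradicting $\crgpd(G,\ca H)=k$ --- again via the Jordan curve theorem. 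Deleting these few edges still leaves a $(g-4^{k+1}-2)\times(g-4^{k+1}-2)$ grid minor of $G^o$, and the treewidth--grid-minor relations applied to $F$ and to $G^o$ yield the stated bound; this is precisely where the $16^{k+1}$ factor originates. In short, the connector-edge analysis that your proposal treats as a negligible local perturbation is the heart of the lemma, and a correct proof must pay a $k$-dependent (here exponential) price for it, not an additive $\mathcal{O}(k)$.
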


	\subsection{Phase II -- MSO\(_2\)-encoding}\label{sec:MSOenc}
	\noindent
	Our aim now is to prove key Lemma~\ref{lem:phaseIIi}. In closer detail, we are first going to show:
	\begin{restatable}{lemma}{MSOforobstruction}\label{lem:MSOforobstruction}\apxmark
		Let $\ca P_1=(G_1,\ca{H}_1)$ be a \ppdg graph where $\ca{H}_1$ is plane. There exists an MSO$_2$-formula~$\sigma$, depending on $\ca P_1$, such that the following is true:
		\begin{itemize}\item
		For any \ppdg graph~$\ca P_2=(G_2,\ca{H}_2)$ with plane $\ca{H}_2$ and any framing $\bar G_2$ of $\ca P_2$ we have that $\bar G_2\models\sigma$,
		if and only if some subdivision of $\ca P_1$ is a \ppdg~subgraph~of~$\ca P_2$.
		\end{itemize}
	\end{restatable}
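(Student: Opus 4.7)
The plan is to build $\sigma$ in three layers: (a)~a standard MSO$_2$-encoding of topological containment of the pattern $G_1$ inside the abstract graph underlying $\bar G_2$, (b)~a restriction forcing the image of $H_1$ to lie inside $H_2$, and (c)~a check that the combinatorial embedding of $H_1$ induced by the topological embedding agrees with $\ca H_1$.

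For layer (a), since $\ca P_1$ is a fixed parameter, I would existentially quantify a vertex variable $x_v$ for every $v\in V(G_1)$ and an edge-set variable $P_e$ for every $e=uv\in E(G_1)$, and require that each $P_e$ forms a simple $x_u$-$x_v$ path in $\bar G_2$ and that the collection is internally vertex-disjoint from the other $P_{e'}$'s and from the remaining $x_{v'}$'s; all of this is easily expressible in MSO$_2$ because $|V(G_1)|$ and $|E(G_1)|$ are constants. For layer (b), I would assume (as is customary when invoking Courcelle's theorem) that $\bar G_2$ comes equipped with the unary predicates naturally arising from Definition~\ref{def:framing}, marking the vertices/edges of $G_2$, the vertices/edges of $H_2$, and the frame elements; these predicates are trivially available when $\bar G_2$ is built. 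Then $\sigma$ restricts each $x_v$ to $V(G_2)$, each $P_e$ to $E(G_2)$, and further to $V(H_2)$ or $E(H_2)$ whenever $v\in V(H_1)$ or $e\in E(H_1)$, respectively.

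Layer (c) is where the framing earns its keep. The key observation is that the frame $\ca D''$ from Step~3 of Definition~\ref{def:framing} is a $3$-connected planar graph, so by Whitney's theorem its planar embedding is unique up to reflection; consequently the cyclic order of $\ca H_2$-edges around each vertex $v\in V(\ca H_2)$ is a purely combinatorial invariant of $\bar G_2$ and thus MSO$_2$-readable. More concretely, for every such $v$ the framing cycle around $v$ is identifiable as the induced cycle on the frame-neighbours of $v$; it has length $3\deg_{\ca D}(v)$, and the three vertices belonging to one framing triplet appear on it as a consecutive arc, so the cyclic order of triplets along this cycle encodes exactly the cyclic order of $\ca H_2$-edges at $v$. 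Because $\ca P_1$ is fixed, $\sigma$ can hard-code, for every $v\in V(H_1)$ with $H_1$-incident edges $e_1,\dots,e_d$ listed in the cyclic order prescribed by $\ca H_1$, the constraint that the initial edges of the paths $P_{e_1},\dots,P_{e_d}$ at $x_v$ fall into consecutive triplet-blocks around $x_v$ in the matching cyclic order, with the orientation chosen uniformly across all branch vertices of a single component of $H_1$ to absorb the global reflection freedom.

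The main obstacle I anticipate is the disconnected case: when $\ca H_1$ is disconnected, equivalence of drawings additionally demands matching the face-nesting pattern among components, as depicted in \Cref{fig:disconnequiv}. I would handle this by exploiting that the framing of $\ca P_2$ is connected by construction (Step~1 of Definition~\ref{def:framing}), so in the $3$-connected planar frame every face is uniquely determined by any single boundary edge, and its full boundary walk is MSO$_2$-definable from that edge via the local cyclic-order data recovered in layer (c). Since the number of components of $\ca H_1$ is a constant, $\sigma$ can afford to existentially guess, for each ordered pair $(K,L)$ of components of $\ca H_1$, a single frame edge on the boundary of the face of the image of $K$ that is supposed to contain the image of $L$, and then verify the prescribed nesting against the hard-coded data of $\ca H_1$ by a bounded case analysis. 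Conjoining all three layers yields the required MSO$_2$-formula $\sigma$ depending only on $\ca P_1$.
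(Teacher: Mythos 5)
Your layers (a)--(c) for the connected case are essentially sound and run parallel to what the paper does implicitly (the $3$-connected frame makes the rotation of $\ca H_2$ MSO$_2$-readable via framing cycles and triplets), but the disconnected case is where your argument has a genuine gap, and it is precisely the part the paper's proof is built around. First, your reflection handling is wrong as stated: equivalence of drawings is witnessed by a single homeomorphism of $\mathbb{R}^2$, which is globally orientation-preserving or globally orientation-reversing, so the mirror freedom is one bit for the whole drawing, not one bit per component of $H_1$. Choosing "the orientation uniformly across all branch vertices of a single component" admits embeddings in which one component's image is reflected relative to another's; this is exactly the "in which orientation" datum that the paper's definition of equivalence for disconnected drawings (cf.\ \Cref{fig:disconnequiv}) insists on, and such embeddings are false positives for $\sigma$. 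Second, your nesting check is not actually an MSO$_2$ construction: the faces of the image of a component $K$ are unions of faces of $\ca H_2$, not faces of $\ca H_2$ or of the frame, so "guess a frame edge on the boundary of the face of the image of $K$ that contains the image of $L$ and trace its boundary walk" presupposes that facial walks of an existentially quantified sub-drawing, and topological containment of another image in such a face, are MSO$_2$-definable from local rotation data. That is the technical crux of the lemma, and you give no mechanism for it (face tracing is an inherently sequential skip-over-deleted-edges process, and region containment is not a statement about the abstract graph without further work).

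The paper avoids all explicit face reasoning by a different device: it enumerates the finitely many \emph{extended framings} of $\ca P_1$ (framings of $\ca P_1$ augmented by up to $2c-2$ connector vertices placed in rich faces, \Cref{lem:finframings}) and expresses that one of them occurs as a \emph{framing topological minor} of $\bar G_2$ (\Cref{def:framingtopol}), i.e., the frame triplets and cycles of $\bar G_1$ must map into unions of triplets and cycles of $\bar G_2$ along the image paths in $H_2$ plus the connector edges $N_2$. The correctness then rests on \Cref{lem:restrtopol}: because the embedded frame of $\bar G_1$ is a $3$-connected planar graph whose subdivision sits inside the (uniquely embeddable) frame of $\bar G_2$, the relative positions and relative orientations of all components of the image are pinned down simultaneously, so equivalence of $\ca H_1$ with the restriction of $\ca H_2$ follows without ever defining faces of the image in MSO$_2$; the connector vertices are what absorb the branching (Steiner-type) structure through which the image components may be joined inside $H_2+N_2$. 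If you want to salvage your route, you would need (i) to make the mirror choice a single global disjunction over $\ca H_1$ and its reflection, coupled consistently with the nesting data, and (ii) an actual MSO$_2$ encoding of "image of $L$ lies in this face of the image of $K$", which is likely to end up reinventing something like the paper's frame-containment argument.
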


	\smallskip
	To combinatorially characterise the \ppdg subgraph containment, we use Definition~\ref{def:framing} and the following concept of a ``framing-aware'' minor.
	Considering framings $\bar G_1$ of $(G_1,\ca{H}_1)$ and $\bar G_2$ of $(G_2,\ca{H}_2)$, we say that $\bar G_1$ is a {\em framing topological minor} of $\bar G_2$ if there is a topological-minor embedding of $\bar G_1$ into $\bar G_2$ which additionally satisfies
	\begin{itemize}
		\item every edge of $G_1$ (resp., of $H_1$) is mapped into a path of $G_2$ (resp., of $H_2$),
		\item every framing cycle in $\bar G_1$ is mapped into a corresponding framing cycle in $\bar G_2$,
		\item whenever an edge $f\in E(H_1)$ is mapped into a path $P_f\subseteq H_2$, the framing triplet of $f$ in $\bar G_1$ is embedded (as three internally-disjoint paths) in the union of the framing cycles and triplets of the internal vertices and edges of $P_f$ in~$\bar G_2$, and
		\item the analogous condition (as the previous point) applies also to framing triplets of the connector edges of $\bar G_1$, which are embedded in~$\bar G_2$.
	\end{itemize}
	See Figure~\ref{fig:framingminor} for a natural illustration of this concept.

	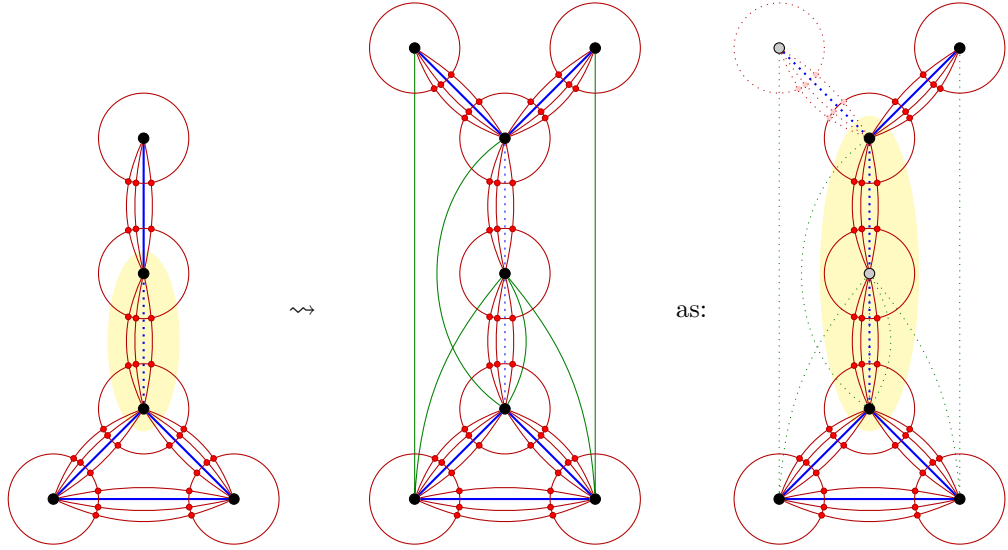
\begin{figure}[t]
	\begin{center}
	\begin{tikzpicture}[scale=0.6]
		\draw[draw=none,fill=white!70!yellow] (0,3.5) ellipse (0.8 and 2.0);
		\tikzstyle{every node}=[draw, shape=circle, minimum size=2pt,inner sep=1.4pt, fill=black]
		\tikzstyle{every path}=[color=black]
		\node at (-2,0) (a) {}; \node at (2,0) (b) {}; \node at (0,2) (c) {};
		\node at (0,5) (f) {}; \node at (0,8) (g) {};
		\tikzstyle{every path}=[color=blue,thick]
		\draw (a) -- (b) -- (c) -- (a);
		\draw (g) -- (f);
		\draw[dotted] (c) -- (f);
		\tikzstyle{every path}=[color=red!70!black]
		\draw (a) to[bend left=12] (c) (a) to[bend left=24] (c) (a) to[bend right=12] (c);
		\draw (c) to[bend left=12] (b) (c) to[bend left=24] (b) (c) to[bend right=12] (b);
		\draw (b) to[bend left=12] (a) (b) to[bend left=24] (a) (b) to[bend right=12] (a);
		\draw (c) to[bend left=12] (f) (c) to[bend left=24] (f) (c) to[bend right=12] (f);
		\draw (f) to[bend left=12] (g) (f) to[bend left=24] (g) (f) to[bend right=12] (g);
		\draw (a) circle(1); \draw (b) circle(1); \draw (c) circle(1);
		\draw (f) circle(1); \draw (g) circle(1);
		\tikzstyle{every node}=[draw, shape=circle, minimum size=1pt, inner sep=0.75pt, fill=red]
		\node at (-1.01,0.18) {}; \node at (-1.01,-0.18) {}; \node at (-1.06,-0.36) {};
		\node at (1.01,0.18) {}; \node at (1.01,-0.18) {}; \node at (1.06,-0.36) {};
		\node at (-1.18,0.58) {}; \node at (-1.4,0.81) {}; \node at (-1.55,0.9) {};
		\node at (1.18,0.58) {}; \node at (1.4,0.81) {}; \node at (1.55,0.9) {};

		\node at (-0.17,2.99) {}; \node at (0.17,2.99) {}; \node at (-0.34,2.96) {};
		\node at (-0.57,1.19) {}; \node at (-0.8,1.41) {}; \node at (-0.89,1.55) {};
		\node at (0.57,1.19) {}; \node at (0.8,1.41) {}; \node at (0.89,1.55) {};
		\node at (-0.17,5.99) {}; \node at (0.17,5.99) {}; \node at (-0.34,5.96) {};
		\node at (-0.17,4.01) {}; \node at (0.17,4.01) {}; \node at (-0.34,4.04) {};
		\node at (-0.17,7.01) {}; \node at (0.17,7.01) {}; \node at (-0.34,7.04) {};
	\end{tikzpicture}
	\raise20ex\hbox{\large$\leadsto$}\qquad
	\begin{tikzpicture}[scale=0.6]
		\tikzstyle{every node}=[draw, shape=circle, minimum size=2pt,inner sep=1.4pt, fill=black]
		\tikzstyle{every path}=[color=black]
		\node at (-2,0) (a) {}; \node at (2,0) (b) {}; \node at (0,2) (c) {};
		\node at (-2,10) (d) {}; \node at (2,10) (e) {}; \node at (0,5) (f) {};
		\node at (0,8) (g) {};
		\tikzstyle{every path}=[color=blue,thick]
		\draw (a) -- (b) -- (c) -- (a);
		\draw (d) -- (g) -- (e);
		\draw[dotted,thin] (g) -- (f) -- (c);
		\tikzstyle{every path}=[color=green!50!black]
		\draw (d) -- (a) to[bend left=18] (f) (f) to[bend left=18] (b) (b) -- (e);
		\draw (c) to[bend left=55] (g);
		\draw (c) to[bend right=30] (f);
		\tikzstyle{every path}=[color=red!70!black]
		\draw (a) to[bend left=12] (c) (a) to[bend left=24] (c) (a) to[bend right=12] (c);
		\draw (c) to[bend left=12] (b) (c) to[bend left=24] (b) (c) to[bend right=12] (b);
		\draw (b) to[bend left=12] (a) (b) to[bend left=24] (a) (b) to[bend right=12] (a);
		\draw (c) to[bend left=12] (f) (c) to[bend left=24] (f) (c) to[bend right=12] (f);
		\draw (d) to[bend left=12] (g) (d) to[bend right=24] (g) (d) to[bend right=12] (g);
		\draw (e) to[bend left=12] (g) (e) to[bend left=24] (g) (e) to[bend right=12] (g);
		\draw (f) to[bend left=12] (g) (f) to[bend left=24] (g) (f) to[bend right=12] (g);
		\draw (a) circle(1); \draw (b) circle(1); \draw (c) circle(1);
		\draw (d) circle(1); \draw (e) circle(1); \draw (f) circle(1);
		\draw (g) circle(1);
		\tikzstyle{every node}=[draw, shape=circle, minimum size=1pt, inner sep=0.75pt, fill=red]
		\node at (-1.01,0.18) {}; \node at (-1.01,-0.18) {}; \node at (-1.06,-0.36) {};
		\node at (1.01,0.18) {}; \node at (1.01,-0.18) {}; \node at (1.06,-0.36) {};
		\node at (-1.18,0.58) {}; \node at (-1.4,0.81) {}; \node at (-1.55,0.9) {};
		\node at (1.18,0.58) {}; \node at (1.4,0.81) {}; \node at (1.55,0.9) {};

		\node at (-0.17,2.99) {}; \node at (0.17,2.99) {}; \node at (-0.34,2.96) {};
		\node at (-0.57,1.19) {}; \node at (-0.8,1.41) {}; \node at (-0.89,1.55) {};
		\node at (0.57,1.19) {}; \node at (0.8,1.41) {}; \node at (0.89,1.55) {};
		\node at (-0.17,5.99) {}; \node at (0.17,5.99) {}; \node at (-0.34,5.96) {};
		\node at (-0.17,4.01) {}; \node at (0.17,4.01) {}; \node at (-0.34,4.04) {};

		\node at (-0.17,7.01) {}; \node at (0.17,7.01) {}; \node at (-0.34,7.04) {};
		\node at (-0.57,8.81) {}; \node at (-0.8,8.59) {}; \node at (-0.89,8.45) {};
		\node at (0.57,8.81) {}; \node at (0.8,8.59) {}; \node at (0.89,8.45) {};
		\node at (1.57,9.1) {}; \node at (1.42,9.19) {}; \node at (1.19,9.41) {};
		\node at (-1.57,9.1) {}; \node at (-1.42,9.19) {}; \node at (-1.19,9.41) {};
	\end{tikzpicture}
	\quad\raise20ex\hbox{as:}\quad
	\begin{tikzpicture}[scale=0.6]
		\draw[draw=none,fill=white!70!yellow] (0,5) ellipse (1.1 and 3.5);
		\tikzstyle{every node}=[draw, shape=circle, minimum size=2pt,inner sep=1.4pt, fill=black]
		\tikzstyle{every path}=[color=black]
		\node at (-2,0) (a) {}; \node at (2,0) (b) {}; \node at (0,2) (c) {};
		\node[fill=white!80!black] at (-2,10) (d) {}; \node at (2,10) (e) {};
		\node[fill=white!80!black] at (0,5) (f) {};
		\node at (0,8) (g) {};
		\tikzstyle{every path}=[color=blue,thick]
		\draw (a) -- (b) -- (c) -- (a);
		\draw[dotted] (d) -- (g) -- (f) -- (c);
		\draw (g) -- (e);
		\tikzstyle{every path}=[color=green!50!black,dotted]
		\draw (d) -- (a) to[bend left=18] (f) (f) to[bend left=18] (b) (b) -- (e);
		\draw (c) to[bend left=55] (g);
		\draw (c) to[bend right=30] (f);
		\draw[dotted] (c) -- (f);
		\tikzstyle{every path}=[color=red!70!black]
		\draw (a) to[bend left=12] (c) (a) to[bend left=24] (c) (a) to[bend right=12] (c);
		\draw (c) to[bend left=12] (b) (c) to[bend left=24] (b) (c) to[bend right=12] (b);
		\draw (b) to[bend left=12] (a) (b) to[bend left=24] (a) (b) to[bend right=12] (a);
		\draw (c) to[bend left=12] (f) (c) to[bend left=24] (f) (c) to[bend right=12] (f);
		\draw[dotted] (d) to[bend left=12] (g) (d) to[bend right=24] (g) (d) to[bend right=12] (g);
		\draw (e) to[bend left=12] (g) (e) to[bend left=24] (g) (e) to[bend right=12] (g);
		\draw (f) to[bend left=12] (g) (f) to[bend left=24] (g) (f) to[bend right=12] (g);
		\draw (a) circle(1); \draw (b) circle(1); \draw (c) circle(1);
		\draw[dotted] (d) circle(1); \draw (e) circle(1); \draw (f) circle(1);
		\draw (g) circle(1);
		\tikzstyle{every node}=[draw, shape=circle, minimum size=1pt, inner sep=0.75pt, fill=red]
		\node at (-1.01,0.18) {}; \node at (-1.01,-0.18) {}; \node at (-1.06,-0.36) {};
		\node at (1.01,0.18) {}; \node at (1.01,-0.18) {}; \node at (1.06,-0.36) {};
		\node at (-1.18,0.58) {}; \node at (-1.4,0.81) {}; \node at (-1.55,0.9) {};
		\node at (1.18,0.58) {}; \node at (1.4,0.81) {}; \node at (1.55,0.9) {};

		\node at (-0.17,2.99) {}; \node at (0.17,2.99) {}; \node at (-0.34,2.96) {};
		\node at (-0.57,1.19) {}; \node at (-0.8,1.41) {}; \node at (-0.89,1.55) {};
		\node at (0.57,1.19) {}; \node at (0.8,1.41) {}; \node at (0.89,1.55) {};
		\node at (-0.17,5.99) {}; \node at (0.17,5.99) {}; \node at (-0.34,5.96) {};
		\node at (-0.17,4.01) {}; \node at (0.17,4.01) {}; \node at (-0.34,4.04) {};

		\node at (-0.17,7.01) {}; \node at (0.17,7.01) {}; \node at (-0.34,7.04) {};
		\node at (0.57,8.81) {}; \node at (0.8,8.59) {}; \node at (0.89,8.45) {};
		\node at (1.57,9.1) {}; \node at (1.42,9.19) {}; \node at (1.19,9.41) {};
		\tikzstyle{every node}=[draw,dotted, shape=circle, minimum size=1pt, inner sep=0.75pt, fill=white!80!red]
		\node at (-0.57,8.81) {}; \node at (-0.8,8.59) {}; \node at (-0.89,8.45) {};
		\node at (-1.57,9.1) {}; \node at (-1.42,9.19) {}; \node at (-1.19,9.41) {};
	\end{tikzpicture}
	\end{center}
		\caption{An illustration of the framing topological minor relation; the framing $\bar G_1$ (of the $5$-vertex \ppdg graph $(G_1,\ca H_1)$) on the left is embedded in the framing $\bar G_2$ (of the $7$-vertex graph $(G_2,\ca H_2)$) in the middle,
		and this embedding is emphasised as a topological minor in the picture on the right.
		Notice that the framing triplet in $\bar G_1$ highlighted in the left picture with yellow background is mapped (as three internally disjoint red paths) into a union of two framing triplets plus the intermediate framing cycle in $\bar G_2$, as highlighted with yellow background in the picture on the right.
		}
		\label{fig:framingminor}
	\end{figure}

	However, to state the desired characterisation we still need to technically generalise Definition~\ref{def:framing} to an {\em extended framing} of a \ppdg graph $(G,\ca H)$ which, informally, allows us to use possible additional connector vertices and arbitrary connector edges between the components of $\ca H$.
	\ifshort See the preprint paper for all details.\fi
	\iflong See the Appendix for all details.\fi
	\begin{restatable}{lemma}{encodeintopol}\label{lem:restrtopol}\apxmark
		Let $\ca P_1=(G_1,\ca{H}_1)$ and $\ca P_2=(G_2,\ca{H}_2)$ be \ppdg graphs where $\ca{H}_1$ and $\ca{H}_2$ are plane.
		Let $\bar G_2$ be a framing of $\ca P_2$.
		Then some subdivision of $\ca P_1$ is a \ppdg subgraph of $\ca P_2$, if and only if there exists an extended framing $\bar G_1$ of $\ca P_1$ such that $\bar G_1$ is a restricted topological minor of~$\bar G_2$.
	\end{restatable}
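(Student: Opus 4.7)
We argue both directions, relying on the fact that the frame produced in Steps 1--3 of Definition~\ref{def:framing} is a $3$-connected planar graph which, by Whitney's uniqueness theorem, combinatorially determines the planar embedding of $\ca H$ together with the facial structure of each of its connected components. The connector edges added in Step 1 serve as the combinatorial witness of how the connected components of $\ca H$ are nested inside each other's faces, i.e.\ exactly the extra information beyond rotation systems and outer face needed to pin down equivalence for disconnected partial drawings (recall \Cref{fig:disconnequiv}).

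For the forward direction, suppose that a subdivision $\ca P_1'$ of $\ca P_1$ is a \ppd subgraph of $\ca P_2$. The inclusion $G_1'\subseteq G_2$ yields a topological-minor embedding $\varphi$ of $G_1$ into $G_2$ that already maps every edge of $H_1$ to a path in $H_2$, and by the \ppd subgraph condition the restriction of $\ca H_2$ to $\varphi(H_1)$ is equivalent to $\ca H_1$. I would use this equivalence to construct an extended framing $\bar G_1$ of $\ca P_1$ whose connector edges are \emph{pulled back} from the connector edges chosen in $\bar G_2$: whenever two components of $\varphi(H_1)$ lie in the same face of $\ca H_2$ via a connector path of $\bar G_2$, introduce matching connector vertices/edges in $\bar G_1$ along that path. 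After this choice, each framing triplet of an $H_1$-edge $f$ embeds, as three internally disjoint paths, into the union of the triplets and intermediate framing cycles along $\varphi(f)$, each framing cycle of $\bar G_1$ embeds in the corresponding framing cycle of $\bar G_2$, and the analogous statement for connector triplets holds by construction. This yields the required restricted topological minor.

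For the backward direction, suppose $\bar G_1$ is a restricted topological minor of $\bar G_2$ for some extended framing $\bar G_1$ of $\ca P_1$. The underlying map embeds $G_1$ as a subdivision inside $G_2$, and the first bullet of the definition of restricted topological minor guarantees that edges of $H_1$ map to paths in $H_2$, so the image of $G_1'$ is a subgraph of $G_2$ with $H_1'\subseteq H_2$. It remains to verify that the restriction of $\ca H_2$ to this image is equivalent, as a partial drawing, to $\ca H_1$. Here I would use the $3$-connected-planar rigidity of the frame inside $\bar G_2$: each framing cycle around a vertex witnesses the rotation at that vertex in $\ca H_2$, and each framing triplet witnesses the two local sides of the corresponding edge. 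The constraints that framing cycles map to framing cycles and that triplets embed into triplets-plus-intermediate-cycles force the rotation system of $\varphi(H_1)$ to agree with that of $\ca H_1$. Finally, matching of the connector triplets of $\bar G_1$ inside $\bar G_2$ reproduces the nesting of components of $\ca H_1$ inside the faces of $\ca H_2$, completing the equivalence.

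The main obstacle, and the very reason the notion of \emph{extended} framing is introduced rather than just framing, is this last point about face-nesting between disconnected components. Matching framing cycles and framing triplets alone reproduces rotation systems and the intra-component facial structure, but it is blind to which face of one component contains another. Since a fixed choice of connector edges in a plain framing of $\ca P_1$ may be incompatible with any pulled-back choice from $\bar G_2$, we need the freedom, granted by extended framings, to add auxiliary connector vertices and to route connectors along arbitrary face-to-face paths; this is exactly what makes the equivalence in both directions go through.
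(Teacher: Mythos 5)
Your sketch follows essentially the same route as the paper's proof: in the forward direction you build the extended framing of $\ca P_1$ by pulling back the connector structure from $\bar G_2$ (the paper does this via an inclusion-minimal connected subgraph of $H_2$ plus connector edges, whose tree structure is what certifies that at most $2c-2$ connector vertices suffice and hence that the pulled-back object really is an extended framing --- a bookkeeping point you leave implicit), and in the backward direction both arguments rest on the rigidity (unique embedding) of the $3$-connected planar frame to transfer the drawing equivalence, including the nesting of components. No substantive divergence or gap beyond that level of detail.
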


	We now finish a proof sketch of Lemma~\ref{lem:MSOforobstruction} easily. 
	Let $\cf F$ be the finite set of all distinct extended framings of $\ca P_1$.
	Using Lemma~\ref{lem:restrtopol}, we may write the formula $\sigma\equiv\bigvee_{\bar G_1\in\cf F}\sigma[\bar G_1]$ where $\bar G_2\models\sigma[\bar G_1]$ routinely expresses that $\bar G_1$ is a framing topological minor of~$\bar G_2$
	(this description uses auxiliary precomputed labels distinguishing the types of edges in~$\bar G_2$).

	We also need to address the other kind of obstruction in Theorem~\ref{thm:PEG-relaxed-obstructions} with the following:
	\begin{restatable}{lemma}{encodealternating}\label{lem:MSOalternating}\apxmark
		There exists an MSO$_2$-formula~$\tau$ such that the following is true:
		\begin{itemize}\item
		For any \ppdg graph~$\ca P_2=(G_2,\ca{H}_2)$ and any framing $\bar G_2$ of $\ca P_2$ we have that $\bar G_2\models\tau$,
		if and only if there exists an alternating chain in $\ca P_2$.
		\end{itemize}
	\end{restatable}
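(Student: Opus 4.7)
Recall (from the definition deferred to the appendix, cf.\ \cite{JelinekKR13}) that an \emph{alternating chain} in $\ca P_2=(G_2,\ca H_2)$ consists of a sequence of faces/cycles of $\ca H_2$ together with a matched sequence of non-predrawn edges (i.e.\ edges of $E(G_2)\setminus E(H_2)$, or paths through non-predrawn pieces) joining prescribed attachment vertices in such a way that the forced sides on which consecutive edges must be routed alternate along the chain, producing a non-planarity obstruction of unbounded length. The plan is to exploit the fact that, although the family of alternating chains is infinite, each chain is the ``concatenation'' of finitely many \emph{local patterns} (one per link of the chain), each of bounded size, and these local patterns can be recognised by a fixed MSO$_2$-formula evaluated in the framing $\bar G_2$.

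The formula $\tau$ will existentially quantify (i) a vertex set $V_{\mathrm{ch}}\subseteq V(\bar G_2)$ naming the attachment points of the chain, (ii) an edge set $E_{\mathrm{ch}}\subseteq E(G_2)\setminus E(H_2)$ (or, more precisely, a set of non-predrawn $H$-avoiding paths) representing the ``rungs'' of the chain, and (iii) an auxiliary edge set $L$ that encodes a linear order on the rungs; MSO$_2$ can express that $L$ is a Hamiltonian-style ordering of the rungs via standard acyclicity/connectivity predicates. The first-order body of $\tau$ then asserts, for each pair of consecutive rungs in the order given by $L$, that the associated local pattern is a valid link of an alternating chain.

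The main technical obstacle is to express, purely in MSO$_2$ over the abstract graph $\bar G_2$, the geometric statement ``the two endpoints of a rung are attached on opposite sides of the separating cycle of the skeleton.'' This is exactly where the framing pays off: by Definition~\ref{def:framing} the frame subgraph $\ca D''\subseteq\bar G_2$ is a $3$-connected planar graph, hence by Whitney's theorem its combinatorial embedding is unique up to reflection, and in particular the cyclic rotation around every vertex of $\ca H_2^\times$ is recorded by the corresponding framing cycle and the three parallel paths of every framing triplet. Thus ``side of a cycle $C$ of $\ca H_2$'' becomes an MSO$_2$-definable relation: one identifies the two frame-regions bounded by $C$ in $\ca D''$ via connectivity of frame edges in $\ca D''\setminus C$, and one checks via the framing cycle at each attachment vertex into which of these two regions the relevant non-predrawn edge leaves. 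With this in hand, ``alternation'' is simply the requirement that successive rungs (in the order $L$) enter their common separating cycle from opposite regions, which is a fixed first-order condition per consecutive pair.

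Finally, the length of an alternating chain is not bounded, but the description above only ever quantifies over \emph{sets} plus uses a linear order $L$ on rungs, which is exactly the expressive power MSO$_2$ provides. Correctness in both directions is then routine: any alternating chain present in $\ca P_2$ directly yields witnessing sets $V_{\mathrm{ch}}, E_{\mathrm{ch}}, L$ making $\tau$ true on $\bar G_2$; conversely, any satisfying assignment of $\tau$ produces, by inverting the frame-side decoding above, a geometric alternating chain in $\ca P_2$.
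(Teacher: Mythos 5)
Your sketch targets the wrong condition and, more importantly, its central mechanism does not exist in MSO$_2$. In the paper's definition an alternating chain has a very specific shape: the predrawn part is a single cycle $C$ plus two predrawn isolated vertices $s,t$, and the non-predrawn part is a sequence of paths $P_1,\ldots,P_a$ with ends on $C$ such that consecutive paths have interleaving ends on $C$, with $s\in V(P_1)$, $t\in V(P_a)$, and $s,t$ predrawn in the same face of $C$ iff $a$ is even. The ``alternation'' is thus a purely combinatorial interleaving condition on the cyclic order of $C$ (expressible by a connectivity test on $C$ minus two vertices) and needs no frame at all; the only genuinely geometric ingredient is the face placement of $s$ and $t$ relative to $C$ together with the \emph{parity} of $a$. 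Your proposal never mentions $s$, $t$, or the parity condition, and instead tries to verify that consecutive rungs are ``routed on opposite sides'', which is not part of the definition (the rungs are not predrawn, so no side is prescribed); dropping the $s,t$/parity condition makes the encoded pattern no obstruction at all.

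The deeper gap is step (iii): you propose to existentially quantify ``an auxiliary edge set $L$ that encodes a linear order on the rungs''. MSO$_2$ variables range over subsets of vertices and edges of $\bar G_2$; you cannot quantify over a family of paths, let alone over an ordering of such a family, and there is no reason the graph contains edges that could serve as an order-encoding gadget between consecutive (vertex-disjoint!) rungs. This is exactly the obstacle the paper flags (``$a$ is not bounded and we do not have a separate variable for each of the paths'') and circumvents with a different device: guess only the two edge sets $F_A,F_B$ (intended as the unions of the odd- and even-indexed paths), guess the first path $F_1\ni s$ and the last path $F_2\ni t$ (placing $F_2$ in $F_A$ or $F_B$ according to the desired parity, which is how parity is actually enforced), and then use a \emph{universal} quantification over edge sets $F_D$ to assert that every ``cut'' containing $F_1$ but not $F_2$ is crossed by an alternating pair of paths, one inside and one outside, one from the $A$-side and one from the $B$-side. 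This amounts to a connectivity check in the implicit auxiliary graph of paths under the alternation relation and never represents the order explicitly. The paper also obtains $C$, $s$, $t$ and their face relation by reusing the formula of Lemma~\ref{lem:MSOforobstruction} for the fixed pattern ($K_3$ plus two predrawn vertices), rather than re-deriving sides of cycles from scratch. Without a replacement for your order-guessing step and without the $s,t$/parity check, the proposed formula cannot be completed as described.
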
\medskip

	Now we can sketch a proof of the key Lemma~\ref{lem:phaseIIi} which we reformulate slightly for clarity:%
	\begin{restatable}[Lemma~\ref{lem:phaseIIi}]{lemma}{MSOphaseII} \label{lem:phaseII}
		For every $k\geq0$ there is an MSO\(_2\)-formula $\psi_k$ such that the following holds.
		Given a \ppdg graph $\ca P$, with some edges of $\ca P$ marked as `uncrossable', one can in polynomial time construct a graph $G'$ such that $G'\models\psi_k$ if and only if there exists a \crconf{$k$} drawing of~$\ca P$.
	\end{restatable}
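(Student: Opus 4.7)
The plan is to adapt the MSO-encoding strategy of Grohe sketched in Section~\ref{sec:Grohes} to the \ppd setting by replacing the Kuratowski obstructions with the \ppd planarity obstructions of Theorem~\ref{thm:PEG-relaxed-obstructions} and working on a framing (Definition~\ref{def:framing}) to combinatorially transport the information about the \pdsg.

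First I would construct $G'$ as follows. Planarise $\mathcal{H}$ (replacing it by $\mathcal{H}^{\times}$ and adjusting $G$ accordingly). Then subdivide each crossable edge of $G$ exactly $2k$ times; uncrossable edges are left untouched. Call the result $\mathcal{P}^{\star}$; its set of \conf drawings is in natural bijection with that of $\mathcal{P}$, since subdividing a crossable edge does not change crossing numbers nor equivalence classes of drawings. Finally, take $G'$ to be a framing of $\mathcal{P}^{\star}$ equipped with auxiliary unary/binary labels (encoded via extra vertex/edge sets and indicator edges, as is standard in MSO encodings) marking subdivision vertices together with the crossable edge they belong to, framing cycles and frame triplets, uncrossable edges, and the distinction between edges of $H$ and of $G\setminus E(H)$. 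All of this is polynomial in the input size with $k$ fixed.

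The formula $\psi_k$ would then have the overall shape
\[
\psi_k \;\equiv\; \exists\, X_1, \ldots, X_k \;\Bigl(\mathrm{valid}(X_1, \ldots, X_k) \,\wedge\, \mathrm{plan}(X_1, \ldots, X_k)\Bigr),
\]
where each $X_i$ is a (possibly empty) unordered pair of subdivision vertices. The predicate $\mathrm{valid}$ says that any nonempty $X_i = \{u_i, v_i\}$ consists of two subdivision vertices on two distinct crossable edges, and that no vertex belongs to two pairs. Conceptually each $X_i$ represents a single new crossing produced by virtually identifying $u_i$ with $v_i$ into a degree-$4$ vertex. The predicate $\mathrm{plan}$ states that the \ppdg graph $\mathcal{P}'$ obtained from $\mathcal{P}^{\star}$ by these identifications is \ppd planar; by Theorem~\ref{thm:PEG-relaxed-obstructions} this is the conjunction of: no alternating chain exists in $\mathcal{P}'$, and no subdivision of any member of the finite family $\cf{K}$ is a \ppdg subgraph of $\mathcal{P}'$. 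For each member of $\cf{K}$ separately we invoke the formula provided by Lemma~\ref{lem:MSOforobstruction}, and Lemma~\ref{lem:MSOalternating} handles the alternating-chain part.

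The principal technical obstacle is that Lemmas~\ref{lem:MSOforobstruction} and~\ref{lem:MSOalternating} deliver formulas that evaluate on a framing of $\mathcal{P}'$, while our formula must evaluate on the framing $G'$ of $\mathcal{P}^{\star}$. I would address this by rewriting the path/subgraph quantifications internal to those formulas so that any traversal may freely jump between the two endpoints of any guessed pair $X_i$, thereby simulating the identification without altering $G'$. At the framing level, the two framing cycles around $u_i$ and $v_i$ and the four half-triplets incident to them must be recognised as collectively playing the role of the framing cycle plus triplets of the new degree-$4$ crossing vertex in $\mathcal{P}'$; since the pairs $X_1, \ldots, X_k$ are available as free parameters of the formula and their number is bounded by the fixed $k$, this local substitution is a finite MSO$_2$ pattern, carried out by case analysis around each $X_i$. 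Combining the adapted obstruction formulas with the existential guess of the pairs yields $\psi_k$, and correctness follows directly from Theorem~\ref{thm:PEG-relaxed-obstructions}: $G' \models \psi_k$ iff one can select at most $k$ crossings rendering $\mathcal{P}^{\star}$ virtually \ppd planar, iff $\mathcal{P}$ admits a \conf drawing.
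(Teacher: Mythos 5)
Your overall architecture matches the paper's proof: planarise $\ca H$, subdivide the crossable edges to create auxiliary vertices, existentially guess $k$ pairs of auxiliary vertices whose identification represents the crossings, build $G'$ as a framing of the subdivided instance, and express \ppd planarity of the identified object via Theorem~\ref{thm:PEG-relaxed-obstructions} through Lemmas~\ref{lem:MSOforobstruction} and~\ref{lem:MSOalternating}, with a syntactic reinterpretation (your ``jumping'' between the two vertices of a pair is exactly the paper's redefinition of incidence and equality). However, there is a genuine gap in your validity predicate: you only require that the two subdivision vertices of a pair lie on two distinct crossable edges, whereas the paper crucially requires in addition that \emph{not both} of them lie on edges of $H$. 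Since $\ca H$ is planarised, a \crconf{$k$} drawing restricted to $H$ is equivalent to the plane drawing $\ca H$, so two predrawn edges never cross; a guessed pair with both vertices on predrawn edges therefore does not correspond to any admissible crossing, and the backward direction of the lemma ($G'\models\psi_k$ implies a \crconf{$k$} drawing exists) is not justified for such witnesses. This guard is not a cosmetic condition: it is precisely what makes the reinterpretation sound, because under it the identified framing $\bar G_0[r'=r'']$ is isomorphic to a framing of the identified \ppdg graph $\ca P_0[r'=r'']$ (the identified vertex either is not a \pdsg vertex at all, or it is the existing \pdsg subdivision vertex keeping its framing cycle unchanged), so the obstruction formulas, whose semantics (Lemma~\ref{lem:restrtopol}) is only defined for framings of \ppdg graphs with plane \pdsg{s}, apply verbatim.

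Your proposed remedy at the framing level -- treating ``the two framing cycles around $u_i$ and $v_i$ and the four half-triplets'' as jointly playing the role of the crossing vertex's framing structure -- points in the wrong direction. In Definition~\ref{def:framing} only \pdsg vertices carry framing cycles and only \pdsg (and connector) edges carry triplets, so in the legitimate cases (at most one vertex of the pair predrawn) no merging is needed at all, and in the excluded case (both predrawn) the merged object is not a framing of any \ppdg graph: identifying two vertices of the $3$-connected planar frame destroys the property that the frame combinatorially encodes a plane drawing of the \pdsg, so Lemmas~\ref{lem:MSOforobstruction} and~\ref{lem:restrtopol} no longer say anything about what your adapted formula accepts. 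Moreover, even if you made such a merge rigorous, it would certify extendability of an object in which two predrawn edges cross, which never corresponds to a \conf drawing, so it is the wrong predicate to check. The fix is simply to add the paper's condition to $\mathrm{valid}$ (your $\chi$-analogue): each nonempty pair must contain at least one subdivision vertex of an edge in $E(G)\setminus E(H)$; with that in place the rest of your argument goes through as in the paper.
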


	\begin{proof}[Sketch of proof]
		Recall that we may assume $\ca H$ to be a plane graph.
		We first give a rough outline of what we want to achieve and then sketch the core steps of the proof.

		The graph $G'$ will be based on a framing (as used above).
		Imagine a conforming drawing $\ca G$ of $G$ (extending $\ca H$) with $\crd(\ca G)=k$ and its planarisation $\ca G^\times$.
		If we were able to ``guess'', within the formula $\psi_k$, the additional $k$ vertices (those of $\ca G^\times$) making the crossings, then we would finish by checking \ppd planarity of the result (i.e., of the guessed~$\ca G^\times$).
		Using Theorem~\ref{thm:PEG-relaxed-obstructions}, the latter would follow by an application of Lemmas~\ref{lem:MSOforobstruction} and~\ref{lem:MSOalternating}.

		Specifically, for the task of ``guessing the crossings'', we subdivide each edge of $\ca P$ which is not marked as `uncrossable' by $k$ new vertices, called {\em auxiliary vertices} of this \ppdg subdivision $\ca P_0=(G_0,\ca{H}_0)$ of~$\ca P$.
		A subdivision clearly does not change the crossing number; $\crg(\ca P)=\crg(\ca P_0)$.
		Then we interpret ``guessing a crossing'' in $\ca P_0$ as picking (with existential quantifiers in~$\psi_k$) a pair $r_1',r_1''\in V(G_0)\setminus V(G)$ of auxiliary vertices such that not both $r_1'$ and $r_1''$ are from edges of~$H$, and identifying~$r_1'=r_1''$.
		Let $\ca P_0[r_1'=r_1'']$ denote the graph after such an identification.
		Note that since we do not identify auxiliary pairs from two edges of $H$, the following holds---if $\bar G_0$ is a framing of $\ca P_0$, then $\bar G_0[r_1'=r_1'']$ is a graph isomorphic to the corresponding framing of $\ca P_0[r_1'=r_1'']$.
	
		We let $G'=\bar G_0$ be a framing of $\ca P_0=(G_0,\ca{H}_0)$.
		Let $\vect r'=(r_i':i\in[k])$ and $\vect r''=(r_i'':i\in[k])$ be two $k$-tuples of vertex variables (which are used to specify the $k$ identifications of vertex pairs in~$\ca P_0[\vect r'=\vect r'']$).
		We write the desired formula as
		\begin{align*} \psi_k\>\equiv\> \exists\,\vect r',\vect r''\Big( \bigwedge\nolimits_{r,s\in\vect r'\cup\vect r''}r\not=s \wedge \bigwedge\nolimits_{i\in[k]}\chi(r_i',r_i'') \wedge\> \psi'_k[\vect r',\vect r''] \Big), \end{align*}
		where $\chi(r_i',r_i'')$ checks that $r_i',r_i''$ are auxiliary vertices and not both coming from edges of~$H$ (using precomputed labels of the auxiliary vertices).
		The formula $\psi'_k[\vect r',\vect r'']$ then tests whether the \ppdg graph $\ca P_0[\vect r'=\vect r'']$ admits a planar drawing extending~$\ca H_0$.
		This is a technical task based on Lemmas~\ref{lem:MSOforobstruction} and~\ref{lem:MSOalternating}, and we leave full details for the
		\ifshort preprint paper.\fi
		\iflong Appendix.\fi
	\end{proof}

	Finally, we summarise how {\bf Theorem~\ref{thm:main}} follows from the previous claims.
	Given a \ppdg graph $(G,\ca H)$ and an integer $k>0$, we first make $\ca H$ planarised.
	Then, using Theorem~\ref{thm:reducegrah}, we either conclude that $\crgpd(G,\ca H)>k$, or we iteratively reduce the input to an equivalent instance $(G',\ca H')$ with the same solution value~$k$.
	Moreover, using also Lemma~\ref{lem:tw}, we have that the tree-width of any framing $\bar G'$ of $(G',\ca H')$ is bounded in terms of~$k$.
	We can hence efficiently decide whether $\crgpd(G',\ca H')\leq k$ using Courcelle's theorem applied with the formula $\psi_k$ from Lemma~\ref{lem:phaseII} to a framing $\bar G'$ of $(G',\ca H')$.
	
	{\bf(*)} {We can also observe that the \FPT runtime of this procedure is \(\mathcal{O}(f(k)\cdot|V(G)|^3)\).}

	\section{Restricting crossings per edge}
	\label{sec:per-edge}
	Next we outline some nice consequences of our techniques for previously considered drawing extension settings.
	Firstly, we are able to trivially modify our \FPT-algorithm for \textsc{\PPDCR} by additionally encoding the fact that in a solution every edge in \(E(G) \setminus E(H)\) has at most \(c\) crossings by introducing \(k\) auxiliary vertices for each edge in \(E(H)\), but only \(\min\{c,k\}\) auxiliary vertices for each edge in \(E(G) \setminus E(H)\) in the proof of \Cref{lem:phaseII}.
	This immediately gives us {Theorem~\ref{thm:cplanar-improve}} restated from above.
 	\cPlanar*

\smallskip	
	Another closely related problem that has been considered in literature asks for the smallest number of non-predrawn crossings in a \emph{simple} drawing that coincides with the given \ppdg\ graph, in which each edge in \(E(G) \setminus E(H)\) has at most \(c\) crossings.
	I.e., compared to \textsc{\PPDCRc{\(c\)}} we only allow drawings in which no pair of edges crosses more than once (crossings between adjacent edges can always be avoided).
	The difficulty for our approach here is that we need to record the information of which edges in \(\mathcal{H}^\times\) correspond to the same edge in the non-planarised \pdsg \(H\) (this part can be handled by an MSO$_2$-formula with help of special edge labels, cf.~\cite{GanianHKPV21}),
	and more importantly to keep this information, even during our iterative reduction of \(G\) and \(\mathcal{H}^\times\) described in \Cref{sec:phase1}.
	The latter seems to be a deep problem, not easy to overcome and a good direction for continuing research.

	Nevertheless, using the more restrictive parameterisation by \(|E(G) \setminus E(H)| + c\) (which also naturally bounds the crossing number), we are able to give an improvement on the best known result in~\cite{GanianHKPV21}:
	finding the least number of crossings in a simple drawing which coincides with the given partial drawing and in which each edge outside of the \pdsg has at most \(c\) crossings in \FPT-time.
	The known result assumes that the planarised \pdsg is connected, an assumption that we can easily drop using our MSO\(_2\)-encoding in combination with a crucial structural lemma which we adapt from \cite{GanianHKPV21} to `stitch' together relevant edges in \(\mathcal{H}^\times\) that correspond to the same edge in \(H\).
	This improvement over~\cite{GanianHKPV21} results in {\bf Theorem~\ref{thm:Simple}} stated in the Introduction.

	\section{Conclusion}\label{sec:conclu}

	To summarise, we have shown that some algorithmic results for the classical crossing-number can be extended to the \ppd setting, similarly to the respective planarity question~\cite{AngeliniDFJVKPR15}.
	However, what can we say about structural properties of the \ppdcr?

	For instance, what can we say about the minimal graphs of a certain crossing-number value?
	We call a \ppdg graph $\ca P=(G,\ca H)$ {\em $k$-crossing-critical} if the \ppdcr of $\ca P$ is at least~$k$, but this crossing number drops down below $k$ after deleting any edge, predrawn or not, from~$\ca P$
	(alternatively, one may also include removing any edge from $H$ while keeping it in~$G$ to the definition).
	We have recently gotten a complete rough asymptotical characterisation of classical $k$-crossing-critical graphs~\cite{DBLP:conf/compgeom/DvorakHM18}, but here we see an important difference in behaviour.
	For classical $k$-crossing-critical graphs, optimal drawings (i.e.\ those achieving the minimum number of crossings) can never contain a collection of edge-disjoint cycles drawn nested in each other and of size arbitrarily large compared to~$k$
	(this is implicit in \cite{DBLP:journals/jct/Hernandez-VelezST12} or \cite{DBLP:conf/compgeom/DvorakHM18}).
	In contrast to that, we provide:
	\begin{restatable}{proposition}{baddualdiam}\apxmark \label{pro:baddualdiam}
		For each $k\geq8$ and $m>0$, there exists a \ppdg graph $\ca P=(G,\ca{H})$ such that $\ca P$ is $k$-crossing-critical and that an optimal (with minimum crossings) drawing of $\ca P$ extending $\ca H$ contains at least $m$ vertex-disjoint nested cycles from~$G-E(H)$.
	\end{restatable}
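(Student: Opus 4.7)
The plan is to construct, for each $k\geq 8$ and each $m>0$, an explicit \ppdg graph $\ca P_{k,m}=(G,\ca H)$ witnessing the claim by grafting a \emph{forced nesting gadget} of depth $m$ onto a small classical $k$-crossing-critical graph~$G^*$. The forcing will be realised through the predrawn skeleton $\ca H$: the $m$ nested cycles will consist of non-predrawn edges spanning vertices of $V(H)$ whose fixed positions leave concentric nesting as the only crossing-minimum routing.

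Concretely, I would start by fixing a classical $k$-crossing-critical graph $G^*$ together with an optimal drawing~$\ca G^*$ (such $G^*$ exist for all $k\geq 1$, and for $k\geq 8$ one may cite any of the standard tile constructions), and select a face $f_0$ of $\ca G^*$. All edges of $G^*$ enter~$\ca H$. Inside $f_0$ I inscribe $m$ concentric predrawn \emph{anchor-triples}: for $i\in[m]$, three vertices $x_i,y_i,z_i\in V(H)$ placed at the corners of an equilateral triangle of radius $r_i$ with $r_1>r_2>\cdots>r_m$, all centred at the same interior point of $f_0$. The cycles $C_i:=x_iy_iz_ix_i$ are put into $E(G)\setminus E(H)$, and connectivity-plus-criticality of the whole instance is secured by adding, for each $i$, a predrawn path from $x_i$ to a fixed boundary vertex of $f_0$ routed in a thin ``spoke corridor'' that meets each anchor triple at most once, so that no new crossing is introduced in the predrawn part.

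The first two verifications are short. The upper bound $\crgpd(\ca P_{k,m})\leq k$ follows by drawing each $C_i$ as a small nearly-equilateral triangle through its anchors, adding no new crossings to $\ca G^*$. The lower bound $\crgpd(\ca P_{k,m})\geq k$ follows because any drawing extending $\ca H$ restricts to a drawing of~$G^*$, and $\crg(G^*)=k$ by $k$-criticality of~$G^*$. For the required nested-ness in any optimal drawing: by a Jordan-curve argument anchored to the concentric spoke corridors, any routing of some $C_i$ that is not contained in the annular region between the $(i-1)$-th and $(i+1)$-th anchor circles must wind around at least one other anchor triple, which forces an extra transverse intersection with a spoke of~$\ca H$; such an extra crossing is not offered by the budget of the optimal drawing, so the $C_i$ are nested.

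The main obstacle, and the step requiring the most care, is full $k$-\emph{edge}-criticality. Deleting a predrawn edge of $G^*$ drops $\crgpd$ below $k$ by criticality of~$G^*$. Deleting a non-predrawn edge of some $C_i$, or a spoke, must also strictly drop $\crgpd$; but in the naive construction one of the $C_i$-edges is useless once the others are present, so a naive $C_i$-gadget is not critical. I would address this by replacing each cycle $C_i$ with a tight \emph{alternating-chain link} in the sense of Theorem~\ref{thm:PEG-relaxed-obstructions}\eqref{it:noalter}: a small sub-gadget that is itself minimally non-PEG-planar of crossing deficit zero, so that removing any one of its edges opens up a planar rerouting inside~$f_0$ that collapses a previously forced crossing of~$G^*$ onto a spoke, strictly saving a crossing. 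The calibration to be carried out is that the deficit transferred by the alternating-chain link exactly compensates one crossing of $G^*$, so that criticality of $G^*$ together with minimality of the link yields criticality of the combined instance; this inductive ``deficit bookkeeping'' between the classical and the \ppd sides is where the bulk of the technical work lies.
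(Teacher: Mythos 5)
Your construction has a fatal tension between the lower bound and criticality, and it stems from how the \ppdcr is defined. You place all edges of the classical $k$-crossing-critical graph $G^*$ into $\ca H$; but predrawn crossings are explicitly discounted (the \ppdcr counts only crossings beyond $\crd(\ca H)$), so with $G^*$ fixed in its optimal drawing the $k$ crossings of $G^*$ contribute nothing, and by your own upper-bound routing of the cycles $C_i$ the instance has $\crgpd(\ca P_{k,m})=0$, not $\geq k$. If instead you keep the edges of $G^*$ non-predrawn to rescue the lower bound, then criticality collapses at the other end: deleting an edge of some $C_i$ or of a spoke leaves the abstract subgraph $G^*$ intact, and $\crgpd\geq\crg(G^*)=k$ still holds, so these edges are never critical. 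Your proposed repair --- replacing each $C_i$ by a minimal non-extendable ``alternating-chain link'' whose deletion ``collapses a forced crossing of $G^*$ onto a spoke'' --- cannot work in this architecture: the $k$ crossings are forced by $G^*$ alone (an abstract property, independent of what happens inside the face $f_0$), so removing edges of a disjoint gadget inside $f_0$ can never lower the count below $k$. The ``deficit bookkeeping'' you defer is precisely the step that has no content to fill it. (Separately, the claim that each $C_i$ must enclose the common centre, rather than being routed as a non-enclosing curve through its three anchors, needs a real argument about crossings with the spokes; as stated it is asserted, not proved.)

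The paper avoids this trap by making every nested layer load-bearing: it stacks $2m$ copies of one weighted $8$-vertex gadget (two nested quadrilaterals joined by light and heavy edges), predraws only two small triangles attached to the innermost and outermost layers in \emph{opposite} orientations, and shows that any drawing extending $\ca H$ must ``flip'' the orientation somewhere along the stack, which costs exactly $k=2c+2$ crossings; since the flip can be performed in any layer, deleting any edge of any layer lets that layer flip strictly more cheaply, giving $k$-crossing-criticality, while every optimal drawing necessarily contains the $2m$ nested cycles of non-predrawn edges. (Weights are then traded for parallel edges and an inclusion-minimal critical subgraph is taken.) The key idea you are missing is that the crossing demand must arise from a global interaction between the predrawn part and the entire chain of nested cycles --- here an orientation/parity obstruction --- rather than from an independent classical critical graph to which the nested cycles are merely decorative.
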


	Consequently, even a rough characterisation of \ppdg $k$-crossing-critical graphs is a widely open question worth further investigation.
	Unfortunately, already at the starting point of this track we lack a good analogue of the result \cite{DBLP:journals/jct/RichterT93}, saying that a $k$-crossing-critical graph has its crossing number bounded in terms of~$k$, whose proof simply breaks down in the \ppd setting.
	Having a result like \cite{DBLP:journals/jct/RichterT93} in the predrawn setting we could, as a first step, adapt the arguments from Section~\ref{sec:PPDCR} to prove that \ppdg $k$-crossing-critical graphs have treewidth bounded in terms of~$k$.

	\bibliography{references}
	
\iflong
	\newpage
	\appendix

	\section{Additions to Section~\ref{sec:prelims}}
	\subsection*{Parameterised complexity, treewidth and grids}
	In parameterised complexity~\cite{CyganFKLMPPS15,DowneyF13,FlumGrohe06},
	the complexity of a problem is studied not only with respect to the input size, but also with respect to some problem parameter(s). The core idea behind parameterised complexity is that the combinatorial explosion resulting from the \NP-hardness of a problem can sometimes be confined to certain structural parameters that are small in practical settings. We now proceed to the formal definitions.
	
	A {\it parameterised problem} $Q$ is a subset of $\Omega^* \times
	\mathbb{N}$, where $\Omega$ is a fixed alphabet. Each instance of $Q$ is a pair $(I, \kappa)$, where $\kappa \in \mathbb{N}$ is called the {\it
		parameter}. A parameterised problem $Q$ is
	{\it fixed-parameter tractable} (\FPT)~\cite{FlumGrohe06,DowneyF13,CyganFKLMPPS15}, if there is an
	algorithm, called an {\em \FPT-algorithm},  that decides whether an input $(I, \kappa)$
	is a member of $Q$ in time $f(\kappa) \cdot |I|^{\mathcal{O}(1)}$, where $f$ is a computable function and $|I|$ is the input instance size.  The class \FPT{} denotes the class of all fixed-parameter tractable parameterised problems.
	A parameterised problem $Q$
	is {\it \FPT-reducible} to a parameterised problem $Q'$ if there is
	an algorithm, called an \emph{\FPT-reduction}, that transforms each instance $(I, \kappa)$ of $Q$
	into an instance $(I', \kappa')$ of
	$Q'$ in time $f(\kappa)\cdot |I|^{\mathcal{O}(1)}$, such that $\kappa' \leq g(\kappa)$ and $(I, \kappa) \in Q$ if and
	only if $(I', \kappa') \in Q'$, where $f$ and $g$ are computable
	functions. 
	
	An extremely popular parameter for graph problems is \emph{treewidth}.
	A \emph{tree-decomposition}~$\mathcal{T}$ of a graph $G=(V,E)$ is a pair 
	$(T,\chi)$, where $T$ is a tree (whose vertices we call \emph{nodes}) rooted at a node $r$ and $\chi$ is a function that assigns each node $t$ a set $\chi(t) \subseteq V$ such that the following holds: 
	\begin{itemize}
		\item For every $uv \in E$ there is a node
		$t$ such that $u,v\in \chi(t)$.
		\item For every vertex $v \in V$,
		the set of nodes $t$ satisfying $v\in \chi(t)$ forms a subtree of~$T$.
	\end{itemize}
	
	The \emph{width} of a tree-decomposition $(T,\chi)$ is the size of a largest set $\chi(t)$ minus~$1$, and the \emph{treewidth} of the graph $G$,
	denoted $\operatorname{tw}(G)$, is the minimum width of a tree-decomposition of~$G$.
	Efficient fixed-parameter algorithms are known for computing a tree-decomposition of near-optimal width~\cite{Korhonen21,Kloks94}.

	A {\em $k\times k$ square grid} is the planar graph $D$ on the vertex set $V(D)=\{(i,j):i,j\in[k]\}$ and the edge set $E(D)=\{\{(i,j),(i',j')\}:|i-i'|+|j-j'|=1\}$.
	It is well known that if a graph $G$ contains a $k\times k$ square grid minor, then $\operatorname{tw}(G)\geq k$ and, conversely, there is a polynomial function $f(n)$ such that if $k$ is the largest integer for which $G$ contains a $k\times k$ square grid minor, then $\operatorname{tw}(G)\leq f(k)$ \cite{DBLP:journals/jacm/ChekuriC16}.

	\smallskip		
	To keep our exposition self-contained, we also include a brief description of the expressive power of the {\em MSO$_2$ logic of graphs}.
	This logic is defined over graphs $G$ with the vertex set $V(G)$ and the edge set $E(G)$.\footnote{This is different from related MSO$_1$ logic in which the edges of $G$ are expressed only as a binary predicate and, consequently, MSO$_1$ has weaker expressive power towards sets of edges.}
	An MSO$_2$-formula can use (and quantify) variables for vertices $w\in V(G)$ and edges $f\in E(G)$, and for their sets $W\subseteq V(G)$ and $F\subseteq E(G)$.
	Then there is the standard equality predicate $=$ and the incidence predicate $\prebox{inc}(w,f)$ expressing that $w$ is an end of an edge~$f$.
	Additionally, one can assign arbitrary unary predicates as labels (or colours) to the vertices and edges of $G$ and access the labels within the formula.
	The famous theorem of Courcelle~\cite{Courcelle90} states that any decision property expressible in MSO$_2$ logic can be decided by an \FPT-algorithm on graphs of bounded treewidth, where the parameter is the sum of a formula length and the value of treewidth.

        \subsection*{Grohe's result for the classical crossing number}

	\begin{definition}[Contraction of a subdrawing]
		Given a drawing \(\mathcal{G}\) of a graph \(G\), and a subgraph \(H \subseteq G\) none of whose edges are crossed in \(\mathcal{G}\), \emph{contracting} \(H\) in \(\mathcal{G}\) defines a drawing of the graph which arises from \(G\) by contracting \(H\) in the following way.
		\begin{itemize}
			\item Every vertex in \(V(G) \setminus V(H)\) and every edge in \(E(G) \setminus E(H)\) is mapped to the same point or curve as \(\mathcal{G}\) does.
			\item The vertex \(v_H\) to which \(H\) is contracted is mapped to an arbitrary but fixed point on the convex hull of the drawing (according to \(\mathcal{G}\)) of \(H\) which is not a crossing in \(\mathcal{G}\).
			\item It remains to define the drawing of the edges \(e\) incident to \(v_H\), each of which that corresponds to an edge \(uv \in E(G)\) where \(u \in V(G) \setminus V(H)\) and \(v \in V(H)\).
			Consider the simple curve \(\gamma\) arising from the drawing (according to \(\mathcal{G}\)) of \(uv\) capped at its first intersection with the boundary of the convex hull of the drawing of \(H\) in \(\mathcal{G}\) -- in case this intersection does not exist, we simply take the whole curve.
			Now we map \(e\) to the concatenation of \(\gamma\) with the straight line between the endpoint of \(\gamma\) and the drawing of \(v_H\).
		\end{itemize}
	\end{definition}

	The following is an abstracted and condensed formulation of more technical lemmas and proofs in the original paper~\cite{Grohe04}.
	Recall that a drawing is called {\em conforming} if no edge marked `uncrossable' is crossed in it.
	\begin{theorem}[adapted from {\cite[Proofs of Lemmas~6 and 7]{Grohe04}}]\label{thm:grohegrid}
		For all \(k \in \mathbb{N}\) there is some \(w \in \mathbb{N}\) such that, given a graph \(G\) in which some edges are marked `uncrossable', in \FPT-time parameterised by \(k\) we can
		\begin{enumerate}
			\item decide that the number of crossings in any conforming drawing of \(G\) is more than \(k\); or
			\item find a tree decomposition of \(G\) of width at most \(w\); or
			\item find a connected subgraph \(I \subseteq G\) with \(|V(I)| \geq 6\) and a cycle \(C\) in \(G - V(I)\) that contains the neighbourhood of \(I\) such that; if in the graph \(G'\) that arises from \(G\) by contracting \(I\) to the vertex \(v_I\) we mark as `uncrossable' all edges that are in \(C\), or incident to \(v_I\), or already in \(E(G)\) and marked as `uncrossable', then the following holds:
			\begin{enumerate}
				\item In any conforming drawing of \(G\) with at most \(k\) crossings, no edge of \(G[V(I) \cup V(C)]\) is crossed and contracting \(I\) in such a drawing leads to a conforming drawing of \(G'\) with at most \(k\) crossings.
				\item Conversely, in any conforming drawing \(\mathcal{G}'\) of \(G'\) with at most \(k\) crossings, replacing \(v_I\) and its incident edges by an arbitrary planar drawing of \(G[V(I) \cup V(C)]\) which is homeomorphically distorted so that the drawing of \(C\) coincides with the drawing of \(C\) in \(\mathcal{G}'\), leads to a conforming drawing of \(G\) with at most \(k\) crossings. 
			\end{enumerate}
		\end{enumerate}
	\end{theorem}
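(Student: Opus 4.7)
The plan is to follow the strategy of Grohe's original argument~\cite{Grohe04}, which hinges on the algorithmic excluded grid theorem together with a topological analysis of crossing-minimum drawings. First, I would invoke an \FPT-constructive excluded grid theorem (e.g.~\cite{DBLP:journals/jacm/ChekuriC16}) with a target grid size $h(k)$ to be fixed later. This either yields a tree decomposition of $G$ of width at most some $w := f(h(k))$, in which case we output case~2, or it returns a model of an $h(k) \times h(k)$ square grid minor embedded in~$G$; the rest of the argument is carried out in this second scenario.

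Next, and this is the heart of the argument, I would exploit the grid minor to locate a nested, crossing-free region. The grid minor yields $\Omega(h(k))$ pairwise vertex-disjoint nested cycles $C_1 \subset C_2 \subset \cdots$ in $G$ together with the radial crossbar paths linking them. Since a conforming drawing is allowed at most $k$ crossings and uncrossable edges are by definition never crossed, a pigeonhole argument on the nested cycles shows that, provided $h(k)$ is chosen sufficiently large in terms of $k$, there exists some index $i$ for which the inner grid region bounded by $C_i$ is drawn without any crossings in any conforming drawing with $\leq k$ crossings, and moreover contains no uncrossable edges. I would set $C := C_i$ and take $I$ to be the connected subgraph induced by the grid vertices strictly inside $C_i$; choosing $i$ not too close to the boundary guarantees $|V(I)| \geq 6$, and the grid structure ensures $C \subseteq G - V(I)$ and $N(I) \subseteq V(C)$. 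If the analysis shows that no such clean nested region is compatible with $\leq k$ crossings, we return case~1 instead.

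Finally, the two directions of claim~3 must be verified. The forward direction 3(a) follows directly from the choice of $I$ and $C$: no edge of $G[V(I)\cup V(C)]$ is crossed in a conforming $\leq k$-crossing drawing, so contracting $I$ to $v_I$ in such a drawing is well-defined and the resulting drawing of $G'$ is still conforming (the freshly marked uncrossable edges of $C$ and those incident to $v_I$ were already uncrossed, and the number of crossings is unchanged). For the converse 3(b), in any conforming drawing $\mathcal{G}'$ of $G'$ the cycle $C$ is drawn as a Jordan curve bounding a disc $D$ that contains $v_I$, and the edges incident to $v_I$ lie uncrossed in $D$; by a standard homeomorphism argument we can replace the interior of $D$ with any planar drawing of $G[V(I)\cup V(C)]$, homeomorphically distorted so that its outer cycle matches the drawing of $C$ in $\mathcal{G}'$, yielding a conforming drawing of $G$ with exactly the same number of crossings.

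The main obstacle, as in Grohe's original treatment, will be the clean-region extraction in step two. The argument is topological and must simultaneously accommodate the $k$-crossing budget, the possible locations of uncrossable edges inside the grid minor, and the precise connectivity and neighbourhood properties required of $I$ and $C$. The quantitative calibration of $h(k)$ -- large enough to force a crossing-free subregion of the right shape so that a single cycle $C$ contains all of $N(I)$, yet still small enough that locating and contracting it fits into \FPT-time -- is the delicate technical core, and is where the proof essentially recapitulates the intricate arguments of~\cite{Grohe04}.
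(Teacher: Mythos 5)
First, note how this statement functions in the paper: it is not proved there at all, but explicitly presented as an ``abstracted and condensed formulation'' of Lemmas~6 and~7 of Grohe~\cite{Grohe04} and used as a black box. So your proposal is not being compared against a proof the authors wrote, but against Grohe's original argument, which you are trying to reconstruct. Your surrounding steps are consistent with that argument (algorithmic excluded grid theorem for the treewidth branch, and the verification of 3(a)/3(b) once a suitable $I$ and $C$ are available), but the central extraction step contains two genuine gaps exactly where the citation is supposed to do the work.

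First, you assert that taking $I$ to be the grid vertices strictly inside a nested cycle $C_i$ gives $N(I)\subseteq V(C_i)$ ``by the grid structure''. A grid \emph{minor} guarantees nothing of the sort: vertices in the branch sets inside the region may have arbitrarily many further neighbours elsewhere in $G$, so the neighbourhood of $I$ need not lie on any cycle of the grid. What is required is a \emph{flat} grid, i.e.\ a subgrid whose interior is attached to the rest of $G$ only through its boundary cycle, and obtaining a flat grid from a large grid minor under the hypothesis that some conforming drawing has at most $k$ crossings is precisely the nontrivial content of Grohe's Lemmas~6--7; it cannot be read off the minor model. Second, your pigeonhole is quantified the wrong way around. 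With $h(k)$ disjoint nested annuli and a budget of $k$ crossings, \emph{each individual} conforming drawing leaves some annulus clean, but different drawings may spoil different annuli, whereas the theorem requires one fixed pair $(I,C)$, output before any drawing is seen, satisfying 3(a) for \emph{all} conforming drawings with at most $k$ crossings. Counting alone cannot deliver this uniformity: for any fixed region deep inside the grid one can perturb a crossing-free drawing locally so that an edge of that region picks up crossings while the total stays within budget, so the guarantee has to come from Grohe's redrawing/normalisation arguments (showing that drawings can be assumed clean on the chosen flat region without increasing crossings or violating conformity), together with the way the flat region is chosen. Since you defer exactly these points to ``recapitulating the intricate arguments of~\cite{Grohe04}'', the part of the proof that the theorem exists to encapsulate is the part your plan leaves open, and the two specific claims you do make in its place (neighbourhood containment from the minor, and a drawing-independent clean region from pigeonhole) are not correct as stated.
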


	\subsection*{Characterising \ppd planarity}

	For the sake of completeness, we repeat here the original formulation of the main theorem of \cite{JelinekKR13}, and prove that it implies our wording used in Theorem~\ref{thm:PEG-relaxed-obstructions}.
	We shortly say that a \ppdg graph $\ca P=(G,\ca{H})$ is {\em planar} if $G$ admits a planar drawing which extends $\ca H$.

	We start with the skipped formal definition of an alternating chain from Theorem~\ref{thm:PEG-relaxed-obstructions}.
	For a cycle $C$ and vertices $x,y,x',y'\in V(C)$, we say that the pair $x,y$ alternates with the pair $x',y'$ on $C$ if they are distinct and their cyclic order on $C$ is $x,x',y,y'$ or $x,y',y,x'$.
	An {\em alternating chain} in a \ppdg graph $(G,\ca{H})$ is a \ppdg subgraph $(G_1,\ca{H}_1)$ of $(G,\ca{H})$ such that:
	(a) $\ca H_1$ is composed of a cycle $C$ and two isolated vertices~$s,t$, and
	(b) for some $a\geq2$ there exist $a$ paths $P_1,\ldots,P_a\subseteq G_1$ such that, for $i\in[a]$, $P_i$ has both ends on $C$ and is otherwise disjoint from $C$, and for $i\in[a-1]$ we have $P_i$ disjoint from $P_{i+1}$ and the ends of $P_i$ alternate with the ends of $P_{i+1}$ on~$C$.%
	\footnote{Note that with this condition we are more relaxed than \cite{JelinekKR13}; our definition of an alternating chain includes also situations which contain another obstruction from Figure~\ref{fig:PEG-obstructions} as well (as a PEG-minor), while \cite{JelinekKR13} are strict in excluding the other obstructions from the definition.
	For instance, obstructions number 16 and 4 from Figure~\ref{fig:PEG-obstructions} are included as alternating chains of $a=2,3$, respectively.
	There is more difference, e.g., three paths of the chain are allowed in our definition to pairwise alternate their ends on $C$, and this situation contains the obstruction $K_{3,3}$ even without predrawing.
	All these differences are irrelevant with respect~to~Theorem~\ref{thm:PEG-relaxed-obstructions}.}
	Furthermore, $s\in V(P_1)$, $t\in V(P_a)$, and $s$ and $t$ are predrawn in $\ca H_1$ in the same face of $C$ if $a$ is even, and in distinct faces of $C$ if $a$ is odd.

	We now define the {\em PEG-minor} relation on \ppdg graphs as introduced in \cite{JelinekKR13}.
	A \ppdg graph $\ca P=(G,\ca{H})$ contains a \ppdg graph $\ca P'=(G',\ca{H}')$ as a PEG-minor if a \ppdg graph isomorphic to $\ca P'$ can be obtained from $\ca P$ by a sequence composed of the following operations defined on~$\ca P$:
	\begin{enumerate}[I.]
		\item Remove an edge, or a vertex with all incident edges, both from $G$ and from~$\ca H$.
		\item Remove an edge, or a vertex with all incident edges, only from $\ca H$ while keeping them in~$G$ (in other words, the affected edge or vertex are no longer predrawn).
		\item Contract an edge of $H$ in both $G$ and $\ca H$, or contract an edge of $G$ which has at most one end in~$\ca H$.
		\item Assume that $f=uw\in E(G)\setminus E(H)$ has both ends in~$\ca H$. If $u$ and $w$ are in distinct components of $H$, and the degrees of $u$ and $w$ in $H$ are at most~$1$, then add $f$ to $\ca H$ and contract $f$ as in the previous point.
			\label{it:complicontr}
	\end{enumerate}
	Notice that applying any of these operations leaves a unique drawing of the \pdsg of $\ca P$ in the result (achieving this property is the reason for having rather complicated and restrictive definition of point \ref{it:complicontr}.).
	Furthermore, the operations clearly preserve \ppd planarity.

	We also need the more restrictive definition of an alternating chain from \cite{JelinekKR13} which we call here {\em reduced} for distinction.
	A {\em reduced alternating chain} is a \ppdg graph $(G_1,\ca{H}_1)$, consisting of $\ca H_1$ formed by a plane cycle $C$ and two isolated vertices~$s,t$,
	of the vertex set $V(G_1)=V(H_1)=V(C)\cup\{s,t\}$ and, for some $a\geq3$, of a set $F=E(G_1)\setminus E(C)$ of $a+2$ edges such that:
	\begin{itemize}
		\item The vertices $s$ and $t$ are predrawn in $\ca H_1$ in the same face of $C$ if $a$ is even, and in distinct faces of $C$ if $a$ is odd.
		\item The subgraph $(V(G_1),F)=G_1-E(C)$ has all vertices of degree $2$ except two vertices $u_1,v_1\in V(C)$ of degree~$1$.
		\item Edges of $F$ form $a$ paths where $P_1$ is formed by the two edges incident to $s$, $P_a$ is formed by the two edges incident to $t$, and $P_2,\ldots,P_{a-1}$ are formed by the remaining single edges of $F$ in some order.
			Moreover, $u_1$ is an end of $P_2$ and $v_1$ is an end of~$P_{a-1}$.
		\item For $i,i\in[a]$, the ends of $P_i$ and of $P_j$ alternate on $C$ if and only if~$|i-j|=1$.
	\end{itemize}

	\begin{theorem}[Jel\'{\i}nek, Kratochv\'{\i}l and Rutter \cite{JelinekKR13}]\label{thm:PEG-obstructions}
		A \ppdg graph $\ca P=(G,\ca{H})$ is planar if and only if $\ca P$ does not contain a reduced alternating chain or a member of the family $\cf L$ of the $24$ \ppdg graphs from Figure~\ref{fig:PEG-obstructions} as a PEG-minor.
	\end{theorem}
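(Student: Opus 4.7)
The plan is a Kuratowski-style proof in two directions, with the substantial work in the sufficiency direction.

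For necessity I would first verify that each of the four PEG-minor operations preserves the property of admitting a planar extension of the predrawn part: deletions simply erase pieces of the drawing, contraction of an edge of $H$ contracts that edge in the plane, and the composite operation IV merges two components of $H$ along a planar path. It then suffices to check that each of the $24$ explicit \ppdg graphs in Figure~\ref{fig:PEG-obstructions} is itself not PPD-planar, which is a finite face-enumeration check, and that no reduced alternating chain is PPD-planar. The latter is a parity argument: in a hypothetical planar extension, each path $P_i$ routed through one side of $C$ forces the incoming end of $P_{i+1}$ to flip to the opposite side of $C$, since the ends of consecutive paths alternate along $C$. Chaining this from $P_1$ to $P_a$ shows that $s$ and $t$ must lie in the same face of $C$ if and only if $a$ is even, contradicting the predrawn placement prescribed in $\ca H_1$.

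For sufficiency I would take a minimum PEG-minor counterexample $\ca P = (G, \ca H)$: non-PPD-planar, with no member of $\cf L$ and no reduced alternating chain as a PEG-minor, and minimal under the PEG-minor order among such counterexamples. Minimality supplies a structural normalisation of $\ca H$: any applicable contraction of an edge of $H$, or of a non-predrawn edge with at most one predrawn end, would either preserve the counterexample status (contradicting minimality) or introduce one of the listed obstructions. This lets us assume $\ca H$ is ``tight'': no removable pendants, no bridges contractible via operation IV, and every block $2$-connected. Under this normalisation, the space of planar drawings of $G$ extending $\ca H$ is parametrised combinatorially by (i) flip choices for the rigid components of an SPQR-like decomposition of each block of $\ca H$, and (ii) a choice, for each non-predrawn edge, of a face of $\ca H$ in which to route it. The PPD-planar drawings of $\ca P$ are precisely the feasible assignments of a constraint satisfaction problem over these variables with local planarity constraints.

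The main obstacle, and the heart of the proof, is to show that every infeasibility of this CSP certifies one of the listed obstructions as a PEG-minor of $\ca P$. I would split this into two patterns of failure. Infeasibilities localised to a bounded sub-configuration of $\ca H$ --- an edge whose endpoints require incompatible faces, two edges with conflicting face-placements across a bounded fragment, or a cross-component constraint --- reduce by operations I--III to one of the $24$ finite obstructions in $\cf L$, giving the explicit gallery. Infeasibilities that propagate along an unbounded sequence of alternating routing demands around a single cycle $C \subseteq \ca H$ can be peeled, via contractions and deletions, down to precisely the structure $(C, s, t, P_1, \ldots, P_a)$ of a reduced alternating chain, with the parity of $a$ matching the forced opposite-side placement of $s$ and $t$. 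Verifying by case analysis that these two patterns exhaust all possibilities, and that the peeling steps preserve minimality, is the technical core; operation~IV is what keeps the finite list from blowing up, since it allows non-predrawn bridges between $H$-components to be absorbed rather than producing a distinct obstruction for every bridge configuration.
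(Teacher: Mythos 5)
First, be aware that the paper does not prove this statement at all: it is quoted verbatim from Jel\'{\i}nek, Kratochv\'{\i}l and Rutter \cite{JelinekKR13} and used as a black box to derive Theorem~\ref{thm:PEG-relaxed-obstructions}, so there is no in-paper proof to compare yours against; what can be judged is whether your sketch would itself constitute a proof, and it would not. Your necessity direction is fine in outline (checking that the four PEG-minor operations preserve extendability and that each member of $\cf L$ fails to be planar is indeed a finite, routine verification), but your parity bookkeeping for the reduced alternating chain is off by one: the sides of $P_1,\ldots,P_a$ alternate through $a-1$ flips, so $t$ is forced into the same face of $C$ as $s$ if and only if $a$ is \emph{odd}, and it is this that contradicts the prescribed placement (same face iff $a$ even); as written, your forced conclusion agrees with the prescription instead of contradicting it.

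The genuine gap is the sufficiency direction. Your plan --- minimal counterexample, normalise $\ca H$, encode the extensions as a CSP over SPQR-type flips and face assignments, then ``show that every infeasibility of this CSP certifies one of the listed obstructions'' --- restates the theorem rather than proving it. All of the difficulty is concentrated in the sentence conceding that the exhaustive case analysis ``is the technical core'': you give no argument for why a localised infeasibility reduces to one of exactly the $24$ graphs of Figure~\ref{fig:PEG-obstructions} (that this finite list suffices is precisely the content of the theorem, and in \cite{JelinekKR13} it requires a long, delicate analysis of the block and SPQR structure of $H$, including disconnected \pdsg{s}, where the relative placement and orientation of components matters as in Figure~\ref{fig:disconnequiv}), nor why every propagating infeasibility peels down to a reduced alternating chain and to nothing else. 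In addition, the existence and usability of a minimum counterexample under the PEG-minor order is asserted without arguing that the order is well-founded on the relevant instances, and your normalisation claims (``every block $2$-connected'', ``no bridges contractible via operation IV'') are stated, not derived. As a roadmap the proposal is reasonable and roughly parallels the strategy of \cite{JelinekKR13}, but as a proof it is missing its core.
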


	\PEGourversion*
	\begin{proof}
		We show how our formulation of this theorem follows from original Theorem~\ref{thm:PEG-obstructions}.

		As for \eqref{it:noalter}, if there is an alternating chain in $\ca P$ then, directly, $\ca P$ cannot be planar.
		On the other hand, assume that $\ca P$ contains a reduced alternating chain as a PEG-minor.
		Then one can follow operations of a PEG-minor in the converse direction and routinely verify that there is an alternating chain in $\ca P$.
		However, notice that one cannot automatically claim that distinct single-edge paths from the definition of a reduced alternating chain give raise to internally disjoint paths with ends on $C$ -- this we claim only for the consecutive alternating pairs of them.

		Regarding \eqref{it:nosubdiv}, we have to properly define the family~$\cf K$.
		We say that a \ppdg graph $\ca P_1$ contains $\ca P_2$ as a {\em poor PEG-minor} if $\ca P_2$ is obtained as a PEG-minor above, but using only the removal operations from points I., II., and the contraction operation from point III. applied to edges having some end of degree at most $2$ in~$G$.
		Then, obviously, $\ca P_1$ contains a \ppdg subgraph isomorphic to a subdivision of~$\ca P_2$.
		Let $\cf K$ be a class of \ppdg graphs such that;
		\begin{itemize}
		\item every member of $\cf K$ contains some graph from the family $\cf L$ (cf.~Theorem~\ref{thm:PEG-obstructions}) as a PEG-minor,
		\item every \ppdg graph $\ca P$ which contains some graph from the family $\cf L$ as a PEG-minor also contains a member of $\cf K$ as a poor PEG-minor, and
		\item $\cf K$ is a minimal such class up to isomorphism of \ppdg graphs and the poor PEG-minor containment.
		\end{itemize}

		It immediately follows from this definition and Theorem~\ref{thm:PEG-obstructions} that if $\ca P$ contains a member of $\cf K$ as a poor PEG-minor, then $\ca P$ cannot be planar.
		Conversely, if $\ca P$ is not planar and $\ca P$ contains no alternating chain, then $\ca P$ contains a member of $\cf L$ as a PEG-minor (Theorem~\ref{thm:PEG-obstructions}), and so a member of $\cf K$ as a poor PEG-minor.
		Then a subdivision of a member of $\cf K$ is isomorphic to a \ppdg subgraph of $\ca P$.

		It remains to show that $\cf K$ is finite (we do not need it to be unique).
		Assume $\ca P_1\in\cf K$ which contains $\ca P_2\in\cf L$ as a PEG-minor.
		Then no removal operation is used when obtaining $\ca P_2$ from $\ca P_1$, by minimality of $\cf K$, and so no contraction of an edge with one end of degree $\leq2$ in point III. as well.
		Consequently, the underlying graph of $\ca P_1$ results using a bounded number of possible vertex splittings in the underlying graph of $\ca P_2$, and a bounded number of possible releasings of predrawn edges.
		This implies finiteness of $\cf K$ from the assumed finiteness of~$\cf L$.
	\end{proof}

	\section{Additions to Section~\ref{sec:phase1} -- Treewidth bound}

	The delicate and important concept of \(\mathcal{(H, I)}\)-flippability can be fully formally captured with the following elaborate construction.
	\begin{definition}
		\label{def:flip}
		Let \(D \supseteq C \cup H \cup I\) be a graph that admits a planar drawing whose restriction to \(H\) is equivalent to \(\mathcal{H}\).
		Let \(D'\) arise from \(D\) by adding a disjoint triangle \(T\) and three new vertices \(t\), \(i_1\) and \(i_2\), subdividing an edge \(e \in E(C)\) (if possible \(e \in E(C) \cap E(H)\)) with four new vertices \(v_1\) through \(v_4\), and introducing the edges \(tv_2\), \(tv_3\), \(v_2i_1\), \(v_3i_2\), and \(i_1n_1\) where \(n_1\) is a neighbour in \(I\) of the first vertex in \(N(I)\) when traversing \(C\) from \(v_2\) in direction away from \(v_3\), and \(i_2n_2\) where \(n_2\) is a neighbour in \(I\) of the first vertex in \(N(I)\) when traversing \(C\) from \(v_3\) in direction away from \(v_2\).
		We call the triangle subgraph on \(t\), \(v_2\), and \(v_3\), \(A\).
		
		Let \(\mathcal{H_T}\) arise from \(\mathcal{H}\) by adding an arbitrary planar drawing of \(T\) into the outer face of \(\mathcal{H}\) without enclosing any other vertices of \(D'\).
		We say a drawing of \(H \cup T \cup A \cup \{v_1v_2,v_2i_1,v_3v_4\}\) is of type 1 (type 2 respectively) if its restriction to \(T \cup A \cup \{v_1v_2,v_2i_1,v_3v_4\}\) is equivalent to the corresponding drawing in \Cref{fig:types}.
		Of course, if \(e \in E(H)\) we homeomorphically distort the drawing of the \(v_1v_2v_3v_4\)-path to be atop the drawing of \(e\) in \(\mathcal{H}\).
		Otherwise the depicted drawing is in the interiour of an arbitrary face of \(\mathcal{H}\).
		
		A cycle \(C\) in \(G\) is \emph{\(\mathcal{(H, I)}\)-flippable} in \(D\) if \(D'\) admits two planar drawings whose restrictions to \(H \cup T \cup A \cup \{v_1v_2,v_2i_1,v_3v_4\}\) are a type 1 and a type 2 drawing respectively.
		Otherwise, \(C\) is called \emph{\(H\)-unflippable} in \(D\).
	\end{definition}
	\begin{figure}
		\begin{center}\includegraphics[scale=0.8]{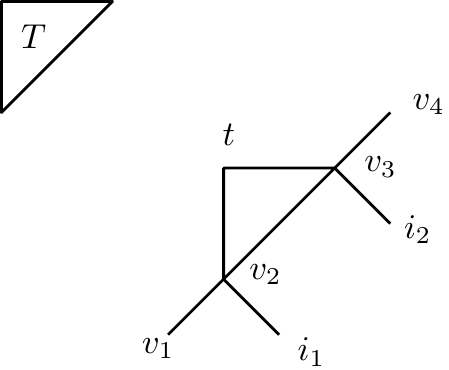} \hfil\vline\hfil \includegraphics[scale=0.8,page=2]{types}
		\end{center}
		\caption{Types for \Cref{def:flip}: type 1 left, type 2 right.\label{fig:types}}
	\end{figure}
	
	Note that, we can enumerate all (linearly many) possible faces for the placement of the drawing of \(A\) and for each face use \cite{AngeliniDFJVKPR15} to check in linear time whether a cycle is \(\mathcal{(H, I)}\)-flippable in a given graph.

	Essentially,  \(\mathcal{(H, I)}\)-flippability captures the fact that the orientation of \(C\) with respect to \(I\) in a planar drawing of \(D\) that is equivalent to \(\mathcal{H}\) on \(H\) is not determined by \(\mathcal{H}\).
	We give an intuitive explanation what we mean by this.
	For us to glue a drawing of \(I\) into a drawing of the \ppdg graph in which \(I\) was contracted using \(C\) as an interface, it is important that, if \(I\) and \(v_I\) are on the same side of \(C\) -- without loss of generality its inside -- in the respective drawings, both drawings of \(C\) have the same cyclic order on the drawing of \(C\).
	Otherwise issues as depicted in \Cref{fig:incompatible} can arise.
	
	For the same cycle \(C\) in two different graphs that is unflippable for two \(D_1\) and \(D_2\) and two different partial drawings \(\mathcal{H}_1\) and \(\mathcal{H}_2\) such that \(E(C) \cap E(\mathcal{H}_1) = E(C) \cap E(\mathcal{H}_2)\), we say \(C\) must have the \emph{same orientation} if for the same choice of \(v_1\), \(v_2\) and the drawing of \(T\) in the above definition, \(D_1\) and \(D_2\) only admit the same type of drawing.
	
	The definition of flippability uses the drawing of \(T\) and \(A\) to point to the outer face and fix the cyclic order on the drawing of \(C\) in type 1 and type 2 drawings respectively.
	The role of the edges \(v_2i_1\), \(i_1n_2\), \(v_3i_2\) and \(i_2n_2\), and the way in which \(v_2i_1\) and \(v_3i_2\) are fixed in the rotation scheme around \(v_2\) and \(v_3\) forces \(I\) to be drawn inside \(C\).
	
	See \Cref{fig:flipex} for a detailed example how these newly introduced concepts behave in the situation from \Cref{fig:incompatible}.
	\begin{figure}
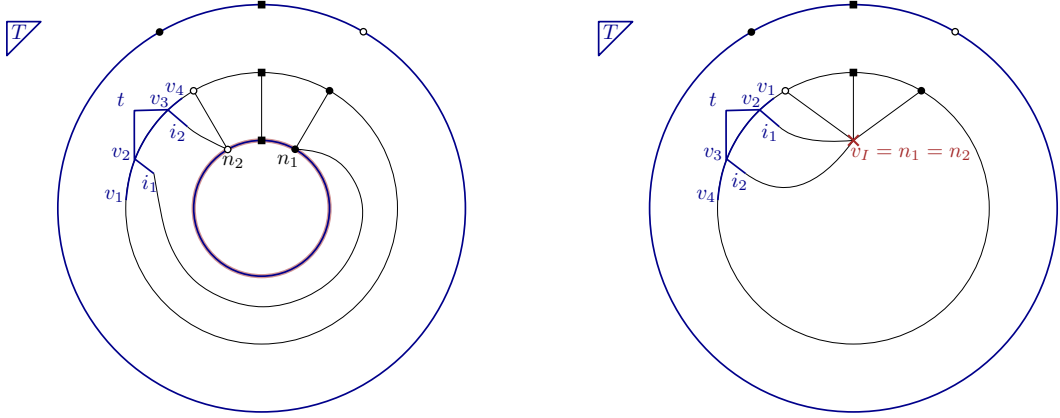

		\begin{center}
			\includegraphics[scale=0.8,page=4]{compatibility}
			\hfill
			\includegraphics[scale=0.8,page=5]{compatibility}
		\end{center}
	\caption{A drawing of type 1 for the example from \Cref{fig:incompatible} (left). There is no drawing of type 2, which means the example is \((\mathcal{H},I)\)-unflippable.
	On the other hand, after contracting \(I\) to \(v_I\) there is a type 2 drawing (right), and contracting \(I\) in the left drawing immediately gives a type 1 drawing.
	This means the contraction results in \((\mathcal{H}',\{v_I\})\)-flippability.\label{fig:flipex}}
	\end{figure}

	We now show the main technical theorem of Phase~I formally (cf.\ \Cref{thm:reducegrah}).
	\begin{theorem}[Detailed version of \Cref{thm:reducegrah}]
		\label{thm:maintechnical}
		For all \(k \in \mathbb{N}\) there is some \(w \in \mathbb{N}\), such that given a \ppdg graph \((G, \mathcal{H})\) in which some edges are marked `uncrossable' in~\FPT-time parameterised by \(k\) we can
		\begin{enumerate}
			\item decide that there is no \conf drawing of \((G, \mathcal{H})\); or
			\item find a tree decomposition of \(G\) of width at most \(w\); or
			\item find a connected subgraph \(I \subseteq G\) and a cycle \(C\) in \(G - V(I)\) that contains the neighbourhood of \(I\) with the following properties:
			Consider the \ppdg graph \((G', \mathcal{H}')\) that arises from \(G\) by contracting \(I\) to the vertex \(v_I\) in \(G\) and contracting \(I \cap H\) in \(\mathcal{H}\), we mark all edges as `uncrossable' that are in \(C\), or incident to \(v_I\), or in \(E(G)\) and marked as `uncrossable'.
			\begin{enumerate}
				\item \label{property:peplanar} There is a \crconf{\(0\)} drawing of \((G[V(I) \cup V(C)] \cup H, \mathcal{H})\).
				\item In any \conf drawing of \((G,\mathcal{H})\), no edge of \(G[V(I) \cup V(C)]\) is crossed and contracting \(I\) in such a drawing leads to a \conf drawing of \((G', \mathcal{H}')\).
				\item If \(C\) is \(\mathcal{(H, I)}\)-flippable, in \(G[V(I) \cup V(C)] \cup H\), in any \conf drawing \(\mathcal{G}'\) of \((G', \mathcal{H}')\), replacing \(v_I\) and its incident edges by the restriction of an arbitrary \crconf{\(0\)} drawing of \((G[V(I) \cup V(C)] \cup H, \mathcal{H})\) to \(G[V(I) \cup V(C)]\), potentially after applying the mirroring from \Cref{def:flip} to obtain a drawing of \(G[V(I) \cup V(C)]\) in which \(C \cup H\) is drawn equivalently as in \(\mathcal{G}'\), which is homeomorphically distorted so that the drawing of \(C \cup H\) coincides with the drawing of \(C \cup H\) in \(\mathcal{G}'\) leads to a \conf drawing of \((G,\mathcal{H})\).
				\item If \(C\) is \(\mathcal{(H, I)}\)-unflippable in \(G[V(I) \cup V(C)] \cup H\) and \(C\) is \((\mathcal{H}',\{v_I\})\)-unflippable in \(G'[\{v_I\} \cup V(C)] \cup H'\), in any \conf drawing of \((G', \mathcal{H}')\), replacing \(v_I\) and its incident edges by the restriction of an arbitrary \crconf{\(0\)} drawing of \((G[V(I) \cup V(C)] \cup H, \mathcal{H})\) to \(G[V(I) \cup V(C)]\) which is homeomorphically distorted so that the drawing of \(C \cup H\) coincides with the drawing of \(C \cup H\) in \(\mathcal{G}'\) leads to a \conf drawing of \((G,\mathcal{H})\).
				\item Otherwise, \(C\) is \(\mathcal{(H, I)}\)-unflippable in \(G[V(I) \cup V(C)] \cup H\) and \((\mathcal{H}',\{v_I\})\)-flippable in \(G'[\{v_I\} \cup V(C)] \cup H'\).
				We can find \(u_1,u_2,u_3 \in N(I)\) such that replacing \(v_I\) by a triangle \(T_I\) and connecting all neighbours of \(v_I\) starting from \(u_3\) up to \(u_1\) to one corner, all neighbours of \(v_I\) starting from \(u_1\) up to \(u_2\) to the next corner, and all neighbours of \(v_I\) starting from \(u_2\) up to \(u_3\) to the last corner in a cyclic ordering of the neighbours of \(v_I\), and adding a specified drawing of the triangle to \(\mathcal{H}'\) results in the \((\mathcal{H}',T_I)\)-unflippability of \(C\) in \(G'[V(T_I) \cup V(C)] \cup H'\).
				Further, after this modification of \(G'\) and \(\mathcal{H}'\), in any \conf drawing of \((G', \mathcal{H}')\), one can replace \(T_I\) and its incident edges by the restriction of an arbitrary \crconf{\(0\)} drawing of \((G[V(I) \cup V(C)] \cup H, \mathcal{H})\) to \(G[V(I) \cup V(C)]\), if any exists, which is homeomorphically distorted so that the drawing of \(C \cup H\) coincides with the drawing of \(C \cup H\) in \(\mathcal{G}'\) leads to a \conf drawing of \((G,\mathcal{H})\).
			\end{enumerate}
		\end{enumerate}
	\end{theorem}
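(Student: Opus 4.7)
The plan is to apply Grohe's Theorem~\ref{thm:grohegrid} to the abstract graph $G$ first, treating the uncrossable-edge markings as given. This immediately yields one of three outcomes: a declaration that the classical conforming crossing number of $G$ exceeds $k$ (which implies that no \conf drawing of $(G,\mathcal{H})$ exists and we return outcome~1); a tree decomposition of width at most $w$ (outcome~2); or a connected subgraph $I\subseteq G$ with $|V(I)|\geq 6$ and a cycle $C\subseteq G-V(I)$ containing $N(I)$ as guaranteed by Grohe. The remaining work is devoted to the last case, where the contraction step needs to be adapted to respect the partial drawing.

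In this case, I would first test property~\ref{property:peplanar}, i.e.\ whether $\crgpd(G[V(I)\cup V(C)]\cup H,\mathcal{H})=0$, using the linear-time algorithm of Angelini et al.~\cite{AngeliniDFJVKPR15}. If the test fails, then any hypothetical \conf drawing of $(G,\mathcal{H})$ would restrict (by the choice of $I,C$ in Grohe's theorem) to a crossing-free drawing of $G[V(I)\cup V(C)]$, which together with $\mathcal{H}$ would contradict this failure; hence we return outcome~1. Otherwise I would decide, using Definition~\ref{def:flip} combined once more with the extension test of \cite{AngeliniDFJVKPR15} (iterated over the linearly many face placements for the auxiliary triangle $T$), whether $C$ is $(\mathcal{H},I)$-flippable in $G[V(I)\cup V(C)]\cup H$ and whether $C$ remains $(\mathcal{H}',\{v_I\})$-flippable in $G'[\{v_I\}\cup V(C)]\cup H'$ after performing the contraction that produces $(G',\mathcal{H}')$.

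The three resulting subcases are then handled along the blueprint of the proof sketch: if $C$ is $(\mathcal{H},I)$-flippable then any planar drawing of $G[V(I)\cup V(C)]$ extending $\mathcal{H}$ can be mirrored to match either orientation of $C$ in an arbitrary \conf drawing of $(G',\mathcal{H}')$, so simple contraction is sound; if $C$ is unflippable on both sides, the unique orientations forced on $C$ automatically agree and plain gluing works; and if $C$ is unflippable before but flippable after contraction, I would rigidify the orientation after contraction by replacing $v_I$ with a triangle $T_I$ whose three corners collect the neighbours of $v_I$ separated by anchors $u_1,u_2,u_3\in N(I)$. These anchors are chosen as three vertices whose cyclic order around $v_I$ is forced in every planar extension of $\mathcal{H}$ on $G[V(I)\cup V(C)]\cup H$; their existence is guaranteed precisely by $(\mathcal{H},I)$-unflippability of $C$, and the combined rotation forced by $T_I$ and $\mathcal{H}'$ reproduces the missing orientation constraint so that $C$ becomes $(\mathcal{H}',T_I)$-unflippable in $G'[V(T_I)\cup V(C)]\cup H'$.

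I expect the main obstacle to lie in the correctness argument for this last subcase, where a \conf drawing $\mathcal{G}'$ of the modified $(G',\mathcal{H}')$ and a crossing-free drawing of $G[V(I)\cup V(C)]$ must be glued along $C\cup H$ without violating either equivalence to $\mathcal{H}$ or the crossing budget. Two subtleties arise: first, one has to show that rerouting the former neighbours of $v_I$ to the corners of $T_I$ neither introduces new crossings (which follows because no edge of $T_I$ and none of the rerouted stubs is crossed in a \conf drawing of the modified instance, mirroring Grohe's argument) nor loses any \conf drawing of the original $(G,\mathcal{H})$; second, the flippability notion is sensitive to the remainder of $H$ around $C$, so the face placement chosen for the auxiliary triangle $T$ in Definition~\ref{def:flip} must be propagated coherently through the entire case analysis and through the rerouting, which is the most delicate bookkeeping step. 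Once these verifications are in place, iterating the reduction $\mathcal{O}(|V(G)|)$ times produces either outcome~1 or an instance of bounded treewidth, completing the algorithm.
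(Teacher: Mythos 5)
Your proposal follows essentially the same route as the paper's proof: apply Grohe's theorem to the marked graph, test whether the subinstance \((G[V(I)\cup V(C)]\cup H,\mathcal{H})\) extends planarly via Angelini et al., and then split on \((\mathcal{H},I)\)-flippability before versus after the contraction, using the triangle \(T_I\) rigidification in the mixed case (the paper simply tries all triples \(u_1,u_2,u_3\in N(I)\) and argues by contradiction that some choice makes \(C\) \((\mathcal{H}',T_I)\)-unflippable, which is the same existence claim you make). The gluing and homeomorphism verifications you flag as the delicate part are exactly the ones the paper carries out in detail, so there is no substantive gap in your approach.
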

	\begin{proof}
		We start by applying result by Grohe~\cite{Grohe04} for \(k\) with \(G\) as input.
		In the case that the algorithm decides that the number of crossings in any drawing of \(G\) in which no `uncrossable' edge is crossed is more than \(k\), we can safely return that the same is true for any such drawing that is equivalent to \(\mathcal{H}\) on the \pdsg.
		Similarly if the algorithm returns a tree decomposition of width at most \(w\), we can return that tree decomposition.
		
		In the last case the algorithm finds a subgraph \(I \subseteq G\) and a cycle \(C\) in \(G\) as described in the preliminaries for Grohe's result for classical crossing number.
		We distinguish whether there is a \crconf{\(0\)} drawing of \((G[V(I) \cup V(C)] \cup H, \mathcal{H})\), or not.
		Recall that, as we assume \(\mathcal{H}\) to be planarised, edges marked as `uncrossable' are irrelevant in this context because no edge should be crossed.
		Hence deciding whether there is a \crconf{\(0\)} drawing of \((G[V(I) \cup V(C)] \cup H, \mathcal{H})\) is equivalent to deciding whether \(\crgpd(G[V(I) \cup V(C)] \cup H, \mathcal{H}) = 0\).
		This can be decided in linear time using the result by Angelini et al.~\cite{AngeliniDFJVKPR15}.
		
		\begin{case}
			There is no \crconf{\(0\)} drawing of \((G[V(I) \cup V(C)] \cup H, \mathcal{H})\).
		\end{case}
		In this case we claim that there is no \conf drawing of \((G, \mathcal{H})\).
		Assume for contradiction that there is such a drawing \(\mathcal{G}\).
		In particular, this drawing has at most \(k\) crossings and in it no `uncrossable' edge is crossed.
		Hence because of the choice of \(I\) and \(C\), no edge of \(G[V(I) \cup V(C)]\) is crossed in \(\mathcal{G}\).
		But as there are exactly \(\crd(\mathcal{H})\) crossings involving only edges of \(H\) in \(\mathcal{G}\), this means that the restriction of \(\mathcal{G}\) to \(G[V(I) \cup V(C)]\) is a \crconf{0} drawing of \((G[V(I) \cup V(C)] \cup H, \mathcal{H})\); a contradiction.
		
		\begin{case}
			There is a \crconf{\(0\)} drawing of \((G[V(I) \cup V(C)] \cup H, \mathcal{H})\).
		\end{case}
		In this case we show that \(I\) and \(C\) have the desired properties described under Point 3. of the theorem statement.
		
		Let \(G'\) arise from \(G\) by contracting \(I\) to \(v_I\), and let \(\mathcal{H}'\) arise from \(\mathcal{H}\) by contracting \(I\).
		
		Property a.\todo{better use label/ref and not fixed letters here} is exactly the case assumption.
		
		We now check b.
		Let \(\mathcal{G}\) be a \conf drawing of \((G, \mathcal{H})\).
		As \(\mathcal{G}\) has at most \(k\) crossings and in it no `uncrossable' edge is crossed, by the choice of \(I\) and \(C\), no edge of \(G[V(I) \cup V(C)]\) is crossed in \(\mathcal{G}\), and contracting \(I\) in \(\mathcal{G}\) leads to a drawing \(\mathcal{G}'\) of \(G'\) with at most \(k\)-crossings in which no `uncrossable' edge is crossed.
		It remains to show that \(\mathcal{G}'\) restricted to \(H'\) is equivalent to \(\mathcal{H}'\).
		Consider the homeomorphism from \(\mathbb{R}^2\) to itself that takes the restriction of \(\mathcal{G}\) to \(H\) onto \(\mathcal{H}\).
		Apart from \(v_I\) and its incident edges this also takes the restriction of \(\mathcal{G}'\) to \(H'\) onto \(\mathcal{H}'\).
		This means that if \(I \cap H = \emptyset\) we already have a homeomorphism as desired.
		
		Otherwise we additionally map the subset of \(\mathbb{R}^2\) corresponding to the face of \(\mathcal{G}' - \{v_I\}\) in which both of the contraction vertices \(v_I\) and \(v_{I \cap H}\) lie (this is well-defined because \(I\) is drawn planar in \(\mathcal{G}\) and hence also \(I \cap H\) is drawn planar in \(\mathcal{H}\)) homeomorphically to itself while mapping the contraction point of \(I\) in \(\mathcal{G}'\) to the contraction point of \(I \cap H\) in \(\mathcal{H}'\).
		In this way all endpoints of edges of \(H\) incident to \(v_I\) are taken to the appropriate endpoints of edges incident to \(v_{I \cap H}\).
		Now it is straightforward to see that the edge drawings themselves can also be mapped onto each other by a homeomorphism of \(\mathbb{R}^2\) to itself.
		The combination of these homeomorphisms describes a homeomorphism from \(\mathbb{R}^2\) to itself that takes the restriction of \(\mathcal{G}'\) to \(H'\) onto \(\mathcal{H}'\).
		
		Now we check c. through e.
		
		Informally speaking, if we find an \(\mathcal{(H, I)}\)-flippable cycle \(C\) we will essentially be able to flip any planar drawing of the contracted subgraph to appropriately match the interface in a drawing of \(G\) after contraction.
		
		If we find a cycle that is \(\mathcal{(H, I)}\)-unflippable and the cycle remains unflippable after the contraction of the subgraph is performed, any planar drawing of the contracted subgraph automatically matches the interface in a drawing of \(G\) after contraction.
		
		The last case is that the cycle we find is \(\mathcal{(H, I)}\)-unflippable but seems to be flippable after the contraction of the subgraph is performed.
		In this case the orientation of the cycle is fixed in any planar drawing of the contracted subgraph, but both orientations of the cycle are possible after the contraction of the subgraph.
		We must therefore appropriately force the orientation of \(C\) in the drawing after preforming the contraction to match the one which is in fact forced before the contraction.
		We will do this by extending \(\mathcal{H}\) carefully.
		
		In all three of c. through e. it will be important to observe that the case distinction on the flippability of \(C\) ensures that in each case the drawings of \((G, \mathcal{H})\) defined from drawings of \((G' \mathcal{H}')\) are actually well-defined (recall that we define contractions in subgraphs of drawings only for connected and uncrossed subgraphs).
		
		\noindent \textbf{c.} \hspace{.5em} Consider the case that \(C\) is \(\mathcal{(H, I)}\)-flippable in \(G[V(I) \cup V(C)] \cup H\) and let \(\mathcal{G}'\) be an arbitrary \conf drawing of \((G', \mathcal{H}')\).
		Also consider an arbitrary .
		According to \Cref{def:flip}, we find a \crconf{\(0\)} drawing \(\mathcal{I}\) of \((G[V(I) \cup V(C)] \cup H, \mathcal{H})\) in which \(C\) has the same orientation with respect to \(\mathcal{H}'\) as in \(\mathcal{G}'\) and which is equivalent to \(\mathcal{H}\) on \(H \cap I\).
		This means there is a homeomorphism from the 2-dimensional sphere to itself that takes the drawing restricted to the connected component of \(G'[V(C) \cup V(H')] - \{v_I\}\) containing \(C\) to \(\mathcal{G}'\) restricted to \((C \cup H) - \{v_I\}\).
		Let \(\mathcal{I}\) be the restriction of the resulting homeomorphic drawing to \(G[V(I) \cup V(C)]\).
		Replacing \(v_I\) and its incident edges by \(\mathcal{I}\) then results in a drawing of \(G\) which coincides with \(\mathcal{G}'\) on \(H' - \{v_I\}\) and is equivalent to \(\mathcal{H}\) on \(H \cap I\).
		Overall this implies the drawing is equivalent to \(\mathcal{H}\) on \(H\).
		Moreover \(\mathcal{G}'\) has at most \(k\) crossings none of which involve `uncrossable' edges, and \(\mathcal{I}\) is planar.
		Because of the choice of \(I\) and \(C\),
		defining \(\mathcal{G}\) from \(\mathcal{G}'\) by replacing \(v_I\) and its incident edges by \(\mathcal{I}\) leads to a drawing of \(G\) with at most \(k\) crossings in which no `uncrossable' edge is crossed, i.e.\ this shows the drawing is \conf.
		
		\noindent \textbf{d.} \hspace{.5em} Next, consider the case that \(C\) is \(\mathcal{(H, I)}\)-unflippable in \(G[V(I) \cup V(C)] \cup H\) and \((\mathcal{H}',\{v_I\})\)-unflippable in \(G'[\{v_I\} \cup V(C)] \cup H'\).
		Let \(\mathcal{G}'\) be an arbitrary \conf drawing of \((G', \mathcal{H}')\).
		Also consider an arbitrary \crconf{\(0\)} drawing \(\mathcal{G_I}\) of \((G[V(I) \cup V(C)] \cup H, \mathcal{H})\).
		We claim that these two drawings are equivalent on \((C \cup H') - \{v_I\}\).
		Both drawings are equivalent on \(C\) and \(H' - \{v_I\}\) separately.
		For them to be equivalent on both \(C\) and \(H' - \{v_I\}\) simultaneously, it is sufficient that \(C\) has the same orientation in both drawings.
		By \Cref{def:flip}, as \(C\) is \(\mathcal{(H, I)}\)-unflippable in \(G[V(I) \cup V(C)] \cup H\) the orientation of \(C\) is uniquely determined, given a fixed drawing of \(H\).
		Contracting \(I\) in \(\mathcal{G_I}\) leads to a planar drawing of \(G'[\{v_I\} \cup V(C)] \cup H'\) which is equivalent to \(\mathcal{H}'\) on \(H'\) and equivalent to \(\mathcal{G_I}\) on \((C \cup H') - \{v_I\}\).
		By \Cref{def:flip}, as \(C\) is also \((\mathcal{H}',\{v_I\})\)-unflippable in \(G'[\{v_I\} \cup V(C)] \cup H'\) and by the choice of `uncrossable' edges in \((G',\mathcal{H})\), the restriction of \(\mathcal{G}'\) to \(G'[\{v_I\} \cup V(C)] \cup H'\) is also planar and the orientation of \(C\) in \(\mathcal{G}'\) is the same in \(\mathcal{G_I}\).
		
		Let \(\mathcal{I}\) be the restriction of \(\mathcal{G_I}\) to \(G[V(I) \cup V(C)]\).
		Again, replacing \(v_I\) and its incident edges by \(\mathcal{I}\) then results in a drawing of \(G\) which coincides with \(\mathcal{G}'\) on \(H' - \{v_I\}\) and is equivalent to \(\mathcal{H}\) on \(H \cap I\).
		Overall this implies the drawing is equivalent to \(\mathcal{H}\) on \(H\).
		Moreover \(\mathcal{G}'\) has at most \(k\) crossings none of which involve `uncrossable' edges, and \(\mathcal{I}\) is planar.
		Because of the choice of \(I\) and \(C\),
		defining \(\mathcal{G}\) from \(\mathcal{G}'\) by replacing \(v_I\) and its incident edges by \(\mathcal{I}\) leads to a drawing of \(G\) with at most \(k\) crossings in which no `uncrossable' edge is crossed, i.e.\ this shows the drawing is \conf.
		
		\noindent \textbf{e.} \hspace{.5em} Finally, consider the case that \(C\) is \(\mathcal{(H, I)}\)-unflippable in \(G[V(I) \cup V(C)] \cup H\) and \((\mathcal{H}',\{v_I\})\)-flippable in \(G'[\{v_I\} \cup V(C)] \cup H'\).
		We try all possible \(u_1, \dotsc, u_3 \in N(I)\) and then let \(\mathcal{H}'\) be defined as in the theorem statement and \(H'\) be the underlying graph that is drawn by \(\mathcal{H}'\) till we arrive at a choice in which \(C\) is \((\mathcal{H}',T_I)\)-unflippable.
		Such a graph must exist as otherwise \(C\) is \(\mathcal{(H, I)}\)-flippable in \(G[V(I) \cup V(C)] \cup H\), contradicting our case assumption (for instance, from a drawing of the type that exists one can use \(n_1\), \(n_2\) and a neighbour of \(I\) incident to an edge that can no longer be added planely if considering the other type of drawing).
		By connecting \(T_I\) according to the fixed orientation of \(C\) with respect to \(I\) that follows from the \(\mathcal{(H, I)}\)-unflippability of \(C\) in \(G[V(I) \cup V(C)] \cup H\), we can also fix the orientation of \(C\) with respect to \(T_I\).
		Now we can argue the remaining claims in this case analogously to d.
	\end{proof}
	Recalling our running example from \Cref{fig:incompatible}, we have already determined (\Cref{fig:framingex}) that it leads to Case~e.\ of \Cref{thm:maintechnical}.
	\Cref{fig:triangle} depicts the construction in this case for our concrete example.
	\begin{figure}
		\begin{center}
			\includegraphics[scale=0.6,page=6]{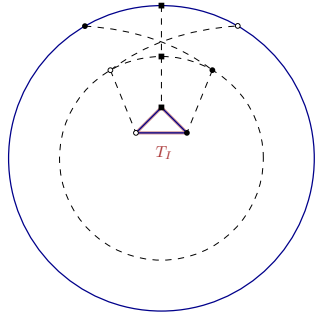}
		\end{center}
		\caption{Replacing of \(v_I\) by \(T_I\) for the example from \Cref{fig:framingex} to force the orientation that follows from the \((\mathcal{H},I)\)-unflippability of \(C\).
			Now, in contrast to the situation in \Cref{fig:incompatible}, we can correctly determine that no drawing extending \(\mathcal{H}\) with \(0\) crossings exists.\label{fig:triangle}}
	\end{figure}
	
	We can iteratively apply \Cref{thm:maintechnical} to reduce our instance to a graph of small treewidth;
	in each iteration that leads to Point 3. we decrease the size of the graph by at least three vertices and, given a solution for the smaller graph, can distinguish the flippability case in polynomial time (this is because the definition of flippability can trivially be checked in quadratic time for any given cycle and partially drawn graph) and use the result by Angelini et al.~\cite{AngeliniDFJVKPR15} to find a drawing to replace the contracted vertex in linear time.
	If at any point we run into Point 1. or are not able to find an appropriate drawing for the contracted subgraph, we can safely decide that we have a no-instance.

	\twbound*
	\begin{proof}
		Let \(g \in \mathbb{N}\) such that a grid of dimensions \(g \times g\) occurs as a minor of \(F\).
		
		We can simply argue that the inner \((g-2) \times (g-2)\)-subgrid cannot contain any edge which corresponds to a path in \(F\) consisting only of edges of a framing cycle:
		Assume for a contradiction that there are two vertices \(v,w\) from the same framing cycle \(C\) around \(c \in V(\mathcal{H}^\times)\) such that a path \(P\) in \(C\) between \(v\) and \(w\) corresponds to an edge of the subgrid.
		After removing \(P\) from \(F\), the neighbourhood of \(v,w\) consists of at most five vertices; namely~\(c\), at most one neighbour each on \(C - V(P)\), and at most one vertex each for which the edge between \(c\) and that vertex leads to \(v\) and \(w\) respectively.
		However, two vertices that are incident on the inner subgrid of a grid always must have a neighbourhood of size at least six.
		A contradiction.
		
		This means the \((g - 2) \times (g - 2)\)-subgrid remains even if we contract every framing cycle with the vertex of \(\mathcal{H}^\times\) around which it was introduced.
		After this contraction we are left with \((G - E(H)) \cup \mathcal{H}^\times\) in which some edges are trippled and further triples of possible connector edges are added.
		
		We now also argue that within the inner \((g-2) \times (g-2)\)-subgrid we can assume that there are fewer than \(4^{k + 1}\) edges (where \(k = \crgpd(G, \mathcal{H})\)\,) corresponding to a path in \(F\) that contains connector edges:
		Firstly, on paths in which we can instead of a connector edge use a parallel edge from \(G\), we do so.
		Hence we only focus on connector edges where no edge of \(G\) could be used to planarly connect two previously disconnected components of \(F\) during Step 1 of Definition~\ref{def:framing}.
		By the construction, connector edges can be viewed as edges between connected components of \(\mathcal{H}^\times\), and contracting each of these components results in a tree \(T\) with the connector edges as the edge set.
		As every vertex of the subgrid has degree \(4\) in the entire grid, in case there are more than \(4^{k + 1}\) such edges, we find at least two \(e\) and \(f\) are at distance at least \(k + 1\) in \(T\).
		By the construction of \(F\) this means \(e\) and \(f\) connect connected components of \(\mathcal{H}^\times\) that are separated by at least \(k + 1\) closed curves in \(\mathcal{H}^\times\).
		There must be at least one path in \(F\) that does not use any connector edge or edge in \(G\) which is parallel to a connector edge, corresponding to a path in the grid between the endpoints of \(e\) and~\(f\).
		That means this path is entirely in \(G\) and connects two faces of \(\mathcal{H}^\times\) that are separated by at least \(k + 1\) closed curves in \(\mathcal{H}^\times\), and by the Jordan curve theorem~\cite{Veblen05} it necessarily has crossings with each of these curves.
		This would imply that \(\crgpd(G, \mathcal{H}) > k\).
		
		Removing all edges from the \((g-2) \times (g-2)\)-subgrid which require connector edges leaves a subgrid of size at least \((g - 4^{k + 1} - 2) \times (g - 4^{k + 1} - 2)\) intact, which is a minor of \(G^o=(G - E(H)) \cup \mathcal{H}^\times\).
		By the grid minor theorem~\cite{RobertsonST94} this implies that $$g - 4^{k + 1} - 2 \in \mathcal{O}(\sqrt{\operatorname{tw}(G^o)/\log(\operatorname{tw}(G^o))}).$$
		On the other hand applying the grid minor theorem for \(F\) implies \(g \in \mathcal{O}(\sqrt{\operatorname{tw}(F)/\log(\operatorname{tw}(F))})\).
		This concludes our proof.
	\end{proof}

	\section{Additions to Section~\ref{sec:MSOenc} -- MSO\(_2\)-encoding}

	Following Definition~\ref{def:framing}, we now define an extended framing which generalizes the former in a way that will be necessary when looking for a \ppdg subgraph in the form of a subdivision of a fixed \ppdg graph.
	Essentially, the problem addressed by this extended definition is the one of having a disconnected \pdsg in the subgraph, which is embedded such that the \pdss\ components are connected together indirectly, through other components not existing in the subgraph.

	Let $\ca P=(G,\ca H)$ be a \ppdg graph where $\ca H$ is planar. We call a face of $\ca H$ {\em rich} if it is incident with more than one connected component of~$\ca H$.
	If $\ca H$ is connected, then an {\em extended framing} of $\ca P$ is just an ordinary framing of~$\ca P$.
	If, otherwise, $\ca H$ has $c>1$ connected components, then an {\em extended framing} of $\ca P$ is a graph $\bar H'\cup G'$, where a \ppdg graph $\ca P'=(G',\ca H')$ is defined next and $\bar H'$ is a framing of just $(H',\ca H')$.
	The \ppdg graph $\ca P'$, called an {\em extended framing base}, is an arbitrary graph obtained from $\ca P=(G,\ca H)$ by adding at most $2c-2$ (possibly zero) new vertices, which are called the {\em connector vertices} and are placed into rich faces of $\ca H$ or subdivide edges bounding rich faces.

	\begin{lemma}\label{lem:finframings}
		Let $\ca P_1=(G_1,\ca{H}_1)$ be a \ppdg graph. Then the set of all distinct (up to isomorphism) extended framings of $\ca P_1$ is finite.
	\end{lemma}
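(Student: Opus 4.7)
My plan is to bound, in terms of invariants of the fixed input \ppdg graph $\ca P_1=(G_1,\ca H_1)$, the number of choices in every non-deterministic step of the extended framing construction, and then to observe that Steps~2--4 of Definition~\ref{def:framing} are fully deterministic once Step~1 is fixed.

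First I would treat the choice of an \emph{extended framing base} $\ca P_1'=(G_1',\ca H_1')$. Let $c$ denote the number of connected components of $\ca H_1$; if $c=1$ there is a unique base ($\ca P_1$ itself), so assume $c>1$. The definition permits adding at most $2c-2$ connector vertices, each placed either inside one of the rich faces of $\ca H_1$ or as a subdivision point of an edge bounding such a rich face. Since $\ca H_1$ has finitely many faces and finitely many edges, there are only finitely many combinatorially distinguishable locations for a single connector vertex, and since the number of added connector vertices is at most~$2c-2$, the total number of bases $\ca P_1'$ up to isomorphism of \ppdg graphs is finite.

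Next I would bound, for each fixed base $\ca P_1'$, the number of ordinary framings of $(H_1',\ca H_1')$. The only non-deterministic step is Step~1 of Definition~\ref{def:framing}: at each iteration one either (a) picks an edge of $G_1'$ that can be inserted into the current planar drawing while connecting two components, from the finite edge set $E(G_1')$, or (b) once no such edge exists, picks a rich face $B$ of the current drawing, a vertex $v$ on $B$, and one vertex from each other component incident to $B$. Each of these choices is taken from a finite set determined by the current (finite) drawing, and the whole loop terminates after at most $c-1$ iterations. Hence only finitely many sequences of choices arise, and they produce only finitely many resulting graphs of $\ca D$ up to isomorphism.

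Finally, I would note that once $\ca D$ from Step~1 is fixed, Steps~2 and~3 are deterministic: Step~2 replaces each edge of $\ca D$ by the canonical framing triplet, and Step~3 attaches around each vertex of $\ca H^\times$ a framing cycle whose structure is dictated by the rotation in $\ca D'$; Step~4 only takes a graph union. Putting the bounds together, the total number of extended framings up to isomorphism is bounded by (number of bases) $\times$ (number of framings per base), which is finite. The only step that requires genuine care is Step~1 of the base-framing construction, since the iterative edge-addition process is defined on evolving drawings; my plan is to handle this by a straightforward induction on the number of remaining components, using the fact that a single iteration introduces only finitely many combinatorial possibilities.
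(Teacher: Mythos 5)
Your proposal is correct and follows essentially the same route as the paper's proof: the paper likewise observes that Definition~\ref{def:framing} is deterministic except for the connector-edge choices in Step~1 (finitely many in a finite graph), and that there are only finitely many non-isomorphic extended framing bases since at most $2c-2$ connector vertices are added in finitely many combinatorial positions. Your version just spells out the per-step bounds in more detail.
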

	\begin{proof}
		Definition~\ref{def:framing} constructs a framing of a \ppdg graph in a deterministic way, except the arbitrary choice of connector edges when the \pdsg is not connected.
		There are finitely many choices for these connector edges in a finite graph.
		Furthermore, in the case of \ppdg graph $\ca P_1$ with disconnected $\ca H_1$, there are finitely many non-isomorphic choices of edge subdivisions and additions of isolated vertices to faces with at most $2c-2$ new vertices in total.
	\end{proof}

	Before getting to the full proof of Lemma~\ref{lem:restrtopol}, we repeat the definition of a framing topological minor in a more detailed setting.
	Considering framings $\bar G_1$ of $(G_1,\ca{H}_1)$ and $\bar G_2$ of $(G_2,\ca{H}_2)$, and let $N_1$ and $N_2$ in this order denote the sets of connector edges used in the framings $\bar G_1$ and~$\bar G_2$.
	For every edge $f\in V(H_i)\cup N_i$, $i=1,2$, let $t(f)$ denote the framing triplet of $f$ in $\bar G_i$, and for $v\in V(H_i)$ let $s(v)$ denote the framing cycle around $v$ in~$\bar G_i$.

	\begin{definition}\label{def:framingtopol}
		We say that $\bar G_1$ is a {\em framing topological minor} of $\bar G_2$ if there is a topological-minor embedding of $\bar G_1$ into $\bar G_2$ (i.e., a subgraph-isomorphism mapping of a subdivision of $\bar G_1$ into $\bar G_2$) which additionally satisfies
		\begin{itemize}
			\item every edge of $G_1$ (respectively., of $H_1$) is mapped -- within the topological-minor embedding of $\bar G_1$ into $\bar G_2$ -- into a path of $G_2$ (respectively., of $H_2$), and every edge of $N_1$ is ``virtually'' (via the framing triplets) mapped into a path of $H_2+N_2$,
			\item whenever a predrawn vertex $v\in V(H_1)$ is mapped into $v'\in V(H_2)$, the framing cycle $s(v)$ in $\bar G_1$ is embedded in the corresponding framing cycle $s(v')$ in~$\bar G_2$,
			\item whenever a predrawn edge $f\in E(H_1)$ is mapped into a path $P_f\subseteq H_2$ where $P_f=(v_0,v_1,\ldots,v_m)$,
			the framing triplet $t(f)$ of $f$ in $\bar G_1$ is embedded (as three internally-disjoint paths) in the subgraph $\bigcup_{i=0}^{m-1}t(v_iv_{i+1})\cup\bigcup_{i=1}^{m-1}s(v_i)$ of $\bar G_2$, such that one of the three paths $t(f)$ maps to passes through all vertices of~$P_f$, and
			\item whenever a connector edge $f\in N_1$ is mapped into a path $P_f\subseteq H_2+N_2$, the same condition as in the previous point holds for~$t(f)$.
		\end{itemize}
	\end{definition}

	\encodeintopol*
	\begin{proof}
	In the forward direction, let us assume that $\ca P_0$ is a subdivision of $\ca P_1$ such that $\ca P_0\subseteq\ca P_2$.
	Let $N_2$ denote the set of connector edges which have been used when constructing the framing $\bar G_2$ of $\ca P_2$, cf.~Definition~\ref{def:framing}.
	Then $N_2$ is formed by edges with ends in~$H_2$ and the graph $\widetilde H_2:=H_2+N_2$ is connected.
	Let $\ca P_0=(G_0,\ca H_0)$ and note that $H_0$ is a subgraph of~$H_2$.
	Choose an (arbitrary) graph $H_0^+$ such that $H_0^+$ is connected, $H_0\subseteq H_0^+\subseteq\widetilde H_2$, and $H_0^+$ is inclusion-minimal of these properties.

	If $H_0$ is connected, then we are in the trivial case of $H_0^+=H_0$, but we need to investigate the other case in which $H_0$ has $c>1$ connected components.
	Observe that after contracting in $H_0^+$ each connected component of $H_0$ into a single vertex, we get by minimality a tree~$T$.
	Every leaf of $T$ is a former component of $H_0$, and the internal vertices may be either former components of $H_0$ or vertices of $X:=V(T)\cap\big(V(H_2)\setminus V(H_0)\big)$.
	In particular, $T$ has at most $c$ leaves. An easy calculation then shows that $X$ contains at most $c-2$ vertices of degree~$>2$,
	and together at most $2c-2$ vertices of $T$ are not degree-$2$ members of~$X$.
	Consequently (and recalling that $H_0$ is a subdivision of $H_1$), $H_0^+$ is a subdivision of some extended framing base of~$\ca P_1$.
	Let $\bar G_1$ be the extended framing of $\ca P_1$ which comes from this base.

	Our claim is that $\bar G_1$ is a framing topological minor of $\bar G_2$.
	This is immediate regarding the original (non-frame) edges of $G_1$ in $\bar G_1$ since $\ca P_0\subseteq\ca P_2$ is a subdivision of $\ca P_1$.
	Regarding the frame edges of~$G_1$, we proceed along the definition of a framing topological minor given above.
	For the edges $f$ of $H_1$ and the connector edges of $\bar G_1$, we have got the corresponding framing triplets in $\bar G_2$ along the path $P_f\subseteq H_2+N_2$ that $f$ maps into.
	It remains to easily observe that the union of these triplets and their framing cycles indeed contains three internally disjoint paths between the ends of $P_f$.
	The framing triplet of $f$ then maps into them in the right order.

	\medskip
	We similarly proceed in the backward direction of the proof.
	Assume that an extended framing $\bar G_1$ of $\ca P_1$ is a framing topological minor of~$\bar G_2$.
	Then, in particular, a subdivision of $G_1$ is a subgraph of $G_2$ such that the corresponding subdivision of $H_1$ belongs to $H_2$.
	The frame edges of $\bar G_1$ form a $3$-connected planar graph $F_1$ (on the vertex set $V(H_1)$), and $F_1$ is by the definition mapped in $\bar G_2$ into a subgraph $F_2$ isomorphic to a subdivision of $F_1$.
	Both $F_1$ and $F_2$ hence have unique planar drawings, and since $\ca H_1$ and $\ca H_2$ are both planar, we conclude that the drawing $\ca H_1$ is equivalent to the corresponding restriction of $\ca H_2$.
	Then the subdivision of $\ca P_1$ is a \ppdg subgraph of $\ca P_2$.
	\end{proof}

	\MSOforobstruction*
	\begin{proof}
		Let $\cf F$ be the set of all distinct extended framings of $\ca P_1$ (which is finite by Lemma~\ref{lem:finframings}).
		Using Lemma~\ref{lem:restrtopol}, we write the formula $\sigma\equiv\bigvee_{\bar G_1\in\cf F}\sigma[\bar G_1]$ where $\bar G_2\models\sigma[\bar G_1]$ expresses that $\bar G_1$ is a restricted topological minor of~$\bar G_2$.
		Note that the graph $\bar G_1$ is a constant in the formula $\sigma[\bar G_1]$, i.e., $\bar G_1$ is hardcoded in $\sigma[\bar G_1]$ which is then evaluated only on $\bar G_2$.
		Expressing $\sigma[\bar G_1]$ in MSO$_2$ is then a routine task -- we simply do the following:
		\begin{itemize}
			\item Guess (using a new existential quantifier for each one) for all vertices $v$ of $G_1$ their distinct images in $V(G_2)$
			and for all edges $f$ of $G_1$ the edge sets $E_f\subseteq E(G_2)$, where $E_f$ is intended to contain the edges of the path that $f$ maps to according to Definition~\ref{def:framingtopol}.
			\item Guess for every connector edge $f$ of $\bar G_1$ the edge set $E_f\subseteq E(\bar G_2)\setminus E(G_2)$ of frame edges restricted only to framing triplets in $\bar G_2$, where $E_f$ is intended to represent one path of the framing triplet of~$f$ in~$\bar G_2$.
			\item Verify consistency of the previous edge sets; namely, that the sets are pairwise disjoint, that for every edge $f$ the set $E_f$ indeed induces a path in $\bar G_2$ with ends corresponding to the ends of $f$,
			and that no vertex is incident at the same time to an edge of $E_f$ and of $E_{f'}$, $f'\not=f$, unless it is an end of both paths.
			\item Guess for every predrawn or connector edge $f\in E(H_1)\cup N_1$, where $N_1$ is the set of the connector edges of~$\bar G_1$, the set $E^+_f\subseteq E(\bar G_2)\setminus E(G_2)$,
			and check the following property with respect to $f$ and the previous set $E_f$:
			if $E_f$ induces a path $P_f=(v_0,v_1,\ldots,v_m)$ in $\bar G_2$ then, precisely as in Definition~\ref{def:framingtopol}, we have that $E^+_f=\bigcup_{i=0}^{m-1}t(v_iv_{i+1})\cup\bigcup_{i=1}^{m-1}s(v_i)$.
			\item For every predrawn vertex $v\in V(H_1)$, let $f_1,\ldots,f_d$ be the edges of $H_1+N_1$ incident to~$v$, and $v_1\in V(H_2)$ be the vertex $v$ maps to in $\bar G_2$.
			The framing triplets of $f_1,\ldots,f_d$ determine a cyclic order on the framing cycle $s(v)$ of~$v$.
			Verify that the sets $E^+_{f_1},\ldots,E^+_{f_d}$ are hitting the framing cycle $s(v_1)$ of $v_1$ in $\bar G_2$ in the same order.
			Furthermore, for $i\in[d]$ and $f_i=vw$, verify the existence of three disjoint paths in $\bar G_2$ from the framing cycle $s(v_1)$ to $s(w_1)$ contained in $E^+_{f_i}$
			(this way we ensure that the considered cyclic orders around $v_1$ and around $w_1$ in $\bar G_2$ are not flipped).
	\end{itemize}
	Validity of this formula $\sigma[\bar G_1]$ follows directly from Definition~\ref{def:framingtopol}.
	\end{proof}

	\encodealternating*
	\begin{proof}
		We encode (in MSO$_2$) the existence of an alternating chain $\ca P_0=(G_0,\ca{H}_0)\subseteq^?\,\ca P_2$ as follows.
		Recall that $\ca H_0$ consists of a plane cycle $C$ and vertices $s,t$ (positioned within $C$ according to the parity of the chain), and we check $(H_0,\ca H_0)$ as a \ppdg subgraph of $\ca P_2$ directly as in Lemma~\ref{lem:MSOforobstruction} on~$\bar G_2$.
		Then we have to check existence of the $a$ paths $P_1,\ldots,P_a\subseteq G_2$ as in the definition of an alternating chain.
		Note that this is not possible directly (i.e., checking path by path) since $a$ is not bounded and we do not have a separate variable for each of the paths.
		Instead, we are going to use the following standard trick.

		Imagine we have (or ``guess'' with existential quantifiers in $\tau$) the unions $A=P_1\cup P_3\cup\dots$ and $B=P_2\cup P_4\cup\dots$ of the odd- and even-indexed paths of the assumed chain (which are not disjoint in general).
		We say that two of our paths $P_i$ and $P_j$ {\em alternate} if, as in the definition of an alternating chain, $V(P_i)\cap V(P_j)=\emptyset$ and the ends of $P_i$ and $P_j$ alternate on~$C$.
		Since the length $a$ of our assumed alternating chain is not really important -- only the parity of $a$ matters, we just check the existence of a chain between $P_1$ and $P_a$ of the required parity as follows:
		For every subgraph $D\subseteq A\cup B$ such that $P_1\subseteq D$ and $P_a\not\subseteq D$, we check that there exist $i,j\in[a]$ such that one of the following is true; $P_i\subseteq A\cap D$ (respectively., $P_i\subseteq B\cap D$), $P_j\not\subseteq D$ and $P_j\subseteq B$ (respectively., $P_j\subseteq A$),
		and $P_i$ and $P_j$ alternate.
		If we started from an actual alternating chain, this check clearly succeeds. On the other hand, if this check succeeds, we can find an alternating chain whose paths come from $A\cup B$ and with the required parity (which is implicitly checked with $P_a\subseteq A$ or $P_a\subseteq B$).

		To summarise the proof, we construct $\tau$ as follows:
		\begin{itemize}
			\item The formula $\tau$ is a disjunction of two similar subformulas (for $a$ odd and $a$ even), which are described together in the next points.
			\item Let $\ca P_1=(H_1,\ca H_1)$ be a \ppdg graph in which $H_1$ is formed by $K_3$ and two extra vertices $s,t$, and $\ca H_1$ specifies a planar drawing with both $s,t$ in the same face ($a$ even) or in distinct faces ($a$ odd) of the triangle.
			We use the formula $\sigma$ of Lemma~\ref{lem:MSOforobstruction} to find (i.e., to guess using existential quantifiers) a framing of some subdivision $(H_0,\ca H_0)$~of~$\ca P_1$ in the given framing $\bar G_2$ of $\ca P_2$.
			This way we obtain the cycle $C$ of $\ca P_0$ in addition to~$s,t$.
			\par For the rest of the construction, we refer only to the vertices and edges of $\bar G_2$ which are from $G_2$ (i.e., the frame of $\ca P_2$ will no longer be relevant).
			\item We construct a subformula $\pi(F)$ which, for a given edge-set variable $F$, checks that $F$ is the edge set of a path which has both ends on $C$ and is otherwise disjoint from~$C$.
			This involves a standard MSO connectivity check and testing the degrees of the vertices incident to~$F$.
			\item Using existential quantifiers, we guess two edge sets $F_A$ and $F_B$ disjoint from $E(C)$ (imagine these as $F_A=E(A)$ and $F_B=E(B)$ in the above sketch),
			and such that there is no subset $X\subseteq F_A\cap F_B$ for which $\pi(X)$ would hold.%
			\footnote{Notice that we cannot simply demand that $F_A\cap F_B=\emptyset$, as our alternating chain is a splitting of a reduced alternating chain used in \cite{JelinekKR13} as a PEG-minor.}
			\item We moreover guess sets $F_1\subseteq F_A$ and $F_2\subseteq F_B$ if $a$ is even (respectively., $F_2\subseteq F_A$ if $a$ is odd).
			We check that $\pi(F_1)$ and $\pi(F_2)$ hold and that $s$ is incident to $F_1$ and $t$ is incident to~$F_2$
			(in the above sketch this corresponds to $F_1=E(P_1)$ and $F_2=E(P_a)$).
			\item For every edge set $F_D$ (using a universal quantifier), check the following.
			If $F_D\subseteq F_A\cup F_B$ such that $F_1\subseteq F_D$ and $F_2\not\subseteq F_D$, then there must exist $F_3,F_4\subseteq F_A\cup F_B$ such that
			\begin{itemize}
				\item $\pi(F_3)$ and $\pi(F_4)$ hold, $F_3\subseteq F_D$ and $F_4\not\subseteq F_D$, and no vertex is incident both to an edge of $F_3$ and to an edge of~$F_4$,
				\item the two vertices of $C$ incident to edges of $F_3$ alternate on $C$ with the two vertices incident to edges of $F_4$, and
				\item either $F_3\subseteq A$ and $F_4\subseteq B$, or $F_4\subseteq A$ and $F_3\subseteq B$.
			\end{itemize}
		\end{itemize}

		To finish the proof, in one direction, we can routinely verify that if $\ca P_0\subseteq\ca P_2$ is an alternating chain in the given \ppdg graph $\ca P_2$, then~$\bar G_2\models\tau$.
		In the other direction, if $\bar G_2\models\tau$, then we obtain the \ppdg subgraph $(H_0,\ca H_0)$ and the edge sets $F_A$ and~$F_B$.
		To derive that this implies the existence of an alternating chain $\ca P_0\supseteq(H_0,\ca H_0)$ in $\ca P_2$, observe that $\tau$ actually verifies that the following bipartite graph is connected:
		The vertex set of this auxiliary graph is formed by the sets $X\subseteq F_A\cup F_B$ satisfying $\pi(X)$, and $X,X'$ are adjacent if and only if the paths induced by them alternate on~$C$, and $X\subseteq F_A$, $X'\subseteq F_B$ or vice versa.
	\end{proof}

	\MSOphaseII*
	We remark that it could be tempting to generalise this theorem in a way that some edges of $G$ would carry a numeric upper bound (specially, bound~$0$ for an `uncrossable' edges) on the allowed number of crossings on them, but there is a subtle catch.
	This generalisation would be possible, within the current proof method, only if $\ca H$ was planar or, at least, the edges carrying the bounds would not be crossed in $\ca H$ or would be from the rest of~$G$.
	Otherwise, while planarising $\ca H$ at the beginning, the bounds on planarised edges would be lost in $\ca H^\times$ and could not be in general recovered by other means in the formula $\psi_k$.
	See also the discussion in Section~\ref{sec:per-edge}.
	\begin{proof}		
		Here we continue with additional technical details for the proof sketch presented in Section~\ref{sec:PPDCR}.
		Recall that we want to express, by a formula $\psi'_k$, that a ``guessed planarisation'' of the input graph $\ca P=(G,\ca H)$ indeed admits a planar drawing extending predrawn~$\ca H$.

		In order to keep this proof self-contained, we avoid using full-featured logical transductions for the task of ``guessing the crossings'' of the desired planarisation, and instead use what is called a simple interpretation.
		To make this task even simpler, we have subdivided each edge of $\ca P$ which is not marked as `uncrossable' by $k$ new vertices, called {\em auxiliary vertices} of this subdivision $\ca P_0=(G_0,\ca{H}_0)$ of~$\ca P$.

		Now we can capture the planarisation of assumed $k$-crossing conforming drawing as giving the $k$ pairs of auxiliary vertices to be identified in the planarisation.
		Formally, let $\vect r'=(r_i':i\in[k])$ and $\vect r''=(r_i'':i\in[k])$ be two $k$-tuples of auxiliary vertices of $\ca P_0$ which are pairwise identified as $\vect r'=\vect r''$, producing the graph $\ca P_0[\vect r'=\vect r'']$.
		With $\bar G_0$ a framing of $\ca P_0$, we aim to get $\psi'_k$ such that $[\bar G_0,\vect r',\vect r'']\models\psi'_k$ if and only if $\ca P_0[\vect r'=\vect r'']$ has a planar drawing extending~$\ca H_0$.

		Recall the obstruction class $\cf K$ from Theorem~\ref{thm:PEG-relaxed-obstructions}.
		Let $\sigma_{\cf K}$ be a formula which is the conjunction of the formulas $\sigma$ from Lemma~\ref{lem:MSOforobstruction} over all~$\ca P_1\in\cf K$, let $\tau$ be the formula from Lemma~\ref{lem:MSOalternating}, and~$\varrho\equiv\sigma_{\cf K}\wedge\tau$.
		Then, by Theorem~\ref{thm:PEG-relaxed-obstructions}, for a framing $\bar G_1$ of any \ppdg graph $\ca P_1$ we have that $\bar G_1\models\varrho$ if and only if $\ca P_1$ extends to a planar drawing.
		It remains to modify the formula $\varrho$ into $\psi'_k$ which, informally, ``speaks about $\bar G_0[\vect r'=\vect r'']$ inside $\ca P_0$'',
		i.e., $\bar G_0[\vect r'=\vect r'']\models\varrho$ $\iff$ $[\bar G_0,\vect r',\vect r'']\models\psi'_k$.
		This is just a straightforward syntactical translation of $\varrho$ which re-defines the incidence predicate between a vertex and an edge and the equality relation in $\varrho$ to respect the identification~$\vect r'=\vect r''$.
		The re-defined incidence predicate for $\psi'_k$ can be written as
		$$ \prebox{inc}'(v,e) \>\equiv\> \prebox{inc}(v,e)\vee \bigvee\nolimits_{i\in[k]}\big(v=r'_i\wedge\prebox{inc}(r''_i,e)\big)\vee \bigvee\nolimits_{i\in[k]}\big(v=r''_i\wedge\prebox{inc}(r'_i,e)\big) $$
		and the new equality predicate $='$ as
		$$ (v='w) \>\equiv\> (v=w) \vee \bigvee\nolimits_{i\in[k]}\big(v=r'_i\wedge w=r''_i\big)\vee \bigvee\nolimits_{i\in[k]}\big(v=r''_i\wedge w=r'_i\big) .$$
		The proof is finished.
	\end{proof}

	We can now finish the main result:
	\thmmain*
	\begin{proof}		
		Given a \ppdg graph $(G_1,\ca H_1)$ and an integer $k>0$, the task is to decide whether $\crgpd(G_1,\ca H_1)\leq k$.
		Recall that we would prefer to have $\ca H_1$ planar, and so let $G$ be the graph obtained from $G_1$ by planarising its subgraph $\ca H_1$ into~$\ca H_1^\times=:\ca H$.
		We obviously have $\crgpd(G_1,\ca H_1)=\crgpd(G,\ca H)$.
		
		Then, using repeatedly Theorem~\ref{thm:maintechnical}, we either arrive at the conclusion that that $\crgpd(G,\ca H)>k$, or we reduce the input to an equivalent instance $(G',\ca H')$ with the same inquired solution value~$k$ and with $\operatorname{tw}(G')\leq w$.
		Note that each intermediate iteration of this process decreases $|V(G)|$, and so there are only linearly many steps here.
		Moreover, using Lemma~\ref{lem:tw}, we get that the tree-width of any framing $\bar G'$ of $(G',\ca H')$ is bounded in terms of~$w$, and so in terms of our parameter~$k$.
		We can hence efficiently (in \FPT\ time) decide whether $\crgpd(G,\ca H)=\crgpd(G',\ca H')\leq k$ using Courcelle's theorem applied with the formula $\psi_k$ from Lemma~\ref{lem:phaseII} to a framing $\bar G'$ of $(G',\ca H')$.
		
		{We remark that dependence on the instance size of this procedure is given by \(\mathcal{O}(|V(G)|^3)\).
		This is because all components of the construction in \Cref{thm:maintechnical} have complexity \(\mathcal{O}(|V(G)|^2)\); specifically the procedures required from Grohe~\cite{Grohe04} and Angelini et al.~\cite{AngeliniDFJVKPR15} can be executed in linear time, 
		while the checking the flippability case when a smaller graph needs to be constructed can straightforwardly be implemented in quadratic time.
		The number of necessary iterations is clearly in \(\mathcal{O}(|V(G)|)\) as each iteration is the last iteration or reduces the size of the graph.
		This dominates the linear dependence of the final application of Courcelle's theorem on \(|V(G)|\).
		}
	\end{proof}

	\section{Additions to Section~\ref{sec:per-edge} -- Tracking original edges in the planarisation}
	We give the details to show \Cref{thm:Simple}.
	Recall that our approach for \textsc{\PPDCRc{\(c\)}} does not immediately work in this setting as to ensure simplicity we need to be able to identify edges of \(\mathcal{H}^\times\) that correspond to the same edge in \(H\).

	Edges incident to vertices inserted for crossing points in the planarisation of \(\mathcal{H}\) can easily be identified according to the original edge of \(H\) they arise from, as has previously been done in \cite{GanianHKPV21} in the following way.
	This means every edge in \(e \in E(\mathcal{H}^\times)\) receives up to two labels, and we speak about the label of \(e\) with respect to one of its endpoints.
	Specifically, we assign label \(0\) to both edges of \(\mathcal{H}^\times\) with respect to a single crossing vertex \(c\) that correspond to one arbitrary edge of \(H\) involved in the crossing at \(c\).
	Then we assign label \(1\) to the remaining two edges of \(\mathcal{H}^\times\) that are incident to \(c\), which automatically correspond to the other edge of \(H\) involved in the crossing at \(c\).
	In this way a path in the planarisation of \(\mathcal{H}\) corresponds to part of a single edge in \(H\) if and only if internally it consists of a sequence of an edge labelled with \(l \in \{0,1\}\) with respect to a certain crossing vertex \(c\), followed by that vertex \(c\) followed by the unique other edge incident to \(c\) labelled with \(l\).
	For a path in \(\mathcal{H}^\times\), if this is the case we say the path is an \emph{\Hep}.
	Whether a path is an \Hep is something we can easily express in MSO\(_2\) with the addition of only a constant-size edge label set to a framing of our bounded treewidth graph.
	
	However, contracting a subgraph of \(G\) leads to a major obstacle for using this method, as the vertex created by contracting the subgraph potentially has many incident edges some of which are unlabelled, others are labelled by \(0\) and others are labelled by \(1\).
	Now it is no longer possible to uniquely and correctly associate edges with the same label to each other.
	This is a difficulty that is not easy to overcome and a direction we leave open in this work.

	In the following we denote \(t = |E(G) \setminus E(H)|\).
	
	The following lemma will allow us to bridge the information for identifying relevant edges in \(\mathcal{H}^\times\) that correspond to the same edge in \(H\) for which the labelling cannot be preserved during the iterative contraction of subgraphs in Phase~I.
	\begin{lemma}[adapted from {\cite[Section~3]{GanianHKPV21}}]
		\label{lem:holes}
		Consider a \ppdg graph \((G, \mathcal{H})\) and a closed curve \(D\) together with a subgraph \(I_D\) of \((G - E(H)) \cup \mathcal{H}^\times\) that is separated in \((G - E(H)) \cup \mathcal{H}^\times\) by removing all edges of \(\mathcal{H}^\times\), such that in any \conf drawing of \(G\) in which each edge in \(E(G) \setminus E(H)\) has at most \(c\) crossings (we treat \(\mathcal{H}\) and \(\mathcal{H}^\times\) as drawn atop each other canonically), no edge of \(((G - E(H)) \cup \mathcal{H}^\times)[V(D) \cup V(I_D)]\) is crossed.
		In \FPT-time parameterised by \(c\) and \(t\) we can compute an embedded \(\{0,1\}\)-edge-labelled subgraph \(\tilde{I}_D \supseteq D\) of \(I_D \cup D\) with the following properties.
		\begin{itemize}
			\item The number of edges in \(E(\tilde{I_D}) \setminus E(D)\) is bounded by a function in \(c\) and \(t\).
			\item In every (possibly non-simple) \conf drawing of \(G\) in which each edge in \(E(G) \setminus E(H)\) has at most \(c\) crossings, if the planarisation of an edge \(e \in E(H)\) contains at least two edges in \(E(\mathcal{H}^\times)\) each with an endpoint in \(\mathcal{H}^\times\) on \(D\) and \(e\) is crossed twice by a single edge in \(E(G) \setminus E(H)\) then these crossings occur on two parts of the planarisation of \(e\) in \(\mathcal{H}^\times\) that are connected by an \Hep in \(\tilde{I}_D\).
		\end{itemize}
	\end{lemma}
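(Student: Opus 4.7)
The plan is to construct $\tilde I_D$ as $D$ together with a small collection of \Hep{s} chosen from inside $I_D\cup D$, with labels inherited from $\mathcal H^\times$, so that this collection witnesses every H-edge connection that can ever be required by the lemma's conclusion across all \conf drawings. The argument splits into (i) characterising the witnesses that must be preserved, (ii) bounding how many such witnesses suffice in terms of $c$ and $t$, and (iii) bounding the size of one chosen witness per relevant pair.

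For (i), note that in any \conf drawing of $G$ with at most $c$ crossings per non-predrawn edge, a double-crossing of $e\in E(H)$ by some $f\in E(G)\setminus E(H)$ cannot occur on any edge of $((G-E(H))\cup\mathcal H^\times)[V(D)\cup V(I_D)]$, since those edges are uncrossed. Thus each of the two crossing points lies on a segment $e_i$ (resp.\ $e_j$) of $e$'s planarisation whose endpoint lies on $D$, and the portion of the planarisation of $e$ that stays in the uncrossed region gives an \Hep in $I_D\cup D$ joining $e_i$ and $e_j$. So to satisfy the lemma it suffices to keep one such \Hep in $\tilde I_D$ for every "relevant'' pair $(e_i,e_j)$.

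For (ii), in any fixed \conf drawing at most $tc/2$ pairs are active, since each responsible $f$ uses $2$ of its at most $c$ crossings per double-crossing and there are at most $t$ such $f$. To handle all drawings simultaneously, I would enumerate the combinatorial "approach types'' by which a single $f$ with crossing budget $c$ can enter the region bounded by $D$, determined by the (at most $c$) edges of $H$ it crosses on its way in and out, together with the faces of $\mathcal H^\times$ adjacent to $D$ it uses. Adapting the structural analysis of \cite[Section 3]{GanianHKPV21}, the number of distinct such types (and hence of pairs that are ever relevant, across all drawings) is bounded by some function $g(c,t)$.

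For (iii), for each of the $g(c,t)$ relevant pairs I would pick in $I_D\cup D$ one \Hep witnessing the connection; any long chosen path can be shortened by an irrelevant-vertex / Steiner-style argument inside the planar uncrossed graph $I_D\cup D$ (its two terminals both lying on the single closed curve $D$), giving paths of length bounded in $g(c,t)$. The union with $D$ is $\tilde I_D$, with $|E(\tilde I_D)\setminus E(D)|\le h(c,t)$, and the whole construction runs in \FPT-time parameterised by $c$ and $t$ by enumerating the bounded family of approach types. The main obstacle is step (ii): bounding the number of relevant pairs by a function of $c$ and $t$ alone despite the potentially unbounded number of $H$-edges meeting $D$. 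This is the crux of the lemma, and is inherited almost verbatim from the structural argument of \cite{GanianHKPV21}, which exploits the limited crossing budget of each non-predrawn edge together with the planarity of the uncrossed region around $D$.
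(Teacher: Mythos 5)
A point of comparison first: the paper does not actually prove Lemma~\ref{lem:holes}. It is imported as an adaptation of the structural analysis of \cite[Section~3]{GanianHKPV21} and then used as a black box in the proof of Theorem~\ref{thm:Simple}; no argument for the lemma appears in the paper itself. So there is no in-paper proof to measure your attempt against, and your proposal has to be judged as a standalone reconstruction.

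Judged that way, your outline has the right skeleton (keep $D$ plus one witness \Hep per ``relevant'' pair of $\mathcal H^\times$-segments, bound the number of relevant pairs, shorten each witness path), but it is not a proof, because the step you yourself flag as the crux --- showing that, uniformly over all \conf drawings, the pairs of segments of a single $H$-edge that can ever carry a double crossing fall into at most $g(c,t)$ classes even though $D$ may meet unboundedly many edges and faces of $\mathcal H^\times$ --- is exactly the content of the lemma, and you discharge it by appealing ``almost verbatim'' to the same citation the paper uses. Enumerating ``approach types'' of a budget-$c$ edge entering the region bounded by $D$ is plausibly the right idea, but you give no argument for why the relevant entry data (which faces incident to $D$, which crossed $H$-edges) can be collapsed to a set of size bounded in $c$ and $t$, nor how a type determines the pair of segments to be joined; that is precisely where the planarity of the uncrossed region and the Jordan-curve accounting of \cite{GanianHKPV21} do the work. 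There is also a smaller unproven claim in your step (i): you assert that whenever $e$ is doubly crossed by some $f\in E(G)\setminus E(H)$ the two crossed segments are joined by an \Hep inside $I_D\cup D$. This is not automatic for an edge $e$ that merely has two segments touching $D$ but extends far outside the region, and it needs the hypothesis that no edge of $((G-E(H))\cup\mathcal H^\times)[V(D)\cup V(I_D)]$ is crossed together with a region argument about where $f$ can leave and re-enter; as written you state it as if it were immediate. In short: same reliance on \cite{GanianHKPV21} as the paper, reasonable scaffolding around it, but the decisive combinatorial bound (and the step-(i) claim) remain unestablished in your write-up.
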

	
	We can use this subgraph for appropriately chosen cycles based on Phase~I of our algorithm, and then slightly extend our MSO\(_2\)-expression from \Cref{lem:phaseII} to show:
	\Simple*
	\begin{proof}
		Obviously we can check for the given \ppdg \((G, \mathcal{H})\) whether \(\mathcal{H}\) is simple and decide that we have a no-instance if this is not the case.
		
		Now we label all edges of \(\mathcal{H}\) with \(\{0,1\}\) to associate the pairs of edges incident to a vertex introduced for a crossing in \(\mathcal{H}\) that belong to the same edge of \(H\).
		Next we perform the Phase I computations explained in \Cref{sec:phase1} for the given \ppdg \((G, \mathcal{H})\).
		After this we can assume to arrive at a \ppdg graph \((\tilde{G}, \tilde{\mathcal{H}})\) in which some edges are marked as `uncrossable' together with a set of possibly non-disjoint cycles \(\mathfrak{C}\) consisting only of `uncrossable edges' and special vertices in \(\tilde{G}\).
		Each of the special vertices is associated to a contracted planar graph of \(G - E(H) \cup \mathcal{H}^\times\).
		We also can assume to have a tree decomposition of \(\tilde{G}\) of width bounded in \(c \cdot t\).
		
		The edges of \(H\) which we cannot trace using the simple \(\{0,1\}\)-labelling after the contraction performed at the end of the iteration, are among those whose planarisations contain at least two edges in \(E(\mathcal{H}^\times)\) each with an endpoint in \(\mathcal{H}^\times\) on some \(C \in \mathfrak{C}\).
		We call these edges \emph{problematic}.
		However, we cannot simply apply \Cref{lem:holes} to the cycles in \(\mathfrak{C}\), as it in general is not a cycle in \(\tilde{\mathcal{H}}^\times\) and we hence do not know enough about its behaviour in all targeted drawings to appropriately define curves to separate \(I\).
		
		Instead let us focus on a single cycle \(C\) with associated subgraph \(I\) that is contracted to \(v_I\).
		Note that the simple \(\{0,1\}\)-labelling also works to trace an edge \(e \in E(\mathcal{H})\) if no vertex of the planarisation of \(e\) in \(E(\mathcal{H}^\times)\) is in \(I\).
		As \(I\) is connected and any vertex \(v \in V(I)\) of the planarisation of a problematic edge is a vertex introduced into the planarisation of \(\mathcal{H}\) because of a crossing and hence \(v \notin V(G)\), we know that all vertices of the planarisation of problematic edges in \(V(I)\) are connected to each other by edges of \(I \cap \mathcal{H}^\times\), and in fact only such edges that are not incident to vertices of \(G\).
		We can trace the outer boundary of the according connected component of \(I \cap \mathcal{H}^\times\) with a simple closed curve at \(\varepsilon\)-distance, and apply \Cref{lem:holes} to this curve.
		
		We replace \(v_I\) (both in \(\tilde{G}\) and \(\tilde{\mathcal{H}}\) by the resulting embedded graph \(I_C\), reconnecting original neighbours of vertices on \(C\) which is by definition a subgraph of \(I_C\).
		Then we contract all edges of \(C\) whose endpoints do not have any neighbours in \(I_C\).
		
		Doing this for every \(C \in \mathfrak{C}\) that intersects some problematic edge, means we replace each \(v_C\) by a graph of size bounded in \(c\) and \(t\).
		(This means replacing \(v_C\) by \(V(I_C)\) in every bag of the given tree decomposition only increases its width by a term bounded in \(c\) and \(t\).)
		And, most importantly, this replacement introduce \Hep~s to correctly identify edges in \(\mathfrak{H}^\times\) belonging to the same problematic edges of \(H\).
		It is also easy to see that \conf drawings of \((\tilde{G},\tilde{H})\) before and after the described replacement of the contraction vertices are equivalent.
		
		By \Cref{lem:tw}, and the fact that \(c \cdot t \geq \crgpd(G, \mathcal{H})\), the treewidth of \(\tilde{G} - E(\tilde{\mathcal{H}})\) after this modification is bounded in \(c\) + \(t\).
		
		Finally we can use Courcelle theorem on the conjunction of our MSO\(_2\)-encoding from \Cref{lem:phaseII} and a MSO\(_2\)-formula that excludes the existence of a \Hep between the vertices identified with any auxiliary vertex of the same edge in \(E(G) \setminus E(H)\), where we simply use a unique label for each edge in \(E(G) \setminus E(H)\) (this only introduces \(t\) new labels) to identify its auxiliary vertices.
		
		Correctness follows from \Cref{thm:maintechnical}, \Cref{lem:phaseII} and \Cref{lem:holes}.
	\end{proof}
	
	We remark that, while so far there are no results in literature on this, it is also natural to ask whether we can also determine the \ppdcr if we restrict the number of crossings with edges from \(E(G) \setminus E(H)\) for each edge in \(E(H)\) to be at most \(c\).
	This introduces the same difficulties as requiring simplicity of the completed drawing of \(G\).

	\section{Additions to Section~\ref{sec:conclu} -- Crossing-criticality}

	\baddualdiam*
	\begin{proof}
		We will deal in this proof using the {\em weighted crossing number}; that is, we consider a weight function $w:E(G)\to\mathbb N$ on the edges of our graph, such that a crossing between two edges $f$ and $f'$ contributes $w(f)\cdot w(f')$ towards the crossing number~$\crg(G)$.
		(We will later argue how to ``trade'' the weights for parallel edges in a rigorous way.)

		Consider the (ordinary) graph $G_1$ on $8$ vertices depicted on the left of Figure~\ref{fig:baddualdiam}.
		The weights of its edges are as follows; $w(u_1v_1)=w(u_3v_3)=c\geq3$, $w(u_4u_1)=w(v_4v_1)=w(u_iu_{i+1})=w(v_iv_{i+1})=2c+3$ for $i=1,2,3$, and the remaining weights are~$1$.
		Let $C_1\subseteq G_1$ be the cycle $(u_1,u_2,u_3,u_4)$ and $C_2\subseteq G_1$ the cycle $(v_1,v_2,v_3,v_4)$.
		We focus on drawings of the subgraph $G_0=C_1\cup C_2+u_2v_2+u_4v_4\subseteq G_1$.
		If a drawing of $G_0$ has at most $2c+2$ crossings, then it is planar since $c^2>2c+2$.
		Hence there are only two such drawings -- the {\em straight} one in which the cycles $C_1$ and $C_2$ agree in orientation, and the {\em flipped} one in which $C_2$ is drawn opposite to~$C_1$.

		Denote by $\ca D_1$ the planar drawing of $G_1$ (with straight subdrawing of $G_0$), and by $\ca D_2$ the drawing of $G_1$ (with flipped subdrawing of $G_0$) with $\crd(\ca D_2)=2c+2$ crossings shown in Figure~\ref{fig:baddualdiam} (left-bottom) alongside the drawing~$\ca D_1$.
		We claim:
		\begin{description}
		\item[(*)] The only drawings of $G_1$ with at most $2c+2$ crossings, up to equivalence, are those with the straight subdrawing of $G_0$ (such as $\ca D_1$), and the aforementioned drawing $\ca D_2$ and a symmetric picture to $\ca D_2$ (Figure~\ref{fig:baddualdiam} with $u_1v_1$ drawn above and $u_3v_3$ below) -- both with exactly $2c+2$ crossings.
		\end{description}
		Knowing that the subdrawing of $G_0$ must be planar and flipped, proving (*) is a matter of a straightforward analysis of the edges of $E(G_1)\setminus E(G_0)$.

	\begin{figure}[t]
	\begin{center}
	\begin{tikzpicture}[scale=0.7]
		\node at (-3,-2) {$\ca D_1$};
		\node at (-3,-8) {$\ca D_2$};
		\tikzstyle{every path}=[color=black]
		\draw[very thick](0,0) circle(1);
		\draw[very thick](0,0) circle(2);
		\draw[thick] (0,1) -- (0,2);
		\node at (0,1.5) [label=right:$c$] {};
		\node at (1.5,1.5) [label=right:$2c+3$] {};
		\draw[thick] (0,-1) -- (0,-2);
		\tikzstyle{every node}=[draw, shape=circle, minimum size=2pt,inner sep=1.4pt, fill=black]
		\node at (-2,0) (u1) [label=left:$u_1$] {}; \node at (-1,0) (v1) [label=right:$v_1$] {};
		\node at (0,-2) (u2) [label=below:$u_2$] {}; \node at (0,-1) (v2) [label=above:$v_2$] {};
		\node at (2,0) (u3) [label=right:$u_3$] {}; \node at (1,0) (v3) [label=left:$v_3$] {};
		\node at (0,2) (u4) [label=above:$u_4$] {}; \node at (0,1) (v4) [label=below:$v_4$] {};
		\tikzstyle{every path}=[color=green!50!black]
		\draw (u1) -- (v1);
		\draw (u3) -- (v3);
		\draw (u1) to[bend left=36] (v4);
		\draw (v4) to[bend left=36] (u3);
		\draw (v3) to[bend left=36] (u2);
		\draw (u2) to[bend left=36] (v1);
	\begin{scope}[shift={(0,-6)}]
		\tikzstyle{every path}=[color=black]
		\draw[very thick](0,0) circle(1);
		\draw[very thick](0,0) circle(2);
		\draw[thick] (0,1) -- (0,2);
		\draw[thick] (0,-1) -- (0,-2);
		\tikzstyle{every node}=[draw, shape=circle, minimum size=2pt,inner sep=1.4pt, fill=black]
		\node at (-2,0) (u1) [label=left:$u_1$] {}; \node at (-1,0) (v3) [label=right:$v_3$] {};
		\node at (0,-2) (u2) [label=below:$u_2$] {}; \node at (0,-1) (v2) [label=above:$v_2$] {};
		\node at (2,0) (u3) [label=right:$u_3$] {}; \node at (1,0) (v1) [label=left:$v_1$] {};
		\node at (0,2) (u4) [label=above:$u_4$] {}; \node at (0,1) (v4) [label=below:$v_4$] {};
		\tikzstyle{every path}=[color=green!50!black]
		\draw (u1) .. controls (-0.5,-2.7) and (1.3,-1.2) .. (v1);
		\draw (u3) .. controls (0.5,2.7) and (-1.3,1.2) .. (v3);
		\draw (u1) to[bend left=36] (v4);
		\draw (v4) to[bend left=27] (u3);
		\draw (v3) to[bend right=36] (u2);
		\draw (u2) to[bend right=45] (v1);
	\end{scope}
	\end{tikzpicture}
	\qquad
	\begin{tikzpicture}[scale=0.15]
		\tikzstyle{every path}=[color=white!30!black]
		\path[use as bounding box] (-20,-26) rectangle (33,26);
		\foreach \x in {10,11,12,13,16,17,18,19,20}
			\draw[very thick](0,0) circle(\x);
		\draw[very thick, dotted](0,0) circle(14.5);
		\draw[thick, dotted] (0,13) -- (0,16); 	\draw[thick, dotted] (0,-13) -- (0,-16);
		\draw[thick] (0,10) -- (0,13);  \draw[thick] (0,16) -- (0,20);
		\draw[thick] (0,-10) -- (0,-13);  \draw[thick] (0,-16) -- (0,-20);
		\tikzstyle{every node}=[draw, shape=circle, minimum size=2pt,inner sep=1.4pt, fill=black]
		\tikzstyle{every path}=[color=black]
		\node at (2,-5) (t12) {}; \node at (2,5) (t13) {}; \node at (-5,0) (t11) {};
		\node at (25,-5) (t22) {}; \node at (25,5) (t23) {}; \node at (32,0) (t21) {};
		\node at (-10,0) (v1) {}; \node at (0,-10) (v2) {}; \node at (0,10) (v4) {};
		\node at (20,0) (m1) {}; \node at (0,-20) (m2) {}; \node at (0,20) (m4) {};
		\draw[very thick] (t11) -- (v1) (t12) -- (v2) (t13) -- (v4);
		\draw[very thick] (t21) to[bend right=25] (m1) (t22) to[out=250,in=300] (m2) (t23) to[out=110,in=60] (m4);
		\tikzstyle{every path}=[color=blue]
		\draw[thick] (t11) -- (t12) -- (t13) -- (t11);
		\draw[thick] (t21) -- (t22) -- (t23) -- (t21);
		\node[draw=none,fill=none] at (0,0) {$T_1$};
		\node[draw=none,fill=none] at (27,0) {$T_2$};
	\end{tikzpicture}
	\end{center}
		\caption{The ``critical'' construction in the proof of Proposition~\ref{pro:baddualdiam}. Left: the graph $G_1$ with its two example drawings $\ca D_1$ and $\ca D_2$.
			Right: the \ppdg graph $\ca P'$ obtained by stacking $2m$ copies of $G_1$ with identification (only the ``thick'' edges are shown) and connecting predrawn triangles $T_1$ and $T_2$.}
		\label{fig:baddualdiam}
	\end{figure}
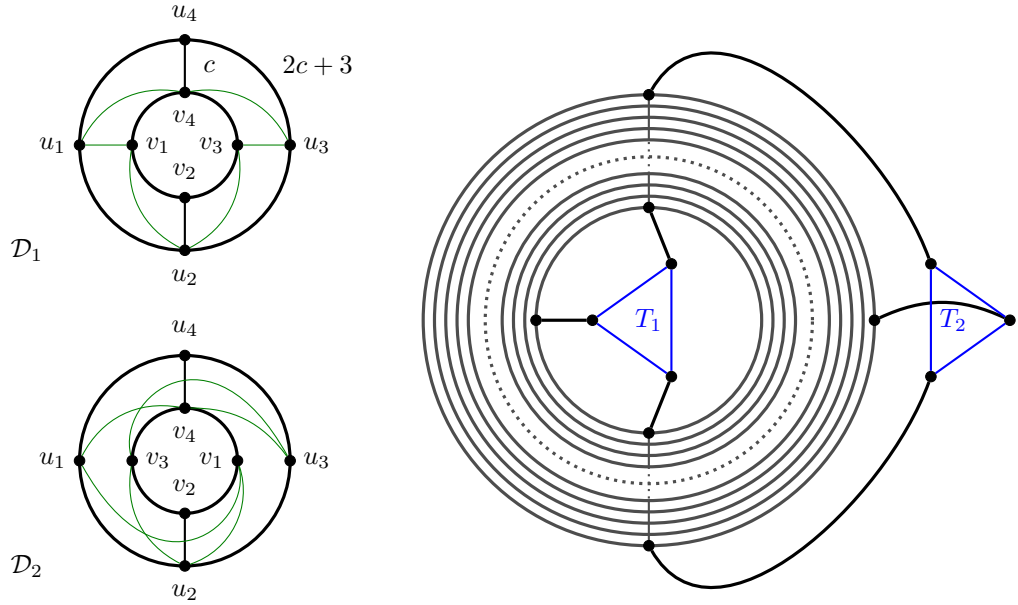

		Furthermore, from the pictures we easily derive the following ``criticality'' property:
		\begin{description}
		\item[(**)] If a graph $G_2$ is obtained from $G_1$ by deleting any one of the six edges of $E(G_1)\setminus E(G_0)$, then there is a drawing of $G_2$ with flipped subdrawing of $G_0$ and at most $2c+1$ crossings.
		\end{description}

		Now, for $i=1,\ldots,m'$ where $w'=2m$, we make a copy $G_1^i$ of the graph $G_1$, and construct a graph $G''$ from the union $G_1^1\cup\dots\cup G_1^{m'}$ by identifying the cycle $C_1^j=(u_1^j,u_2^j,u_3^j,u_4^j)$ with the cycle $C_2^{j+1}=(v_1^{j+1},v_2^{j+1},v_3^{j+1},v_4^{j+1})$ in order, for~$j=1,\ldots,m'-1$.
		We add two graph triangles $T_1$ and $T_2$, such that the vertices of $T_1$ are adjacent in order to $v_1^1,v_2^1,v_4^1$ of $G''$ by edges of weight $2c+3$, and similarly the vertices of $T_2$ are adjacent in order to $u_3^{m'},u_2^{m'},u_4^{m'}$ of $G''$ again by edges of weight $2c+3$.
		Let $G'$ be the resulting graph (Figure~\ref{fig:baddualdiam} right).

		We define $H=T_1\cup T_2$ and the drawing $\ca H$ of $H$ in which $T_1$ and $T_2$ are in opposite orientation (against the drawing in Figure~\ref{fig:baddualdiam}).
		Consider the \ppdg graph $\ca P'=(G',\ca H)$ and $k=2c+2$.
		We claim that $\crgpd(\ca P')\geq2c+2=k$.
		Indeed, the edges incident to $T_1$ or $T_2$ cannot be crossed in a drawing with $\leq2c+2$ crossings (unlike in Figure~\ref{fig:baddualdiam}), and hence by (*) exactly one of the graphs $G_1^j\subseteq G'$, $j\in[m']$, must be drawn with flipped subdrawing of $G_0^j$
		(and the remaining copies $G_1^{i}$ planarly). The claim follows.
		On the other hand, after deleting any edge of $E(G_1^i)\setminus E(G_0^i)$, $i\in[m']$, the crossing number drops below~$k$ by (**).

		\medskip
		In the last step we get rid of weighted edges in $\ca P'$ as follows. If $f\in E(G')$ is of weight $w>1$, we replace $f$ with a bunch of $w$ parallel edges.
		Let $\ca P^o=(G^o,\ca H)$ be the resulting unweighted \ppdg graph. We claim that $\crgpd(\ca P^o)\geq k$ as well.
		This is a standard argument in this area; considering an optimal drawing $\ca D^o$ of $\ca P^o$, we iteratively redraw edges of each parallel bunch tightly along the edge of the least number of crossings in $\ca D^o$ in this bunch.
		Consequently, $\crgpd(\ca P^o)=\crgpd(\ca P')$.

		Finally, $\ca P^o$ may not be $k$-crossing critical, but there exists a \ppdg subgraph $\ca P=(G,\ca H)\subseteq\ca P^o$ which is inclusion-minimal with the property $\crgpd(\ca P)\geq k$.
		Moreover, by~(**), $G$ contains all cycles $(u_1^j,v_1^j,u_2^j,v_3^j,u_3^j,v_4^j)$ of $G_1^j$ for~$j\in[m']$, which are drawn nested in an optimal drawing of~$\ca P$.
		Thus, $\ca P=(G,\ca H)$ satisfies all claimed properties.
	\end{proof}

\fi

\end{document}